\documentclass[11pt,reqno]{amsart}
\usepackage[T1]{fontenc}
\usepackage{graphicx}
\usepackage{amssymb,amsmath,mathrsfs,bm,braket,marginnote}
\usepackage{enumerate}
\usepackage[colorlinks=true, pdfstartview=FitV, linkcolor=blue, citecolor=blue, urlcolor=blue]{hyperref}
\usepackage{tikz}

\numberwithin{equation}{section}
\usepackage{amsthm}

\newtheorem{theorem}{Theorem}[section]
\newtheorem{lemma}[theorem]{Lemma}
\newtheorem{definition}[theorem]{Definition}
\newtheorem{remark}[theorem]{Remark}
\newtheorem{coro}[theorem]{Corollary}
\newtheorem{example}[theorem]{Example}

\newtheorem{proposition}[theorem]{Proposition}

\reversemarginpar

\def\J{\mathrm{J}}
\def\d{\mathrm{d}}
\def\sgn{\mathrm{sgn}}
\def\Tr{\mathrm{Tr}}
\def\DO{\mathrm{DO}_{N,M}}
\def\NDDO{\mathrm{NDDO}_{N,M}}

\def\D{\mathcal{D}}

\def\mat{\mathrm{Mat}_\infty(\mathbb{R})}
\def\Ker{\mathrm{Ker}\,}

\author[B. Wang]{Bao Wang}
\address{School of Mathematics and Statistics, Ningbo University, Ningbo 315211, PR China.}
\email{wangbao@nbu.edu.cn}

\begin{document}

\title[Integrable pentagram-type maps on polyhedra via  partial difference operators]{Integrable pentagram-type maps on polyhedra via  partial difference operators}

\subjclass[2010]{37K20; 37K25; 51A05}
\keywords{pentagram maps;
	partial difference operators;
	integrable systems;	
	refactorization.}

\begin{abstract}
This paper introduces a family of natural generalizations of the pentagram map from polygons to (twisted) polyhedra and proves their integrability  through the  partial difference operators. 
A canonical special case, 
which corresponds to the discrete Laplace transformation of discrete conjugate nets, 
is investigated in detail.
We first establish a canonical bijection between the projective equivalence classes of these polyhedra in $\mathbb{RP}^3$ and the  spectral data of doubly periodic partial difference operators modulo the gauge actions.
Furthermore, we prove the complete integrability of these pentagram-type maps by explicitly identifying them with the refactorization maps on the Poisson-Lie group of pseudo partial difference operators. 
This algebraic identification naturally yields an explicit Lax representation and an $r$-matrix induced Poisson bracket for the geometric dynamics.
\end{abstract}
\maketitle
\section{Introduction}
The pentagram map, originally introduced by Schwartz \cite{Schwartz1992The} in 1992,
has now become a popular discrete system.
It is defined for polygons in the projective plane and has an elegant geometric meaning:
it maps a vertex $v_k$ to the intersection of two segments $\overline{v_{k-1}v_{k+1}}$ and $\overline{v_kv_{k+2}}$.
The pentagram map has attracted  considerable attention, mainly due to its deep connections to integrable systems \cite{2012Integrability,2010The,2013Liouville,Soloviev2013Integrability}, cluster algebras \cite{2014Integrable,GLICK20111019,glick2016}, dimer models \cite{dimer2022}, Poisson-Lie groups \cite{refactorization2020}, and Poncelet polygons \cite{izosimov2022pentagram}.

So far,
there have been a variety of generalizations of the pentagram map,
such as the high dimensional pentagram maps on so-called corrugated polygons \cite{2014Integrable},
the short-diagonal pentagram maps \cite{2012Integrability,G2014On},
the dented pentagram maps \cite{dented},
the long-diagonal pentagram maps \cite{long2022},
generalized pentagram maps from $Y$-meshes \cite{glick2016},
the pentagram maps on Grassmannians \cite{grassmann,Non},
the pentagram-type maps related to $Q$-nets \cite{q_net2026},
the coupled pentagram maps \cite{coupled_sub},
the noncommutative leapfrog maps \cite{wang2025noncommutative},
and the non-integrable pentagram maps \cite{nonintegrable2015}.

A natural question is how to generalize such integrable dynamics from polygons to polyhedra.
Indeed, studying integrability of the  pentagram-type maps on polyhedra and extending the refactorization framework from ordinary difference operators to partial difference operators  were explicitly posed as  open problems by Izosimov \cite{refactorization2020}.
In this paper, we answer this question by studying  pentagram-type maps on a special class of polyhedra in $\mathbb{RP}^d$,
termed  twisted $\J$-corrugated $(N, M)$-hedra, which combine local $\J$-corrugation constraints with  twisted conditions (see Definitions \ref{def_j} and \ref{def_twi} for precise formulations).
To provide a concrete geometric picture, it is  instructive to look at a canonical special case of the generalized maps: the discrete Laplace transformation of discrete conjugate nets,
which was originally introduced by Doliwa  \cite{doliwa1997geometric} in the context of discrete differential geometry.
Let $\{v_{i,j}\}_{i,j\in\mathbb{Z}}$ be a discrete conjugate net in $\mathbb{RP}^3$, i.e., all elementary quadrilaterals $\{v_{i,j}, v_{i+1,j}, v_{i,j+1}, v_{i+1,j+1}\}$ are coplanar  (Such nets are essentially $\J$-corrugated polyhedra in $\mathbb{RP}^3$ with $\J = \{(0,0), (0,1), (1,0), (1,1)\}$).
The discrete Laplace transform is a map $T:v\rightarrow \tilde{v}$,
where
\begin{align}\label{def_lap}
	\tilde{v}_{i,j}=(v_{i,j},v_{i,j+1})\cap(v_{i+1,j},v_{i+1,j+1}),
\end{align}
where $(v,u)$ represents the line passing through $v$ and $u$
(see Figure \ref{fig_laplace}).
\begin{figure}[htbp]\label{fig_laplace}
	\centering
	\includegraphics[trim=8cm 5.5cm 6cm 9.4cm, clip, width=0.6\textwidth]{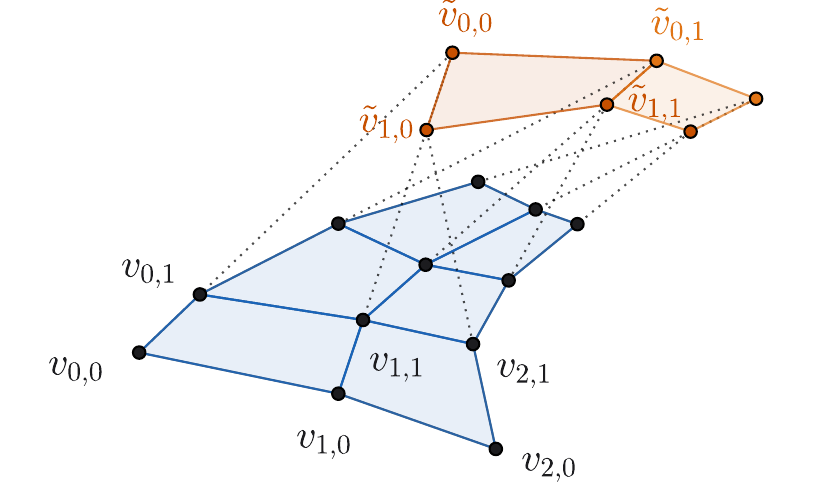}
	\caption{Discrete Laplace transform on a discrete conjugate net}
\end{figure}

Our first main result is establishing  a canonical one-to-one correspondence between the projective equivalence classes of generic twisted $\J$-corrugated polyhedra and the spectral data of doubly periodic partial difference operators. 
\begin{theorem}
	Let $\J = \{(0,0), (0,1), (1,0), (1,1)\}$. 
	There is a bijection between the projective equivalence classes of generic twisted $\J$-corrugated $(N, M)$-hedra in $\mathbb{RP}^3$ and the gauge-equivalence classes of spectral data $(\mathcal{D},  \sum_{k=1}^4 Q_k)$, where $\mathcal{D}$ is a non-degenerate $(N,M)$-periodic partial difference operator and $\sum_{k=1}^4 Q_k$ is a real divisor consisting of four distinct marked points on the spectral curve $\Gamma_{\mathcal{D}}$, invariant under the complex conjugation $\tau$ (i.e., $\tau(\sum_{k=1}^4 Q_k) = \sum_{k=1}^4 Q_k$).
\end{theorem}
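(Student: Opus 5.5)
We prove the statement as a composition of two bijections: polyhedra $\leftrightarrow$ operators with a kernel of dimension four, and operators with such a kernel $\leftrightarrow$ spectral data.

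\emph{Step 1: lifts and the operator.} Take a generic twisted $\J$-corrugated $(N,M)$-hedron $\{v_{i,j}\}$ in $\mathbb{RP}^3$ with monodromy $T_1,T_2\in PGL_4$, so that $v_{i+N,j}=T_1v_{i,j}$ and $v_{i,j+M}=T_2v_{i,j}$. Choose lifts $V_{i,j}\in\mathbb{R}^4$. For $\J=\{(0,0),(0,1),(1,0),(1,1)\}$ the corrugation condition says that $V_{i,j},V_{i+1,j},V_{i,j+1},V_{i+1,j+1}$ are linearly dependent. Genericity means any three of them are independent, so the dependence is unique up to scale:
\[
a_{i,j}V_{i,j}+b_{i,j}V_{i+1,j}+c_{i,j}V_{i,j+1}+d_{i,j}V_{i+1,j+1}=0 .
\]
All four coefficients are nonzero. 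Lifting $T_1,T_2$ to $GL_4$, we can choose the lifts quasi-periodically, i.e.\ $V_{i+N,j}=\tilde T_1V_{i,j}$ and $V_{i,j+M}=\tilde T_2V_{i,j}$. Then the coefficients can be taken $(N,M)$-periodic, and we obtain an $(N,M)$-periodic operator $\mathcal D=a+bS_1+cS_2+dS_1S_2$. It is non-degenerate in the sense of the paper's definition, because all four coefficients are nonzero.

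Rescaling the lifts, $V\mapsto \lambda V$, and rescaling the equation act on the coefficients as left and right multiplication by diagonal functions. This is exactly the gauge action. A projective transformation changes only $\tilde T_1,\tilde T_2$ by conjugation, so it leaves $\mathcal D$ unchanged. Hence we get a well-defined map from projective classes to gauge classes of operators $\mathcal D$. In addition, $\mathcal D$ carries the distinguished $4$-dimensional space $\mathrm{span}$ of coordinate functions of $V$ inside the space of quasi-periodic solutions of $\mathcal D\psi=0$. The polyhedron can be recovered from this space up to projective equivalence.

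\emph{Step 2: quasi-periodic kernel and the spectral curve.} Consider the commuting monodromy operators on $\Ker\mathcal D$. The spectral curve is
\[
\Gamma_{\mathcal D}=\{(z,w):\exists\,\psi\in\Ker\mathcal D \text{ with } \psi_{i+N,j}=z\psi_{i,j},\ \psi_{i,j+M}=w\psi_{i,j}\}.
\]
It is cut out by $\det(\ldots)=0$, a Laurent polynomial obtained from the $N M$-dimensional finite-dimensional reduction of $\mathcal D$ with Floquet multipliers. The gauge action conjugates this reduction, so $\Gamma_{\mathcal D}$ depends only on the gauge class. Genericity makes $\tilde T_1,\tilde T_2$ commuting and simultaneously diagonalizable over $\mathbb C$, with joint eigenvalues $(z_k,w_k)$, $k=1,\dots,4$. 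These eigenvalues are exactly four distinct points $Q_k\in\Gamma_{\mathcal D}$. Since $\tilde T_1,\tilde T_2$ are real, the divisor $\sum Q_k$ is $\tau$-invariant. Rescaling $\tilde T_1$ and $\tilde T_2$ by scalars must be absorbed by the normalization used in the paper's definition of $\Gamma_{\mathcal D}$; I would fix the lifts by $\det\tilde T_i=1$, or use the paper's corresponding convention.

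\emph{Step 3: inverse map.} Given $(\mathcal D,\sum Q_k)$, each point $Q_k$ yields a one-dimensional Bloch solution $\psi^{(k)}$. This uses non-degeneracy together with $Q_k$ being a smooth point, or generic. If $Q_k$ is non-real, we pair it with $\tau Q_k$ and take real and imaginary parts. Put $V=(\psi^{(1)},\dots,\psi^{(4)})$. This vector satisfies $\mathcal D V=0$, so each elementary quadrilateral is coplanar, and it is quasi-periodic with diagonal monodromy. The choices of normalization of the $\psi^{(k)}$ change $V$ by a diagonal linear map, hence projectively trivially, and gauge changes rescale $V$ pointwise. So the map is well defined, and it is inverse to Steps 1--2.

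\emph{Main obstacle.} The hardest part is genericity: showing that the resulting $V$ is generic, meaning the triple independence condition holds and the points do not degenerate into a plane. For this I would use the distinctness of the $Q_k$ together with a Vandermonde-type argument: Bloch solutions with distinct multipliers are linearly independent on any lattice translate. Then I would match the paper's definition of \emph{generic}, which restricts both sides compatibly, to this condition.
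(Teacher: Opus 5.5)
Your proposal follows the paper's proof essentially step for step. You lift the vertices, read off $\D$ from the unique-up-to-scale relation on each quadrilateral, and split the invariant $4$-dimensional subspace of $\Ker\D$ spanned by the coordinates of $V$ into joint eigenlines of the monodromies to get $Q_1,\dots,Q_4$. You then invert by taking Bloch solutions at the $Q_k$, with real and imaginary parts for conjugate pairs (the paper's matrix $S$), and identify pointwise rescalings with the gauge action.

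There are two corrections.

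\textbf{Scaling of the monodromy lifts.} The paper does not normalize the lifted monodromies. Rescaling the lift by a quasi-periodic $\lambda$ is exactly the quasi-periodic part of $H\tilde{\times}H$, and it moves the joint eigenvalues by \eqref{eq_act_w} (Lemma \ref{lemma_wz}).
\begin{itemize}
\item Your claim that $\Gamma_\D$ depends only on the gauge class holds only for periodic gauge. Even then, the left-right action multiplies the Bloch matrix by two different diagonal matrices rather than conjugating it.
\item In general the curve and divisor are defined only modulo this rescaling, which is how the statement is meant.
\item Your alternative normalization $\det\tilde T_i=1$ is not always attainable over $\mathbb{R}$, since $\det(c\tilde T)=c^4\det\tilde T$ cannot change sign. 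The paper's quasi-periodic gauge convention is the right one.
\end{itemize}

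\textbf{Your ``main obstacle''.} The Vandermonde argument settles only half of it. Independence of Bloch solutions with distinct multipliers shows that the $V_{i,j}$ span $\mathbb{R}^4$, so the hedron does not collapse into a plane. It does not show that any three vertices of each quadrilateral are independent, because that is a pointwise rank condition.

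For instance, assuming $\psi_{i,j}(Q_k)\neq 0$, the vectors $V_{i,j},V_{i+1,j},V_{i,j+1}$ are dependent exactly when the four points $\bigl(\psi_{i+1,j}(Q_k)/\psi_{i,j}(Q_k),\,\psi_{i,j+1}(Q_k)/\psi_{i,j}(Q_k)\bigr)\in\mathbb{C}^2$ are collinear. Distinctness of the $Q_k$ does not exclude this.

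This condition must therefore be imposed as an additional open genericity condition on the spectral data. It is also needed to guarantee that the operator recovered from the reconstructed $V$ is $\D$ again. The paper's proof does not address it either.
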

Second, we endow the infinite-dimensional group of pseudo partial difference operators with a  Poisson-Lie group structure. 
\begin{theorem}
	The group consisting of $(N,M)$-periodic invertible pseudo partial difference operators is a Poisson-Lie group. 
	The compatible Poisson bracket is given by the Sklyanin bracket:
	$$ \{f_1, f_2\} = \frac{1}{2} \left( \langle r(\nabla^r f_1), \nabla^r f_2 \rangle - \langle r(\nabla^l f_1), \nabla^l f_2 \rangle \right), $$
	where $r = p_{>0} - p_{<0}$ is the classical $r$-matrix associated with the lexicographical decomposition of the corresponding Lie algebra. 
\end{theorem}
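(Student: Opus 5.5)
The strategy is to realize the group $G$ of $(N,M)$-periodic invertible pseudo partial difference operators as a (formal) Lie group whose Lie algebra $\mathfrak g$ carries a nondegenerate invariant pairing and a splitting into subalgebras. Once that is in place, the Sklyanin bracket built from a skew $r$-matrix satisfying the modified classical Yang--Baxter equation is automatically a Poisson--Lie structure. The steps are as follows.

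\textbf{Step 1 (the Lie algebra and its pairing).} I would write elements of $\mathfrak g$ as formal sums $X=\sum_{(a,b)} x_{a,b}\,T_1^aT_2^b$. Here $T_1,T_2$ are the two shift operators, and the coefficients $x_{a,b}$ are $(N,M)$-periodic functions on $\mathbb Z^2$. The sums are infinite in one direction with respect to the lexicographic order on $\mathbb Z^2$: finitely many terms of lexicographic degree above any given bound, in the convention used to define pseudo partial difference operators. This is the lexicographic analogue of Laurent series in one shift. I would check that composition is well defined on such sums, so that $\mathfrak g$ is an associative algebra and $G$ is its group of units. The pairing is $\langle X,Y\rangle=\Tr(XY)$, where $\Tr$ sums the $(0,0)$-coefficient over one period $\{0,\dots,N-1\}\times\{0,\dots,M-1\}$. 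Periodicity and the shift rule $T_1^aT_2^b f = f(\cdot+(a,b))\,T_1^aT_2^b$ give $\Tr(XY)=\Tr(YX)$. Hence the pairing is $\mathrm{Ad}$-invariant, and it is nondegenerate because it pairs the coefficient of $T^{(a,b)}$ with that of $T^{-(a,b)}$.

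\textbf{Step 2 (the splitting and the modified Yang--Baxter equation).} Let $\mathfrak g_{>0}$, $\mathfrak g_{<0}$ and $\mathfrak g_0$ be spanned by the monomials of lexicographically positive, negative and zero degree. The lexicographic order is a group order on $\mathbb Z^2$, so $\mathfrak g_{>0}$ and $\mathfrak g_{<0}$ are subalgebras, $\mathfrak g_0$ (the multiplication operators) is abelian, and $\mathfrak g_0$ normalizes both. With respect to $\langle\,,\rangle$, the spaces $\mathfrak g_{>0}$ and $\mathfrak g_{<0}$ are isotropic and dual to each other, and $\mathfrak g_0$ is orthogonal to both. Therefore $r=p_{>0}-p_{<0}$ is skew-symmetric. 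I would verify the modified classical Yang--Baxter equation $[rX,rY]-r([rX,Y]+[X,rY])=-[X,Y]$ directly, using the standard argument that $r\pm 1$ are (twice) projections onto subalgebras up to the $\mathfrak g_0$-part. Since $\mathfrak g_0$ is abelian and normalizes $\mathfrak g_{\gtrless0}$, the zero-degree terms drop out. This is the Adler--Kostant--Symes/Semenov-Tian-Shansky setting.

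\textbf{Step 3 (the bracket on $G$).} For a function $f$ on $G$, define $\nabla^r f,\nabla^l f\in\mathfrak g$ by $\frac{d}{dt}f(Le^{tX})|_{t=0}=\langle\nabla^r f,X\rangle$ and $\frac{d}{dt}f(e^{tX}L)|_{t=0}=\langle\nabla^l f,X\rangle$, so that $\nabla^l f=L\nabla^r f L^{-1}$. The skewness of $r$ gives antisymmetry of the bracket. The Jacobi identity and multiplicativity of the Sklyanin bracket then follow from the modified Yang--Baxter equation by Semenov-Tian-Shansky's general theorem, applied verbatim since only the algebraic identities above are used. Multiplicativity means that $G\times G\to G$ is Poisson, which one checks from $\nabla^r(f\circ m)$ and $\nabla^l(f\circ m)$ in terms of $\nabla^r f$, $\nabla^l f$ and $\mathrm{Ad}$.

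\textbf{Main obstacle.} I expect the main difficulty to be analytic and algebraic well-definedness in the two-variable setting, rather than the Yang--Baxter identity. Concretely, one must make sure that:
\begin{itemize}
\item products of lexicographically one-sided infinite sums converge, i.e.\ each coefficient is a finite sum;
\item the invertible elements form a group, and $\mathrm{Ad}_L$ preserves $\mathfrak g$;
\item the function classes considered (e.g.\ coefficient functions of finitely many entries) admit gradients in $\mathfrak g$ rather than in a completion.
\end{itemize}
For the first point, the issue is that the lexicographic order on $\mathbb Z^2$ is not archimedean. So I would impose the one-sidedness on the pair (leading $T_1$-degree bounded above, and for each $T_1$-degree the $T_2$-degree bounded above), and show finiteness of coefficient sums by induction on the $T_1$-degree.
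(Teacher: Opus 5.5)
Your proposal is correct and follows essentially the paper's route. Both arguments use the trace pairing $\langle X,Y\rangle=\Tr(XY)$, feed the lexicographic splitting $\mathcal{B}_{\succ(0,0)}\oplus\mathcal{B}_{0,0}\oplus\mathcal{B}_{\prec(0,0)}$ into Proposition~\ref{pro_r} to get the modified Yang--Baxter equation for $r=p_{>0}-p_{<0}$, and then invoke the general Sklyanin-bracket theorem.

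You are more careful than the paper on two points it skips:
\begin{enumerate}
\item $\mathcal{B}_{0,0}$ must be abelian for the zero-degree terms to cancel in the mYBE. This hypothesis is missing from Proposition~\ref{pro_r}, though it holds here.
\item Supports must be bounded slice-wise, as in your final paragraph, not merely contained in a half-space $\{(n,m)\succ(n_0,m_0)\}$; otherwise products such as $\bigl(\sum_{m\in\mathbb{Z}}T_1T_2^m\bigr)^2$ have divergent coefficients.
\end{enumerate}
Drop the Step~1 phrase ``finitely many terms above any given bound''. It is too strong, since it would exclude inverses such as that of $1-T_2^{-1}$.
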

Notably, when reduced to ordinary difference operators, this Poisson-Lie structure recovers the one-dimensional results established in \cite{refactorization2020}.
This rigorous algebraic framework allows us to get the integrability of the geometric dynamics.
\begin{theorem}[Integrability via refactorization]
	The  pentagram-type map \eqref{def_lap} on twisted $\J$-corrugated $(N,M)$-hedra is  integrable. 
	Geometrically, the map acts on the space of pairs $(\mathcal{D}, \sum_{k=1}^4 Q_k)$. Under this map, the  divisor $\sum_{k=1}^4 Q_k$ on the spectral curve remains strictly invariant modulo the gauge actions, while the dynamics of the operator $\mathcal{D} = \mathcal{D}_{+} + \mathcal{D}_{-}$ exactly manifest as the refactorization dynamics:
	\begin{align*}
	\tilde{\mathcal{D}}_+ \mathcal{D}_- = \tilde{\mathcal{D}}_- \mathcal{D}_+ 
	\end{align*}
	on the Poisson-Lie group of pseudo partial difference operators. This algebraic identification inherently guarantees the spectral invariance and naturally yields the Lax representation $\mathcal{L}:=\D_-^{-1}\D_+\mapsto\tilde{\mathcal{L}} =: \mathcal{D}_+ \mathcal{L} \mathcal{D}_+^{-1}$, along with an $r$-matrix induced Sklyanin Poisson bracket.
\end{theorem}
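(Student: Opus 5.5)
The proof would identify the geometric map with a factorization of the associated operator, and then invoke the Poisson-Lie result (the second theorem). The setup uses the bijection of the first theorem. Lift a generic twisted $\J$-corrugated $(N,M)$-hedron to a net $V_{i,j}\in\mathbb{R}^4$, where $\J=\{(0,0),(0,1),(1,0),(1,1)\}$. Coplanarity of each elementary quadrilateral gives a linear relation
$$a_{i,j}V_{i,j}+b_{i,j}V_{i+1,j}+c_{i,j}V_{i,j+1}+d_{i,j}V_{i+1,j+1}=0 .$$
Hence $\mathcal{D}V=0$ for a partial difference operator $\mathcal{D}$ supported on $\J$. The twist makes the coefficients $(N,M)$-periodic, and rescaling the lift is the gauge action. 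I would split $\mathcal{D}=\mathcal{D}_++\mathcal{D}_-$ according to the lexicographic order used to define $r=p_{>0}-p_{<0}$. Concretely, $\mathcal{D}_-$ collects the terms in the shifts $(0,0),(0,1)$ and $\mathcal{D}_+$ collects $(1,0),(1,1)$. Both pieces are then invertible pseudo partial difference operators in the Poisson-Lie group.

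\textbf{Step 1: geometric meaning of the map.} By the defining relation, $\mathcal{D}_-V=-\mathcal{D}_+V$. The left side is a combination of $V_{i,j}$ and $V_{i,j+1}$, so it lies on the line $(v_{i,j},v_{i,j+1})$. The right side is a combination of $V_{i+1,j}$ and $V_{i+1,j+1}$, so it lies on $(v_{i+1,j},v_{i+1,j+1})$. Their common value is therefore a lift of $\tilde v_{i,j}$ from \eqref{def_lap}, and I set $\tilde V:=\mathcal{D}_-V$, up to gauge.

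\textbf{Step 2: the refactorization relation.} I would check that $\tilde V$ is again a conjugate net, and find its operator. Suppose $\tilde{\mathcal{D}}_\pm$ are chosen with $\tilde{\mathcal{D}}_+\mathcal{D}_-=\tilde{\mathcal{D}}_-\mathcal{D}_+$. Then
$$(\tilde{\mathcal{D}}_++\tilde{\mathcal{D}}_-)\tilde V=\tilde{\mathcal{D}}_+\mathcal{D}_-V-\tilde{\mathcal{D}}_-\mathcal{D}_+V=0 .$$
So existence and uniqueness, modulo gauge, of such a refactorization with $\tilde{\mathcal{D}}_\pm$ of the same support type is exactly what is needed.

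\textbf{The main obstacle.} I expect this to be the hardest step. Unlike the one-dimensional case, $\mathcal{D}_\pm$ are first order in the $j$-shift, so $\mathcal{D}_+\mathcal{D}_-^{-1}$ is a genuine pseudo operator. The refactorization $\mathcal{D}_+\mathcal{D}_-^{-1}=\tilde{\mathcal{D}}_-^{-1}\tilde{\mathcal{D}}_+$ must be solved coefficientwise. My first attempt would be to write $\tilde{\mathcal{D}}_\pm$ with unknown coefficients and compare the products term by term. This gives the relations
$$\tilde d_{i,j}\,c_{i+1,j+1}=\tilde c_{i,j}\,d_{i,j+1},\qquad \dots$$
and these should be solvable recursively for generic coefficients, using non-degeneracy. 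A uniqueness argument then follows from the fact that the kernel determines the operator up to left multiplication.

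\textbf{Step 3: conclusions.} The Lax form follows directly. With $\mathcal{L}=\mathcal{D}_-^{-1}\mathcal{D}_+$, we get
$$\tilde{\mathcal{L}}=\tilde{\mathcal{D}}_-^{-1}\tilde{\mathcal{D}}_+=\mathcal{D}_+\mathcal{D}_-^{-1}=\mathcal{D}_+\mathcal{L}\mathcal{D}_+^{-1}.$$
Spectral invariants, and hence the curve $\Gamma_{\mathcal{D}}$, are therefore preserved. The marked points $Q_k$ come from the asymptotics of the Floquet–Bloch solutions in the construction of the first theorem. They are preserved because $\tilde V=\mathcal{D}_-V$ maps Bloch solutions to Bloch solutions with the same multipliers and the same pole/zero behaviour at the $Q_k$. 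Finally, refactorization maps on a Poisson-Lie group with the Sklyanin bracket of the second theorem are Poisson. Their conserved quantities are the traces $\Tr\,\mathcal{L}^k$, which Poisson-commute by the standard $r$-matrix argument. Restricting to the symplectic leaves singled out by fixing the divisor data, and counting dimensions against the genus of $\Gamma_{\mathcal{D}}$, gives the integrability.
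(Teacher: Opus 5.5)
Your identification of the geometric map with the refactorization, the Lax form, and the invariance of the $Q_k$ all follow the paper's argument. Your $\tilde V=\mathcal{D}_-V$ is the paper's $\tilde V=\mathcal{D}_+V$; the $\pm$ labels are swapped, which is harmless since $\tilde{\mathcal{D}}_+\mathcal{D}_-=\tilde{\mathcal{D}}_-\mathcal{D}_+$ is symmetric under $+\leftrightarrow-$. The divisor is preserved because a periodic operator sends a Bloch solution to a Bloch solution with the same multipliers.

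The genuine gap is the Poisson property, which you treat as automatic for refactorization maps. Put $F(\mathcal{D}_+,\mathcal{D}_-)=\mathcal{D}_-^{-1}\mathcal{D}_+$ and $G(\mathcal{D}_+,\mathcal{D}_-)=\mathcal{D}_+\mathcal{D}_-^{-1}$. Both are Poisson for the product structure carrying $-\pi$ on the $\mathcal{D}_-$ factor (multiplication is Poisson, inversion anti-Poisson), and your map $\Phi$ satisfies $F\circ\Phi=G$. This gives that $\Phi=F^{-1}\circ G$ is Poisson only once $F$ is known to be generically injective from $\NDDO(\J_+)\times\NDDO(\J_-)/H\tilde{\times}H$ to $\NDDO(\J_-)^{-1}\NDDO(\J_+)/\mathrm{Ad}\,H$, i.e.\ the pair can be recovered, up to gauge, from the product. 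That is not the uniqueness you sketch, which recovers $\tilde{\mathcal{D}}_\pm$ from $\mathcal{D}_\pm$. It is the one non-formal ingredient of the Poisson claim. The paper proves it in Lemma~\ref{lemma_se} by dualizing, solving $\tilde{\mathcal{D}}_+\mathcal{D}_-^*=\tilde{\mathcal{D}}_-\mathcal{D}_+^*$ on the reversed supports $-\J_\pm$, and invoking Lemma~\ref{lemma_unique_sol} a second time. You also need Proposition~\ref{pro_sub}: the bracket is invariant under the left-right $H\tilde{\times}H$ action, so it descends to the quotients, and $\NDDO(\J_\pm)$ are Poisson submanifolds because each $\J_\pm$ is lexicographically consecutive. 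Your split is dictated by \eqref{def_lap}, not by $p_{>0},p_{<0}$ (which would put $(0,1)$ on the positive side); consecutiveness of each half is what the Poisson argument actually uses.

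Your closing argument for Liouville integrability does not hold together, and it goes beyond what the paper proves. The divisor $\sum_k Q_k$ is not a function on the Poisson manifold $\NDDO(\J_+)\times\NDDO(\J_-)/H\tilde{\times}H$; the bracket does not see it, so fixing it cannot single out symplectic leaves, which are cut out by Casimirs. The untwisted traces $\Tr\,\mathcal{L}^k$ are also a much smaller family than the conjugation invariants the paper uses, namely the central functions $\Tr\,T_1^{iN}T_2^{jM}\mathcal{L}^k$ built with the central elements $T_1^N,T_2^M$. No dimension count against the genus $(N-1)(M-1)$ is carried out, and the paper does not attempt one. Its integrability claim rests on the Lax form, the Poisson property, and these Poisson-commuting invariants, so either drop that sentence or supply the count.

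Two smaller points. First, the $Q_k$ are not defined through asymptotics or pole/zero behaviour. They are just the joint multipliers $(w_1,w_2)$ of the four Bloch solutions forming the coordinates of $V$, so nothing beyond multiplier preservation is needed. Second, the step you call the main obstacle is routine. The unknown coefficients of $\tilde{\mathcal{D}}_\pm$ stand on the left, so they enter every coefficient of both products unshifted, and the system decouples into $3$ homogeneous linear equations in $4$ unknowns at each site. Generic rank $3$ then gives existence and uniqueness up to left multiplication by $\alpha$, with no recursion involved; this is the paper's count $(p+q-1)NM$ versus $(p+q)NM$ in Lemma~\ref{lemma_unique_sol}.
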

Finally, by introducing the cross-ratio variables $y_{i,j}$, we demonstrate that the geometric evolution \eqref{def_lap} transforms into
\begin{align*}
	y_{i+2,j+1}\tilde{\tilde{y}}_{i+1,j}=\frac{(1+\tilde{y}_{i+1,j})(1+\tilde{y}_{i+2,j+1})}{\left(1+\tilde{y}_{i+2,j}^{-1}\right)\left(1+\tilde{y}_{i+1,j+1}^{-1}\right)},
\end{align*}
which exhibits a connection to the
Y-mutations in a certain cluster algebra.

The remainder of the paper is organized as follows. In Section \ref{sec_j}, we introduce the geometric setting of twisted $\J$-corrugated $(N,M)$-hedra in the projective space, and define the pentagram-type maps on them. 
Section \ref{sec_partial} sets up the rigorous algebraic framework of doubly periodic partial difference operators. 
Besides,
we establish a canonical bijection between the projective equivalence classes of the polyhedra and the  spectral data of partial difference operators  modulo the gauge actions. 
In Section \ref{sec_poi}, we explore the Poisson-Lie group structure of pseudo partial difference operators, 
explicitly derive the Poisson brackets for coordinate functions, 
and discuss some reductions of partial difference operators. 
Section \ref{sec_refac} proves that the  pentagram-type maps exactly manifest as integrable refactorization maps within the Poisson-Lie group of pseudo partial difference operators.
Besides, 
we provide the explicit coordinate evolution for \eqref{def_lap} and transform them into a canonical Y-system. 
Finally, Section \ref{sec_con} concludes the paper with some open problems and indications of
future work.

\section{Pentagram-type maps on J-corrugated polyhedra }\label{sec_j}

\begin{definition}
	A polyhedron in $\mathbb{RP}^d$ is a map $v:\mathbb{Z}^2\rightarrow\mathbb{RP}^d$, where the image of the integer lattice forms the vertex set of a quadrilateral mesh.
\end{definition}

Similar to the $\J$-corrugated polygons defined in \cite{refactorization2020},
we define a restricted class of polyhedra whose vertices satisfy certain additional coplanarity conditions.
\begin{definition}\label{def_j}
Let $\J\subset\mathbb{Z}^2$, $|\J|\geq 3$ be a finite set,
and let $d:=|\J|-1$.
Then a polyhedron $\{v_{i,j}\}$ in $\mathbb{RP}^d$ is called $\J$-corrugated if for any $(i,j)\in\mathbb{Z}^2$,
the points $\{v_{i+n,j+m}\mid(n,m)\in\J\}$ belong to a $(|\J|-2)$-dimensional projective subspace.
\end{definition}

	In order to establish a rigorous algebraic correspondence, 
	we impose a natural geometric non-degeneracy condition on these polyhedra. 
	We say a $\J$-corrugated polyhedron is \textbf{ non-degenerate} if for any $(i,j) \in \mathbb{Z}^2$, 
	the vertices $\{v_{i+n,j+m}\mid(n,m)\in\J\}$ form a proper $(|\J|-2)$-dimensional projective subspace, meaning no $|\J|-1$ of these vertices belong to a $(|\J|-3)$-dimensional projective subspace.

\begin{example}\label{exam_corrugated}
Let $\J=\{(0,0),\,(0,1),\,(1,0),\,(1,1)\}$,
then a $\J$-corrugated polyhedron in $\mathbb{RP}^3$ is defined as a map $v:\mathbb{Z}^2\rightarrow\mathbb{RP}^3$ such that all the elementary quadrilaterals
\begin{align*}
v_{i,j},\quad
v_{i,j+1},\quad
v_{i+1,j},\quad
v_{i+1,j+1}
\end{align*}
are planar.
This case corresponds to discrete conjugate nets \cite{doliwa1997geometric}, which were originally studied in three-dimensional Euclidean space $\mathbb{E}^3$.

\end{example}

Now we introduce a natural geometric evolution on the space of twisted $\J$-corrugated polyhedra. 
Suppose that $\J\subset\mathbb{Z}^2$ can be partitioned as $\J=\J_+\cup\J_-$,
where $\J_\pm$ have the same common difference.
For example,
if $\J=\{(0,0),\,(0,1),\,(1,0),\,(1,1)\}$, 
we can define  $\J_+:=\{(0,0),(0,1)\}$ and $\J_-:=\{(1,0),(1,1)\}$,
both of which are arithmetic progressions with the same common difference $(0,1)$.
\begin{definition}
Let $\J_\pm\subset\mathbb{Z}^2$ be non-empty disjoint finite sets with the same common difference.
The pentagram-type map associated with the pair $\J_\pm$ is the map from the space of $\J$-corrugated polyhedra to itself defined by
\begin{align}\label{eq_pen}
\tilde{v}_{i,j}:=\operatorname{span}(v_{i+n,j+m}\mid(n,m)\in\J_+)\cap\operatorname{span}(v_{i+n,j+m}\mid(n,m)\in\J_-),
\end{align}
where $v$ and $\tilde{v}$ are the initial polyhedron and its image under the map,
and $\operatorname{span}(v_{i,j})$ denotes the projective subspace spanned by the points $v_{i,j}$.
\end{definition}

\begin{proposition}
The pentagram-type map \eqref{eq_pen} is a generically well-defined mapping from the space of $\J$-corrugated polyhedra to itself.
\end{proposition}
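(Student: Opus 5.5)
Two things must be shown: the intersection in \eqref{eq_pen} is generically a single point, and the resulting polyhedron $\tilde v$ is again $\J$-corrugated.

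\textbf{Well-definedness.} Write $|\J_+|=p$ and $|\J_-|=q$, so $p+q=|\J|=d+1$. Fix $(i,j)$. By $\J$-corrugation, all $d+1$ points $\{v_{i+n,j+m}\mid(n,m)\in\J\}$ lie in a hyperplane $H_{i,j}\cong\mathbb{RP}^{d-1}$. By non-degeneracy, the points indexed by $\J_+$ span a subspace of dimension $p-1$, and those indexed by $\J_-$ span one of dimension $q-1$. Inside $H_{i,j}$ the dimension count gives
\[
(p-1)+(q-1)-(d-1)=0 .
\]
So the two spans meet in at least a point. Because together they span all of $H_{i,j}$ (again by non-degeneracy, as no $d$ of the points lie in a smaller subspace), they meet in exactly one point. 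Thus $\tilde v_{i,j}$ is well defined whenever the polyhedron is non-degenerate, which is an open dense condition.

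\textbf{Corrugation of $\tilde v$: main idea.} We must show that the points $\tilde v_{i+n,j+m}$, $(n,m)\in\J$, lie in a hyperplane. The key is the common-difference hypothesis. Let $e$ be the common difference, so $\J_+=\{a+ke\}_{k=0}^{p-1}$ and $\J_-=\{b+ke\}_{k=0}^{q-1}$.

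First I check consecutive shifts. Consider the shifts $(i,j)$ and $(i,j)+e$. The sets $\J_+$ and $\J_++e$ share the $p-1$ points $a+e,\dots,a+(p-1)e$, and similarly for $\J_-$. Hence both $\tilde v_{(i,j)}$ and $\tilde v_{(i,j)+e}$ lie in the span of $v$ over $\J_+\cup(\J_++e)$, which has $p+1$ points, and in the analogous span for $\J_-$.

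\textbf{Corrugation of $\tilde v$: building the hyperplane.} I would construct it explicitly. Let $S=(i,j)+\J$, and let $\tilde{\J}$ be the set of lattice points appearing in $(i,j)+\J+\J_\pm$. For each $(n,m)\in\J$, the point $\tilde v_{i+n,j+m}$ lies in the span of $v$ over $(i+n,j+m)+\J_+$. One then shows that all these spans lie in a common hyperplane, using corrugation of the overlapping windows: adjacent windows share $d$ points, so their hyperplanes coincide generically only under additional structure. This is where the argument is delicate.

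The cleaner route, which I would actually follow, is algebraic. Lift to vectors $V_{i,j}\in\mathbb{R}^{d+1}$. Corrugation means there is a linear relation
\[
\sum_{(n,m)\in\J} c^{(n,m)}_{i,j}V_{i+n,j+m}=0 ,
\]
that is, $\mathcal{D}V=0$ for a partial difference operator $\mathcal{D}=\mathcal{D}_++\mathcal{D}_-$, split according to $\J_\pm$. Then $\tilde V:=\mathcal{D}_+V=-\mathcal{D}_-V$ is a lift of $\tilde v$, since it lies in both spans. Next, choose $\tilde{\mathcal{D}}_\pm$ supported on $\J_\pm$ with $\tilde{\mathcal{D}}_+\mathcal{D}_-=\tilde{\mathcal{D}}_-\mathcal{D}_+$. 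Then
\[
(\tilde{\mathcal{D}}_++\tilde{\mathcal{D}}_-)\tilde V=\tilde{\mathcal{D}}_+\mathcal{D}_+V-\tilde{\mathcal{D}}_-\mathcal{D}_-V .
\]
Using the refactorization identity, this equals $-\tilde{\mathcal{D}}_+\mathcal{D}_-V-\tilde{\mathcal{D}}_-\mathcal{D}_-V$, which I would arrange (with signs adjusted) to vanish. That vanishing is exactly the corrugation relation for $\tilde v$ with the same support $\J$.

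\textbf{Main obstacle.} The hard step is the generic solvability of the refactorization equation with supports $\J_\pm$. Because $\J_\pm$ are parallel arithmetic progressions with step $e$, both $\tilde{\mathcal{D}}_+\mathcal{D}_-$ and $\tilde{\mathcal{D}}_-\mathcal{D}_+$ are supported on $a+b+\{0,\dots,p+q-2\}e$. Along each $e$-line this reduces to the one-dimensional refactorization of Izosimov for ordinary difference operators. I would invoke that: the resulting linear system for the coefficients of $\tilde{\mathcal{D}}_\pm$ has a nontrivial solution when the leading coefficients are nonzero, which is guaranteed by non-degeneracy. Finally, the corrugation of $\tilde v$ must be nontrivial, i.e.\ $\tilde{\mathcal{D}}_\pm\ne0$; this is generic as well.
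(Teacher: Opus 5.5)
Your proof is correct in substance, but you reach the corrugation of $\tilde v$ by a different route from the paper. Write $v_S$ for the points indexed by $S\subset\mathbb{Z}^2$. The paper completes exactly the span-containment argument you abandoned, with one swap:
for $(n,m)\in\J_+$ it uses $\tilde v_{i+n,j+m}\in\operatorname{span}(v_{(i+n,j+m)+\J_-})$, and for $(n,m)\in\J_-$ it uses $\tilde v_{i+n,j+m}\in\operatorname{span}(v_{(i+n,j+m)+\J_+})$.
All of these lie in $L_{i,j}=\operatorname{span}(v_{(i,j)+\J_++\J_-})$. The common-difference hypothesis gives $|\J_++\J_-|=|\J|-1$, so $\dim L_{i,j}\le|\J|-2$, and no comparison of overlapping windows is needed.

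Your algebraic route (lift, $\mathcal{D}V=0$, $\tilde V=\mathcal{D}_+V$, refactorize) is the coordinatized form of the same count. At each site the coefficients of $\tilde{\mathcal{D}}_\pm$ are $|\J|$ unknowns subject to $|\J|-1$ homogeneous equations, one for each point of $\J_++\J_-$. The paper's version is shorter and purely projective. Yours additionally produces the operator $\tilde{\mathcal{D}}=\tilde{\mathcal{D}}_++\tilde{\mathcal{D}}_-$ with $\tilde{\mathcal{D}}\tilde V=0$, which is precisely the computation the paper postpones to the refactorization section.

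Three steps need tightening.

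\textbf{Solvability is elementary.} What you call the main obstacle follows from a dimension count. The refactorization equation decouples site by site into an underdetermined homogeneous linear system. So a nonzero solution always exists, for arbitrary (not necessarily periodic or non-degenerate) coefficients. Any nonzero solution yields a nontrivial relation among $\tilde V_{(i,j)+\J}$, so appealing to Izosimov's periodic theory is unnecessary.

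\textbf{The sign computation is circular as written.} Your expression $-\tilde{\mathcal{D}}_+\mathcal{D}_-V-\tilde{\mathcal{D}}_-\mathcal{D}_-V$ is again $(\tilde{\mathcal{D}}_++\tilde{\mathcal{D}}_-)\tilde V$. You obtained it from $\mathcal{D}_+V=-\mathcal{D}_-V$, not from the refactorization identity. Instead, substitute the two expressions for $\tilde V$ asymmetrically: $\tilde{\mathcal{D}}_+\tilde V+\tilde{\mathcal{D}}_-\tilde V=-\tilde{\mathcal{D}}_+\mathcal{D}_-V+\tilde{\mathcal{D}}_-\mathcal{D}_+V=0$. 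No sign adjustment is needed.

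\textbf{Nonvanishing of the lift.} State explicitly that $\tilde V_{i,j}\neq0$. Non-degeneracy forces every coefficient of $\mathcal{D}$ to be nonzero and the vectors $V_{(i,j)+\J_+}$ to be linearly independent.
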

\begin{proof}
According to Definition \ref{def_j},
for a generic $\J$-corrugated polyhedron $(v)$ in $\mathbb{RP}^d$ with $d=|\J|-1$,
the dimension of the subspace spanned by the points $\{v_{i+n,j+m}\mid(n,m)\in\J\}$  is $|\J|-2$.
Besides,
the dimension of $\operatorname{span}(v_{i+n,j+m}\mid(n,m)\in\J_+)$ is $|\J_+|-1$
and the dimension of $\operatorname{span}(v_{i+n,j+m}\mid(n,m)\in\J_-)$ is $|\J_-|-1$.
Generically, the intersection of these two subspaces defines a unique point $\tilde{v}_{i,j}\in\mathbb{RP}^d$.
Then we just need to prove the polyhedron $\tilde{v}$ is $\J$-corrugated.
Consider the space $L_{i,j}:=\operatorname{span}\{v_{i+n,j+m}\mid(n,m)\in\J_++\J_-\}$,
where $\J_++\J_-$ is the Minkowski sum of $\J_+$ and $\J_-$.
Then we have
\begin{align*}
\operatorname{span}\{\tilde{v}_{i+n,j+m}\mid(n,m)\in\J\}\subset L_{i,j}.
\end{align*}
Indeed, if $(n,m)\in\J_+$,
we have $\tilde{v}_{i+n,j+m}\in\operatorname{span}\{v_{i+n+n',j+m+m'}\mid(n',m')\in\J_-\}$.
Since the latter is a subspace of $L_{i,j}$,
we have $\tilde{v}_{i+n,j+m}\in L_{i,j}$ for all $(n,m)\in\J_+$.
Similarly, 
we can prove $\tilde{v}_{i+n,j+m}\in L_{i,j}$ for all $(n,m)\in\J_-$,
and hence  \(\operatorname{span}\{\tilde{v}_{i+n,j+m}\mid(n,m)\in\J\}\subset L_{i,j}.\)
Since there are $|\J_+|+|\J_-|-1$ elements in $|\J_++\J_-|$,
the dimension of $L_{i,j}$ is $|\J|-2$,
which means the polyhedron $(\tilde{v})$ is $\J$-corrugated.
\end{proof}
\begin{example}
As in Example \ref{exam_corrugated},
let $\J=\{(0,0),\,(0,1),\,(1,0),\,(1,1)\}$.
We can define  $\J_+:=\{(0,0),(0,1)\}$ and $\J_-:=\{(1,0),(1,1)\}$ with the common difference $(0,1)$.
The pentagram-type map is
\begin{align*}
\tilde{v}_{i,j}=(v_{i,j},v_{i,j+1})\cap(v_{i+1,j},v_{i+1,j+1}).
\end{align*}
Similarly, taking $\J_+:=\{(0,0),(1,0)\}$ and $\J_-:=\{(0,1),(1,1)\}$,
it yields the map
\begin{align*}
	\tilde{v}_{i,j}=(v_{i,j},v_{i+1,j})\cap(v_{i,j+1},v_{i+1,j+1}).
\end{align*}
The two maps are corresponding to the discrete Laplace transformations defined on the discrete conjugate nets \cite{doliwa1997geometric}.
\end{example}

In this paper, we primarily focus on a special class of polyhedra, which is introduced in the following definition.
\begin{definition}\label{def_twi}
	A twisted $(N,M)$-hedron is a polyhedron $\{v_{i,j}\in\mathbb{RP}^d\}$ such that $v_{i+N,j}=\phi_1(v_{i,j})$ and $v_{i,j+M}=\phi_2(v_{i,j})$ for all $(i,j)\in\mathbb{Z}^2$ and 
	two fixed (not depending on $(i,j)$) commuting projective transformations $\phi_1$, $\phi_2:\mathbb{RP}^d\rightarrow\mathbb{RP}^d$,
	called the monodromies.
\end{definition}

\section{Partial difference operators}\label{sec_partial}
Having established the geometric framework of twisted $\J$-corrugated $(N,M)$-hedra in the previous section, we now turn our attention to their algebraic counterpart. 
In this section, we introduce the  partial difference operators, which will serve as the primary algebraic tool for describing the associated pentagram-type dynamics.
\subsection{Partial difference operators}
Let $\mat$ be the space of bi-infinite matrices of real numbers,
and let $\J\subset\mathbb{Z}^2$ be a finite collection.
A linear operator $\D:\mat\rightarrow\mat$ is called a partial difference operator supported in $\J$ if it can be written as 
\begin{align*}
(\D X)_{i,j}=\sum_{(n,m)\in\J}a_{i,j}^{n,m}X_{i+n,j+m},
\end{align*}
or equivalently,
\begin{align*}
\D=\sum_{(n,m)\in\J}a^{n,m}T_1^nT_2^m,
\end{align*}
where $X\in\mat$, $T_1:\mat\rightarrow\mat$ is the shift operator $(T_1X)_{i,j}=X_{i+1,j}$ and similarly $(T_2X)_{i,j}=X_{i,j+1}$, and each coefficient $a^{n,m}$ is a bi-infinite matrix acting on $\mat$ by term-wise multiplication,
which can also be regarded as partial difference operators with $\J=\{(0,0)\}$.

Now we introduce a total ordering on  the index set $\mathbb{Z}^2$.

\begin{definition}[Lexicographical order on $\mathbb{Z}^2$]
	For any two indices $(n, m), (n', m') \in \mathbb{Z}^2$, we define the lexicographical order $\prec$ such that $(n, m) \prec (n', m')$ if and only if $n < n'$, or
	$n = n'$ and $m < m'$.
\end{definition}
A partial difference operator $\D = \sum_{(n,m)\in J} a^{n,m}T_1^n T_2^m$ is $(N,M)$-periodic if all its coefficients $a^{m,n}$ are $(N,M)$-periodic bi-infinite matrices, that is, $a^{m,n}_{i+N,j}=a^{m,n}_{i,j+M}=a^{m,n}_{i,j}$.
This is equivalent to saying that $\D$ commutes with the $(N,M)$-power of the shift operators,
that is
$\D T_1^N=T_1^N\D$ and $\D T_2^M=T_2^M\D$.
A partial difference operator $\mathcal{D}$ supported in $\J$ is said to be \textbf{ non-degenerate} if none of its  coefficients $a^{n,m}$ vanish anywhere. 
That is, $a_{i,j}^{n,m} \neq 0$ for all $(n,m) \in \J$ and all $(i,j) \in \mathbb{Z}^2$. 

We denote the space of $(N,M)$-periodic partial difference operators supported in $\J$ by $\DO(\J)$.
Besides, $\NDDO(\J)\subset\DO(\J)$ stands for the (dense) subset of non-degenerate operators,
which will be used to perfectly match the geometric non-degeneracy.
Let $\DO$ be the algebra of all $(N,M)$-periodic partial difference operators with arbitrary finite support.

\subsection{Partial difference operators and J-corrugated polyhedra}
There is a close relation between partial difference operators supported in $\J$ and $\J$-corrugated polyhedra.
Denote by $\mathcal{P}_{N,M}^d(\J)$ the space of twisted $\J$-corrugated $(N,M)$-hedra in $\mathbb{RP}^d$ modulo projective transformations.

Let $H$ be the group of non-vanishing $(N,M)$-quasi-periodic bi-infinite matrices, i.e.
\begin{align*}
H:=\{&\alpha\in\mat\mid\forall (i,j)\in\mathbb{Z}^2,\,\alpha_{i,j}\neq 0, \quad\text{and}\\
&\exists z_1,\,z_2\in\mathbb{R}^*,\,s.t.\,\alpha_{i+N,j}=z_1\alpha_{i,j},\,\alpha_{i,j+M}=z_2\alpha_{i,j}\}.
\end{align*}
Further,
let $H\tilde{\times}H$ be the subgroup of $H\times H$ that consists of pairs of non-vanishing $(N,M)$-quasi-periodic bi-infinite matrices with the same monodromies, i.e.
\begin{align}\label{eq_h}
\begin{split}
H\tilde{\times}H:=
\{
(\alpha,\beta)\in H\times H\mid\exists z_1,\,z_2\in\mathbb{R}^*,\,s.t.\,\alpha_{i+N,j}=z_1\alpha_{i,j},\,\alpha_{i,j+M}=z_2\alpha_{i,j},\\
\beta_{i+N,j}=z_1\beta_{i,j},\,\beta_{i,j+M}=z_2\beta_{i,j}
\}.
\end{split}
\end{align}
The group acts on $\DO(\J)$ by means of the left-right action
\begin{align}\label{eq_lr}
\D\rightarrow\beta^{-1}\D\alpha.
\end{align}
The Floquet-Bloch theory for partial difference operators has been considered in \cite{Krichever1985,Oblo2000}.
Recall that a Bloch solution $\psi(w_1,w_2)$ of $\D$ associated with a point $(w_1,w_2)\in(\mathbb{C}^*)^2$ satisfies:
\begin{align}\label{eq_omega}
\psi_{i+N,j}=w_1\psi_{i,j},\quad
\psi_{i,j+M}=w_2\psi_{i,j}.
\end{align}
Let $\mathcal{V}_{w_1,w_2}$ be the $NM$-dimensional vector space of functions $f: \mathbb{Z}^2 \to \mathbb{C}$ satisfying the quasi-periodic boundary conditions:
$$f_{i+N, j} = w_1 f_{i,j}, \quad f_{i, j+M} = w_2 f_{i,j}, \quad \forall i,j \in \mathbb{Z}.$$
Since any function in $\mathcal{V}_{w_1,w_2}$ is uniquely determined by its values on the fundamental domain $\{1,\ldots,N\} \times \{1,\ldots,M\}$, the action of the partial difference operator $\mathcal{D}$ naturally restricts to $\mathcal{V}_{w_1,w_2}$. This restriction can be represented by an $NM \times NM$ matrix, known as the Bloch matrix $\mathcal{D}(w_1,w_2)$. The associated spectral curve $\Gamma_{\mathcal{D}}$ is then defined by the algebraic equation:
\begin{align*}
\Gamma_\D=\{(w_1,w_2)\in(\mathbb{C}^*)^2\mid\det \D(w_1,w_2)=0\}.
\end{align*}
For a generic operator $\D$,
$\Gamma_\D$ is a Riemann surface of genus $(N-1)(M-1)$ \cite{Krichever1985}.
\begin{lemma}
The kernel of $\D\in\NDDO(\J)$ is an infinite-dimensional  linear space.
\end{lemma}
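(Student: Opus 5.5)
The plan is to solve the equation $\D X=0$ recursively, exploiting that every coefficient $a^{n,m}$ of a non-degenerate operator is nowhere vanishing. I will use the lexicographical order on $\mathbb{Z}^2$ introduced above. Let $(n_{\max},m_{\max})$ be the lexicographically largest element of $\J$ and $(n_{\min},m_{\min})$ the smallest. Written out, the equation $(\D X)_{i,j}=0$ reads
\[
a^{n_{\max},m_{\max}}_{i,j}X_{i+n_{\max},j+m_{\max}}=-\sum_{(n,m)\in\J\setminus\{(n_{\max},m_{\max})\}}a^{n,m}_{i,j}X_{i+n,j+m}.
\]
Because $a^{n_{\max},m_{\max}}_{i,j}\neq0$, this expresses the entry at the lexicographically largest lattice point among $\{(i+n,j+m)\}$ in terms of entries at strictly smaller points. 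Symmetrically, using $a^{n_{\min},m_{\min}}_{i,j}\neq 0$, the entry at the smallest point is determined by entries at larger points.

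The argument proceeds in three steps. First, I will reduce the problem to a problem in the ordering. Lexicographic order on $\mathbb{Z}^2$ is a total order, but it is not a well-order of type $\mathbb{Z}$: each column $\{n\}\times\mathbb{Z}$ is infinite. So I will not do a transfinite recursion directly. Instead, I will split according to the support of $\J$.

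Case one is $n_{\max}>n_{\min}$, so $\J$ spans at least two columns. Here I will do induction on the first coordinate. Take $p=n_{\max}-n_{\min}$ and prescribe $X$ arbitrarily on the strip of columns $i\in\{0,\dots,p-1\}$; this already gives infinitely many free parameters. Next, I determine column $p$ from the equations at $i=-n_{\min}$. Within column $p$ I use the elements of $\J$ having $n=n_{\max}$. The relation above determines $X_{p,\,j+m_{\max}}$ in terms of $X_{p,\,j+m}$ with $m<m_{\max}$ and earlier columns. This is itself a recursion in $j$ running upward, but it needs initial values. The clean fix is to choose the pivot differently. For the column-advancing step, I take the element $(n_{\max},m^*)\in\J$ with $m^*$ minimal in the top column, and solve downward, or upward, consistently. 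Even so, a one-sided recursion along the infinite column requires starting data at $j\to\pm\infty$. I will therefore also treat the top column as free up to a finite number of seeds: prescribe $X_{p,j}$ for $j$ in a window of length (top-column width) and solve both up and down using the largest and smallest element of the top column, both of which have nonvanishing coefficients. The same scheme, run leftward using the leftmost column, propagates to negative $i$.

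Case two is $n_{\max}=n_{\min}$, so all of $\J$ lies in one column. Then $\D$ acts independently on each column $i$ as an ordinary difference operator in $j$ of order $m_{\max}-m_{\min}\geq 2$, since $|\J|\geq 3$. Its kernel on each column has dimension $m_{\max}-m_{\min}$, again by nonvanishing of the extreme coefficients. Taking the product over infinitely many columns $i$ gives an infinite-dimensional kernel.

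In both cases, the free data can be chosen independently and each choice extends to a unique solution. So the solution map from the free data to $\Ker\D$ is injective linear on an infinite-dimensional space, which proves the lemma.

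I expect the main obstacle to be the consistency of case one. When a column is filled by the two-sided ordinary recursion, the equations used must be exactly the equations at rows whose top column is the new column. Their lower-column terms are already known, so this is an inhomogeneous ordinary difference equation in $j$ with nonvanishing leading and trailing coefficients. It is therefore always solvable, but with finite freedom. It must be verified that each equation $(\D X)_{i,j}=0$ is used exactly once, namely when its top column is filled, so that no equation is over-determined.
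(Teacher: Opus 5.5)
Your argument is correct, and it takes a genuinely different route from the paper's.

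\textbf{The paper's route.} The paper argues spectrally. It uses the $(N,M)$-periodicity of $\D$ and notes that every point $(w_1,w_2)$ of the spectral curve $\Gamma_\D$ carries a Bloch solution in $\Ker\D$, i.e.\ a joint eigenvector of $T_1^N$ and $T_2^M$. Eigenvectors with distinct joint eigenvalues are linearly independent, and $\Gamma_\D$ has infinitely many points, so $\Ker\D$ is infinite-dimensional.

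\textbf{Your route.} You solve a Cauchy-type problem column by column. Data on $p=n_{\max}-n_{\min}$ consecutive columns are unconstrained, because no equation fits inside such a strip. Each further column is then obtained from an inhomogeneous ordinary difference equation in $j$ whose extreme coefficients are nonvanishing.

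\textbf{The bookkeeping you flagged.} The consistency issue you call the main obstacle resolves cleanly. The equation at index $i$ involves exactly the columns $i+n_{\min},\dots,i+n_{\max}$. Equations with $i\geq -n_{\min}$ are used exactly once in the rightward sweep, when their top column is filled. Equations with $i<-n_{\min}$ are used exactly once in the leftward sweep, when their \emph{bottom} column is filled, not their top column as you wrote. So no equation is ever over-determined.

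\textbf{What your approach buys.}
\begin{itemize}
\item It is elementary and self-contained.
\item It works directly over $\mathbb{R}$ in $\mat$. The paper's Bloch solutions are complex, so it must pass to real and imaginary parts.
\item It needs neither periodicity nor, in the end, the lexicographic order. Only the column structure of $\J$ and nonvanishing of the coefficients at the ends of the two extreme columns are used.
\item It does not rely on the fact, which the paper takes for granted, that $\det\D(w_1,w_2)$ cuts out a genuine curve with infinitely many points in $(\mathbb{C}^*)^2$.
\item It describes the whole kernel. In case one, $\Ker\D$ contains a copy of $\mathbb{R}^{\mathbb{Z}}$.
\end{itemize}

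\textbf{What the paper's approach buys.} Given the Floquet--Bloch machinery it is very short. It also exhibits exactly the Bloch solutions $\psi(w_1,w_2)$ that are used afterwards to build polyhedra from spectral data.
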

\begin{proof}
	By definition, the spectral curve $\Gamma_{\mathcal{D}}$ is a complex algebraic curve, which necessarily contains infinitely many distinct points. For each point $(w_1,w_2) \in \Gamma_{\mathcal{D}}$, there exists at least one non-trivial Bloch solution $\psi{(w_1,w_2)} \in \Ker \mathcal{D}$ satisfying the Floquet-Bloch conditions $T_1^N \psi{(w_1,w_2)} = w_1 \psi{(w_1,w_2)}$ and $T_2^M \psi{(w_1,w_2)} = w_2 \psi{(w_1,w_2)}$.
	
	These Bloch solutions are joint eigenfunctions of the commuting translation operators $T_1^N$ and $T_2^M$. Since eigenvectors corresponding to distinct joint eigenvalues $(w_1, w_2)$ are strictly linearly independent, the infinite family of solutions $\{\psi{(w_1,w_2)}\}_{(w_1,w_2)\in\Gamma_{\mathcal{D}}}$ constitutes an infinite linearly independent set within $\Ker \mathcal{D}$. Therefore, $\Ker \mathcal{D}$ must be an infinite-dimensional vector space.
\end{proof}
\begin{lemma}\label{lemma_wz}
The action \eqref{eq_lr} induces a rescaling of the multipliers in \eqref{eq_omega}
\begin{align}\label{eq_act_w}
(w_1,w_2)\mapsto(\frac{w_1}{z_1},\frac{w_2}{z_2}),
\end{align}
where $(z_1,z_2)$ is defined in \eqref{eq_h}.
Since the map is globally analytic,
the algebraic structure of the curve $\Gamma_\D$ remains invariant up to this scaling.
\end{lemma}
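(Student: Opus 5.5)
The plan is to compute directly how the left-right action \eqref{eq_lr} acts on kernel elements. Let $(\alpha,\beta)\in H\tilde{\times}H$ with common monodromies $(z_1,z_2)$, and set $\D':=\beta^{-1}\D\alpha$. Multiplication by the non-vanishing diagonal matrices $\alpha$ and $\beta^{-1}$ is invertible, so
\[
\Ker\D'=\alpha^{-1}\Ker\D ,
\]
that is, $\D'\psi'=0$ if and only if $\D(\alpha\psi')=0$. Hence if $\psi$ is a Bloch solution of $\D$ with multipliers $(w_1,w_2)$, then $\psi':=\alpha^{-1}\psi$ lies in $\Ker\D'$. Quasi-periodicity of $\alpha$ gives $\alpha^{-1}_{i+N,j}=z_1^{-1}\alpha^{-1}_{i,j}$ and $\alpha^{-1}_{i,j+M}=z_2^{-1}\alpha^{-1}_{i,j}$. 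It follows that $\psi'_{i+N,j}=(w_1/z_1)\psi'_{i,j}$ and $\psi'_{i,j+M}=(w_2/z_2)\psi'_{i,j}$. So Bloch solutions of $\D$ at $(w_1,w_2)$ correspond bijectively to Bloch solutions of $\D'$ at $(w_1/z_1,w_2/z_2)$, which is exactly \eqref{eq_act_w}.

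Next I will lift this to the Bloch matrices to get the statement at the level of curves. Multiplication by $\alpha^{-1}$ maps $\mathcal{V}_{w_1,w_2}$ isomorphically onto $\mathcal{V}_{w_1/z_1,w_2/z_2}$. Multiplication by $\beta^{-1}$ does the same, and this is precisely where I use that $\alpha$ and $\beta$ share the monodromies $(z_1,z_2)$. Note also that $\D$, being $(N,M)$-periodic, preserves each $\mathcal{V}_{w_1,w_2}$. Writing everything in the fundamental-domain bases, I then get
\[
\D'(w_1/z_1,w_2/z_2)=B^{-1}\,\D(w_1,w_2)\,A ,
\]
where $A$ and $B$ are the invertible diagonal matrices of values of $\alpha$ and $\beta$ on $\{1,\dots,N\}\times\{1,\dots,M\}$. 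I should double-check the boundary entries here: because both factors carry the same twist, the powers of $z_k$ that appear from wrapping around the fundamental domain should cancel consistently. This gives
\[
\det\D'(w_1/z_1,w_2/z_2)=\frac{\det A}{\det B}\,\det\D(w_1,w_2),
\]
with $\det A/\det B$ a nonzero constant. Therefore $\Gamma_{\D'}$ is the image of $\Gamma_\D$ under the automorphism $(w_1,w_2)\mapsto(w_1/z_1,w_2/z_2)$ of $(\mathbb{C}^*)^2$. This map is a biholomorphism, even algebraic, so genus and the full algebraic structure are preserved.

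The main obstacle is the bookkeeping of the boundary terms in the Bloch-matrix identification. I will handle it by avoiding explicit matrices and phrasing everything through the intertwining $\beta^{-1}\circ\D\circ\alpha$ restricted to the quasi-periodic spaces. Since $\alpha$ maps $\mathcal{V}_{w_1/z_1,w_2/z_2}$ to $\mathcal{V}_{w_1,w_2}$ and $\beta^{-1}$ maps it back, the determinant relation follows from multiplicativity of determinants, with the diagonal factors computed in the fundamental-domain bases.
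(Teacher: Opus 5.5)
Your first paragraph is the paper's proof: $\Psi=\alpha^{-1}\psi$ solves $\beta^{-1}\D\alpha=0$, and the quasi-periodicity of $\alpha$ divides the multipliers by $(z_1,z_2)$. Your Bloch-matrix identity $\D'(w_1/z_1,w_2/z_2)=B^{-1}\D(w_1,w_2)A$ is also correct and goes beyond the paper, which only asserts the curve-level invariance: it shows that $\Gamma_{\D'}$ is the rescaled image of $\Gamma_\D$, with defining polynomials equal up to the constant $\det A/\det B$, and your boundary concern does not arise because multiplication by $\alpha$ or $\beta^{-1}$ between the quasi-periodic spaces is literally diagonal in fundamental-domain coordinates, so all the wrapping sits inside $\D(w_1,w_2)$.
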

\begin{proof}
Suppose that $(\alpha,\beta)\in H\tilde{\times}H$ with  $(z_1,z_2)$ as in \eqref{eq_h}.
Let $\psi_{i,j}$ be a Bloch solution of $\D$ with multipliers $(w_1,w_2)$.
Then  $\Psi_{i,j}:=\alpha^{-1}_{i,j}\psi_{i,j}$ is a solution for the transformed operator $\beta^{-1}\D\alpha$.
The Lemma follows from the relations:
\begin{align*}
T_1^N\Psi_{i,j}=T_1^N\alpha^{-1}_{i,j}\psi_{i,j}=\frac{w_1}{z_1}\Psi_{i,j},\\
T_2^M\Psi_{i,j}=T_2^M\alpha^{-1}_{i,j}\psi_{i,j}=\frac{w_2}{z_2}\Psi_{i,j}.
\end{align*} 
\end{proof}
In the following,
we  consider the case $\J=\{(0,0),\,(0,1),\,(1,0),\,(1,1)\}$ and $d=3$ as in Example \ref{exam_corrugated} for simplicity.
The other cases can be treated similarly.
\begin{lemma}\label{lem:zariski_dense}
	The subset of twisted $\J$-corrugated $(N,M)$-hedra whose associated $4 \times 4$ monodromy matrices $\phi_1$ and $\phi_2$ possess distinct eigenvalues (and are therefore diagonalizable) is Zariski dense in the space $\mathcal{P}_{N,M}^3(\J)$ of twisted $\J$-corrugated $(N,M)$-hedra in $\mathbb{RP}^3$ modulo projective transformations.
\end{lemma}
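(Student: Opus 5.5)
The plan is to show that the distinct-eigenvalue condition is an open algebraic condition on an irreducible parameter space, and then to exhibit one example satisfying it. I will parametrize the space of twisted $\J$-corrugated $(N,M)$-hedra (before quotienting by projective transformations) by lifted data. Each vertex $v_{i,j}$ lifts to a vector $V_{i,j}\in\mathbb{R}^4$, and the monodromies lift to commuting matrices $\Phi_1,\Phi_2\in GL_4(\mathbb{R})$ with $V_{i+N,j}=\Phi_1V_{i,j}$ and $V_{i,j+M}=\Phi_2V_{i,j}$. The coplanarity of each elementary quadrilateral means there is a linear relation
\begin{align*}
a^{0,0}_{i,j}V_{i,j}+a^{0,1}_{i,j}V_{i,j+1}+a^{1,0}_{i,j}V_{i+1,j}+a^{1,1}_{i,j}V_{i+1,j+1}=0,
\end{align*}
with non-vanishing coefficients in the generic (non-degenerate) case. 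This identifies the space with the data $(\mathcal{D},\text{a 4-dimensional subspace of }\Ker\mathcal{D}\text{ invariant under }T_1^N,T_2^M)$ modulo the action \eqref{eq_lr}.

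The key observation is that the condition ``$\phi_1$ and $\phi_2$ have distinct eigenvalues'' is the non-vanishing of the discriminants $\mathrm{disc}(\chi_{\Phi_1})\cdot\mathrm{disc}(\chi_{\Phi_2})$. These are polynomial functions in the entries of the lifted monodromies, and they are invariant under rescaling $\Phi_k\mapsto z_k\Phi_k$ up to a nonzero factor and under conjugation by projective transformations. Hence this product defines a Zariski open subset $U$ of $\mathcal{P}^3_{N,M}(\J)$, which descends well to the quotient.

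Zariski density then follows once two facts are established. The first is that the relevant component of $\mathcal{P}^3_{N,M}(\J)$ is irreducible, or at least that $U$ meets every component. The second is that $U$ is nonempty. For nonemptiness I will construct an explicit example. Choose a point configuration built from Bloch solutions: take four distinct points $(w_1^{(k)},w_2^{(k)})$, $k=1,\dots,4$, on the spectral curve $\Gamma_\mathcal{D}$ of some non-degenerate $\mathcal{D}$ with $w_1^{(k)}$ pairwise distinct and $w_2^{(k)}$ pairwise distinct, together with corresponding Bloch solutions $\psi^{(k)}$ as in the preceding lemma on $\Ker\mathcal{D}$. Setting $V_{i,j}=(\psi^{(1)}_{i,j},\dots,\psi^{(4)}_{i,j})$ yields a twisted $\J$-corrugated hedron with $\Phi_1=\mathrm{diag}(w_1^{(k)})$ and $\Phi_2=\mathrm{diag}(w_2^{(k)})$, which have distinct eigenvalues. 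Reality can be arranged by choosing the points in $\tau$-conjugate pairs and taking real and imaginary parts, which conjugates the monodromy matrices to real block form without changing their eigenvalues. Using Lemma \ref{lemma_wz} to rescale if needed, I also need the four vectors to span $\mathbb{R}^4$ generically. This holds because the $\psi^{(k)}$ are linearly independent, being eigenvectors with distinct joint eigenvalues, and linear independence of functions gives four indices at which the evaluation matrix is invertible.

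For the irreducibility part, I would argue that the parameter space is dominated by the image of the map $(\mathcal{D},\text{4-subspace})\mapsto$ hedron, where $\mathcal{D}$ ranges over the irreducible affine space $\NDDO(\J)$. Alternatively, any commuting monodromy pair can be perturbed: given an arbitrary hedron, deforming the coefficients of $\mathcal{D}$ moves the spectral curve and the invariant subspace continuously, so the distinct-eigenvalue locus is dense in the analytic topology. Since it is also Zariski open, it is Zariski dense.

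The main obstacle is controlling degenerate invariant subspaces, for instance ones containing Jordan blocks from multiple points of $\Gamma_\mathcal{D}$, and showing they are limits of configurations with simple spectrum. I would handle this by showing that the joint spectrum of $(T_1^N,T_2^M)$ on the 4-dimensional subspace is a degree-4 divisor on $\Gamma_\mathcal{D}$, and that divisors with distinct coordinates $w_1,w_2$ are dense in the symmetric product.
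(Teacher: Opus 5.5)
Your plan is the paper's: the non-simple locus is cut out by $\Delta_1\Delta_2=0$, the configuration space is taken to be irreducible, and one hedron assembled from four Bloch solutions shows that the open set is nonempty. The paper's hedron uses $\mathcal{D}_0=1+2T_2+3T_1+4T_1T_2$ with plane waves $\lambda_1^i\lambda_2^j$. Your nonemptiness step is fine. The gap is irreducibility. The paper only asserts it, and the assertion does not follow: a principal open subset of a reducible variety is reducible. Your sketch does not supply it either. Irreducibility of $\NDDO(\J)$ says nothing about the total space of pairs $(\mathcal{D},W)$, because describing $W$ by a degree-four divisor in $\mathrm{Sym}^4\Gamma_{\mathcal{D}}$ is only valid when every Bloch kernel is one-dimensional. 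Where the Bloch matrix has corank $>1$ along a whole component of $\Gamma_{\mathcal{D}}$, the fibre jumps, and the jump can be an entire component on which the monodromy is never simple. So this step cannot be completed in the stated generality.

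\textbf{Counterexample.} Take $N=2$, $\phi_1=\mathrm{id}$, and $\phi_2$ arbitrary.
\begin{enumerate}
\item Choose lines $m_j$ with $m_j\cap m_{j+1}\neq\emptyset$ and $m_{j+M}=\phi_2(m_j)$. Pick generic points $v_{0,j},v_{1,j}\in m_j$ and set $v_{i+2,j}=v_{i,j}$.
\item Every elementary quadrilateral has two vertices on $m_j$ and two on $m_{j+1}$. It is therefore planar, and generically no three of its vertices are collinear.
\item Quadrilaterals $(0,j)$ and $(1,j)$ have the same vertex set. Take a lift with $V_{i+2,j}=V_{i,j}$. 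Then the relation at $(1,j)$ is a multiple of the one at $(0,j)$, so $\mathcal{D}(1,w_2)$ has corank at least $M$ for every $w_2$.
\item Before dividing by $\mathrm{PGL}(4)$ the family has $6M+15-M=5M+15$ parameters. Hedra with generic $\mathcal{D}$ fill out only $2NM+18=4M+18$ dimensions. Hence for $M\ge 3$ the family is not in their closure.
\item A first-order computation gives more. A limit of hedra with $\phi_1=\mathrm{id}+\epsilon X$ forces $M$ generically independent linear conditions on $X$, which lies in the $3$-dimensional traceless centralizer of $\phi_2$. So the family is not a limit of hedra with $\phi_1\neq\mathrm{id}$ at all, yet $\phi_1=\mathrm{id}$ throughout it.
\end{enumerate}
This also defeats your perturbation alternative.

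Separately, your identification with $4$-dimensional invariant subspaces misses hedra lying in a $\phi$-invariant plane. Their coordinate functions are dependent, yet they are non-degenerate in the paper's sense and form another component of dimension $2NM+18$. For $N=M=1$, in eigencoordinates the coplanarity locus is $x_1x_2x_3x_4\det(1,b_k,a_k,a_kb_k)_{k=1,\dots,4}=0$. On this component the conclusion does hold, but it needs its own check.

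What your argument does prove is density in the closure of the hedra whose operator has a smooth spectral curve. That piece is dominated by the irreducible relative $\mathrm{Sym}^4$ of the spectral curves. The statement has to be restricted accordingly.
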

\begin{proof}
	According to Definition \ref{def_twi}, a twisted $\J$-corrugated $(N,M)$-hedron is determined by its vertex configurations along with its associated monodromy matrices $\phi_1, \phi_2 \in \mathrm{PGL}(4)$. 
	Therefore, the configuration space $\mathcal{P}_{N,M}^3(\J)$ is parametrized by the homogeneous coordinates of the vertices in a fundamental domain together with the matrix entries of $\phi_1$ and $\phi_2$. These parameters are subject to a set of algebraic constraints, namely the $\J$-corrugated linear dependence conditions and the quasi-periodic boundary matching equations. 	
	By imposing the strict non-degeneracy condition (where no three vertices in any elementary quadrilateral are collinear), this configuration space is identified with a principal open subset of an affine algebraic variety, rendering it irreducible.

	Since the entries of $\phi_1$ and $\phi_2$ are  coordinate functions of this affine parameter space, the discriminants of their characteristic polynomials, denoted by $\Delta_1$ and $\Delta_2$, are globally defined polynomial functions on this variety. The condition that the monodromy matrices possess multiple eigenvalues precisely corresponds to the algebraic equation $\Delta_1 \Delta_2 = 0$, which defines a Zariski closed subset.

	To show this closed set is proper, it suffices to construct a single configuration where $\Delta_1 \Delta_2 \neq 0$. Consider the constant-coefficient operator $\mathcal{D}_0 = 1 + 2T_2 + 3T_1 + 4T_1T_2 \in \NDDO(\J)$. Its plane wave solutions $\psi_{i,j} = \lambda_1^i \lambda_2^j$ satisfy the rational dispersion relation $1 + 2\lambda_2 + 3\lambda_1 + 4\lambda_1\lambda_2 = 0$, with the corresponding $(N,M)$-periodic joint Floquet multipliers given by $w_1 = \lambda_1^N$ and $w_2 = \lambda_2^M$.
	
	By generically selecting four pairs $(\lambda_1^{(k)}, \lambda_2^{(k)})_{k=1}^4$ on this rational curve, we can ensure their $N$-th and $M$-th powers $\{w_{1,k}\}$ and $\{w_{2,k}\}$ are mutually distinct. Assembling the four associated Bloch solutions yields a valid $\J$-corrugated $(N,M)$-hedron whose monodromy matrices strictly have distinct eigenvalues. Therefore, $\Delta_1 \Delta_2 \not\equiv 0$, making the diagonalizable configurations a Zariski dense open subset.
\end{proof}

\begin{proposition}\label{pro_one-to-one}
	Let $\J=\{(0,0),\,(0,1),\,(1,0),\,(1,1)\}$.
There is a one-to-one correspondence between the following spaces:

(1) The space of generic twisted $\J$-corrugated $(N, M)$-hedra in $\mathbb{RP}^3$ modulo projective transformations. 
Here, ``generic'' means the polyhedra are strictly non-degenerate, and the associated monodromy matrices are diagonalizable with mutually distinct eigenvalues.

(2) The space of pairs $(\mathcal{D}, \sum_{k=1}^4 Q_k)$, where $\mathcal{D} \in \NDDO(\J)$ is a non-degenerate $(N, M)$-periodic partial difference operator whose associated spectral curve $\Gamma_{\mathcal{D}}$ is generically smooth, and $\sum_{k=1}^4 Q_k$ is a real divisor of four distinct marked points on the spectral curve $\Gamma_{\mathcal{D}}$, invariant under the complex conjugation $\tau$ (i.e., $\tau(\sum_{k=1}^4 Q_k) = \sum_{k=1}^4 Q_k$).
This space is taken modulo the gauge action:
$$
(\mathcal{D}, \sum_{k=1}^4 Q_k) \mapsto (\beta^{-1}\mathcal{D}\alpha, \sum_{k=1}^4 Q_k')
$$
where $(\alpha,\beta) \in H\tilde{ \times }H$ has the monodromy $(z_1,z_2)$ as in \eqref{eq_h}, 
and each $Q'_k$ is the transformed point according to \eqref{eq_act_w}.
\end{proposition}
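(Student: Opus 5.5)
I will construct the two maps explicitly, in the style of Izosimov's correspondence for $\J$-corrugated polygons. The central device is the lift of the vertices to $\mathbb{R}^4$.

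\emph{From polyhedra to pairs.} Given a generic twisted $\J$-corrugated $(N,M)$-hedron, choose lifts $V_{i,j}\in\mathbb{R}^4$ with $V_{i+N,j}=\Phi_1V_{i,j}$ and $V_{i,j+M}=\Phi_2V_{i,j}$. Here $\Phi_1,\Phi_2\in\mathrm{GL}(4)$ are commuting lifts of $\phi_1,\phi_2$; they can be normalized to commute, since the twist only fixes them up to scalars. Coplanarity of each elementary quadrilateral gives a linear relation
$$a^{0,0}_{i,j}V_{i,j}+a^{0,1}_{i,j}V_{i,j+1}+a^{1,0}_{i,j}V_{i+1,j}+a^{1,1}_{i,j}V_{i+1,j+1}=0.$$
By strict non-degeneracy (no three of the four points collinear), this relation is unique up to scale and all coefficients are nonzero. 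Quasi-periodicity of the lift lets us choose the scales so that the coefficients are $(N,M)$-periodic. This produces $\D\in\NDDO(\J)$ whose kernel contains the four coordinate sequences of $V$.

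Because $\Phi_1,\Phi_2$ commute and have distinct eigenvalues, they are simultaneously diagonalizable over $\mathbb{C}$. Lemma \ref{lem:zariski_dense} shows this is the generic situation. Writing $V$ in a common eigenbasis gives four Bloch solutions $\psi^{(k)}\in\Ker\D$ with multipliers $(w_1^{(k)},w_2^{(k)})$. These are points $Q_k\in\Gamma_\D$, distinct because the eigenvalues are distinct. Since $\Phi_1,\Phi_2$ are real, the set $\{Q_k\}$ is preserved by $\tau$, so $\sum Q_k$ is a real divisor.

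The remaining choices are the projective transformation, the lift scaling $V\mapsto\alpha^{-1}V$ with $\alpha\in H$, and the scaling of the relation coefficients by $\beta$. Replacing $\Phi_i$ by $\Phi_i/z_i$ forces $\alpha$ and $\beta$ to share the monodromy $(z_1,z_2)$, so the total change is exactly an element of $H\tilde\times H$. By Lemma \ref{lemma_wz} it moves the $Q_k$ by \eqref{eq_act_w}. A projective transformation only conjugates $\Phi_i$ and leaves $(\D,\sum Q_k)$ unchanged. So the map is well defined on classes.

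\emph{From pairs to polyhedra.} Given $(\D,\sum Q_k)$, take Bloch solutions $\psi^{(k)}$ at the $Q_k$. Genericity, namely smoothness of $\Gamma_\D$, makes each $\psi^{(k)}$ unique up to scale. Set $V_{i,j}=(\psi^{(1)}_{i,j},\dots,\psi^{(4)}_{i,j})$. This is twisted with diagonal monodromies $\mathrm{diag}(w_1^{(k)})$ and $\mathrm{diag}(w_2^{(k)})$, and it is $\J$-corrugated because $\D V=0$. For $\tau$-invariance, pair each $Q_k$ with $\tau Q_k$ and choose $\psi$ at $\tau Q_k$ to be $\bar\psi$. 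A fixed complex change of basis then makes $V$ real, giving a real polyhedron in $\mathbb{RP}^3$. Rescaling the $\psi^{(k)}$ is a diagonal projective transformation, so the class is well defined. The two constructions are mutually inverse essentially by design: the relation recovered from $V$ is $\D$ itself, up to $\beta$, by uniqueness of the coplanarity relation.

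\emph{Main obstacle.} The delicate step is checking that the $V$ built from Bloch solutions really spans $\mathbb{RP}^3$ and is strictly non-degenerate. Concretely, the four vectors $V_{i,j},V_{i,j+1},V_{i+1,j},V_{i+1,j+1}$ must span a $3$-dimensional space with no three dependent. I would first argue that this is a nonvanishing condition on determinants that is algebraic in the data. I would then verify it on the constant-coefficient example from Lemma \ref{lem:zariski_dense}, where plane waves $\lambda_1^i\lambda_2^j$ at distinct generic points of the rational curve give explicit Vandermonde-type determinants. Density then yields non-degeneracy on the generic locus. That locus is the one the word ``generic'' in the statement is meant to describe, so the bijection is asserted only on matching Zariski open subsets.
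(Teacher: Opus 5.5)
You follow the paper's route in both directions. You lift the vertices, read off $\D\in\NDDO(\J)$ from the unique coplanarity relation, and split the lift along the joint eigenlines of the monodromies to get four Bloch solutions and a $\tau$-invariant divisor. You then identify the residual rescalings with $H\tilde{\times}H$, and invert via Bloch solutions at the $Q_k$ followed by a constant complex change of frame. Your bookkeeping of the gauge freedom, and your check that the reconstructed hedron is non-degenerate, go beyond the paper, whose proof never verifies the latter.

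\textbf{The step that fails} is ``they can be normalized to commute, since the twist only fixes them up to scalars.''

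\begin{itemize}
\item \emph{Why the normalization is impossible.} Commutation in $\mathrm{PGL}(4,\mathbb{R})$ only gives $\Phi_1\Phi_2=\varepsilon\,\Phi_2\Phi_1$. Taking determinants gives $\varepsilon^4=1$, so $\varepsilon=\pm1$, and $\varepsilon$ is unchanged by $\Phi_k\mapsto s_k\Phi_k$.
\item \emph{The case $\varepsilon=-1$ is compatible with your genericity assumptions.} For instance, $\Phi_1=\mathrm{diag}(\lambda,-\lambda,\mu,-\mu)$ and $\Phi_2=\bigl(\begin{smallmatrix}0&1\\ s&0\end{smallmatrix}\bigr)\oplus\bigl(\begin{smallmatrix}0&1\\ t&0\end{smallmatrix}\bigr)$ anticommute and have distinct eigenvalues.
\item \emph{It is realized by actual non-degenerate hedra.} For $N=M=1$, the kernel of $a+bT_2+(-1)^j(cT_1+dT_1T_2)$ is preserved by $T_1$ and by $(-1)^iT_2$, and these two anticommute. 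A generic $4$-dimensional joint invariant subspace, spanned by $\psi_{\pm\lambda_0},\psi_{\pm\lambda_1}$, gives a strictly non-degenerate twisted $\J$-corrugated hedron with distinct monodromy eigenvalues.
\item \emph{Your first step cannot be carried out for such hedra.} Computing $V_{i+N,j+M}$ in two ways gives $2\Phi_1\Phi_2V_{i,j}=0$, so no lift is quasi-periodic in both directions. Moreover, $\Phi_1,\Phi_2$ are not simultaneously diagonalizable. Finally, by uniqueness of the coplanarity relation, no $(N,M)$-periodic operator annihilates any lift, since that would force a doubly quasi-periodic lift.
\end{itemize}

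So this cannot be repaired by an argument. The proposition is only true on the $\varepsilon=1$ component, that is, when the monodromies are commuting elements of $\mathrm{GL}(4,\mathbb{R})$. That is exactly what the paper's proof tacitly assumes when it writes $\phi_1,\phi_2\in\mathrm{GL}(4,\mathbb{R})$. Adopt that convention and drop the normalization claim; the rest of your forward direction then goes through.

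\textbf{A smaller point in your density argument.} The constant-coefficient test operator has a rational spectral curve. Since the generic genus is $(N-1)(M-1)$, this curve is singular once $N,M\ge 2$, so the test point does not lie in the space (2). To fix this:
\begin{itemize}
\item Run the open non-degeneracy condition on the incidence variety of $(\D,Q_1,\dots,Q_4)$, with each $Q_k$ at a point where the Bloch kernel is one-dimensional. There the minors are regular up to nonzero factors, via adjugate columns.
\item Verify the condition at the test point.
\item Use that the incidence variety is smooth over the operator space near that point to perturb into the dense locus where $\Gamma_\D$ is smooth.
\end{itemize}
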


\begin{proof}
Given a projective equivalence class of general twisted $\J$-corrugated $(N,M)$-hedron with monodromies $\phi_1$ and $\phi_2\in\mathrm{GL}(4,\mathbb{R})$.
Let $\{v_{i,j}\in\mathbb{RP}^3\}\in\mathcal{P}_{N,M}^3(\J)$ be an arbitrary representative  of this class.
Lift each point $v_{i,j}\in\mathbb{RP}^3$ to a  vector $V_{i,j}\in\mathbb{R}^4$.
From the $\J$-corrugated condition, it follows that for any $(i,j)\in\mathbb{Z}^2$, the set of vectors $\{V_{i+n,j+m}\mid(n,m)\in\J\}$ is linearly dependent.
Specially,
there exist coefficients such that:
\begin{align}\label{eq_d}
a_{i,j}V_{i,j}+b_{i,j}V_{i,j+1}+c_{i,j}V_{i+1,j}+d_{i,j}V_{i+1,j+1}=0.
\end{align}
This is equivalent to $\D V=0$,
where $V$ is the bi-infinite matrix of  $V_{i,j}\in\mathbb{R}^4$,
and the operator $\D$ is defined as:
\begin{align*}
\D=a+bT_2+cT_1+dT_1T_2.
\end{align*}
Since the matrix $\{V_{i,j}\}$ is $(N,M)$-quasi-periodic,
the resulting operator $\D$ is $(N,M)$-periodic.
Crucially, the non-degenerate geometric condition---no $|\J|-1$ vertices from the set $\{v_{i+n,j+m}\}_{(n,m)\in \J}$ belong to a $(|\J|-3)$-dimensional projective space---is algebraically equivalent to stating that no $|\J|-1$ vectors from the set $\{V_{i+n,j+m}\}_{(n,m)\in \J}$ are linearly dependent. Consequently, in the linear relation $\mathcal{D}V = 0$, the coefficients $a_{i,j}$, $b_{i,j}$, $c_{i,j}$, $d_{i,j}$ must be strictly non-zero for all $i,j\in\mathbb{Z}$. 
This exactly dictates that $\mathcal{D}$ belongs to the space $\NDDO(\J)$.

The four components of $\{V_{i,j}\in\mathbb{R}^4\}$ are four solutions in $\Ker\D$.
Since $T_1^N(V_{i,j})=\phi_1 V_{i,j}$ and $T_2^M(V_{i,j})=\phi_2 V_{i,j}$,
these four solutions span a four-dimensional subspace of $\Ker\D$ invariant under $T_1^N$ and $T_2^M$.
By our genericity assumption (which constitutes a Zariski dense open subset as proven in Lemma \ref{lem:zariski_dense}), the monodromy matrices $\phi_1$ and $\phi_2$ are simultaneously diagonalizable with mutually distinct joint eigenvalues. Therefore, this $4$-dimensional invariant subspace canonically decomposes into four $1$-dimensional eigenspaces. These eigenspaces uniquely identify four linearly independent standard Bloch solutions, which exactly correspond to four distinct points $Q_1, Q_2, Q_3, Q_4$ on the smooth spectral curve $\Gamma_{\mathcal{D}}$.

Furthermore,
we can rescale each  $V_{i,j}$ by a scalar and  multiply each  equation in \eqref{eq_d} by a scalar factor.
Thus, $\D$ is defined up to the left-right action \eqref{eq_lr}.
According to Lemma \ref{lemma_wz},
the divisor \(\sum_{k=1}^4 Q_k\) is determined up to the corresponding rescaling \eqref{eq_act_w}.

Conversely, given a pair $(\mathcal{D}, \sum_{k=1}^4 Q_k)$, a standard result in Krichever's algebra-geometric theory of difference operators indicates that the kernel of the Bloch matrix at any regular point on the curve is exactly one-dimensional. Therefore, each point $Q_k$ uniquely identifies a Bloch solution $\psi(Q_k)$ up to a non-zero scalar multiplier. 
Let $V_{i,j} = [\psi_{i,j}(Q_1) : \psi_{i,j}(Q_2) : \psi_{i,j}(Q_3) : \psi_{i,j}(Q_4)]^T \in \mathbb{CP}^3$ be the projective coordinate vector. 
It satisfies
\begin{align*}
T_1^NV_{i,j}=\phi_1V_{i,j},\quad
T_2^MV_{i,j}=\phi_2V_{i,j},
\end{align*}
where $Q_k=(Q_{k,1},Q_{k,2})$ for $k=1,2,3,4$, and $\phi_1, \phi_2 \in \mathrm{GL}(4,\mathbb{C})$ are the diagonal matrices $\operatorname{diag}(Q_{1,1}, Q_{2,1}, Q_{3,1}, Q_{4,1})$ and $\operatorname{diag}(Q_{1,2}, Q_{2,2}, Q_{3,2}, Q_{4,2})$  respectively.
Since $V_{i,j}$ generally has complex entries, we need to apply a  suitable transformation to restrict the geometry to $\mathbb{RP}^3$.

For a real-valued operator $\mathcal{D}$, 
the BA function satisfies $\overline{\psi_{i,j}(P)} = \psi_{i,j}(\tau(P))$,
where $\tau$ is the complex conjugation. 
Since the divisor $\sum_{k=1}^4 Q_k$ is real, 
the entries of $V_{i,j}$ consist of complex conjugate pairs.
To ensure the geometric realization lies in $\mathbb{RP}^3$, we perform a canonical change of the projective frame. Specifically, we apply a constant  matrix $S \in \mathrm{GL}(4, \mathbb{C})$ that maps these conjugate pairs to their real and imaginary parts, 
\begin{align*}
	\tilde{V}_{i,j} := S
	\begin{pmatrix}
		\psi_{i,j}(Q_1)\\
		\psi_{i,j}(Q_2)\\
		\psi_{i,j}(Q_3)\\
		\psi_{i,j}(Q_4)
	\end{pmatrix},
\end{align*}
yielding a purely real vector $\tilde{V}_{i,j} = S V_{i,j} \in \mathbb{R}^4$.
Crucially, it corresponds to a similarity transformation of the associated monodromy matrices,
that is,
\begin{align*}
T_1^N\tilde{V}_{i,j}=\tilde{\phi}_1\tilde{V}_{i,j},\quad
T_2^M\tilde{V}_{i,j}=\tilde{\phi}_2\tilde{V}_{i,j},
\end{align*}
where $\tilde{\phi}_i=S\phi_i S^{-1}\in\mathrm{GL}(4,\mathbb{R})$, $i=1,\,2$.
Therefore, the positions of the marked points $Q_k$ remain strictly invariant under the transformation $S$. 
By construction, $\mathcal{D}\tilde{V} = 0$, ensuring the $\J$-corrugated property. 
Consequently, $\{\tilde{V}_{i,j}\}_{i,j\in\mathbb{Z}}$ is the twisted $\J$-corrugated $(N,M)$-hedron we needed.

If we select a different representative $\tilde{\mathcal{D}} = \beta^{-1}\mathcal{D}\alpha$ from  the same gauge class, the new Bloch solutions satisfy ${\Psi}(Q'_k) = \alpha^{-1} \psi(Q_k)$, where $Q'_k$ is the rescaled multiplier point \eqref{eq_act_w}. 
In $\mathbb{RP}^3$, $\{\psi(Q_k)\}_{k=1}^4$ and $\{\Phi(Q_k')\}_{k=1}^4$ correspond to the same $(N,M)$-hedron,
since $\alpha_{i,j}^{-1}$ is a scalar factor at each grid point. 
The action $(w_1, w_2) \to (w_1/z_1, w_2/z_2)$ ensures that the configuration of divisors $\sum_{k=1}^4 Q_k$ is consistently tracked across the gauge orbit. 
Thus the projective equivalence class of the resulting $(N,M)$-hedron is uniquely determined by the pair $(\mathcal{D}, \sum_{k=1}^4 Q_k)$, completing the bijection.
\end{proof}

\section{Poisson-Lie group of pseudo partial difference operators}\label{sec_poi}

We define an $(N,M)$-periodic pseudo partial difference operator as a formal Laurent series in terms of the left shift operators $T_1$, $T_2$,
whose coefficients are $(N,M)$-periodic bi-infinite matrices.
Explicitly,
every such operator is of the form
\begin{align*}
	\sum_{(n,m)\succ (n_0,m_0)}a^{n,m}T_1^nT_2^m,
\end{align*}
where $(n,m)\in\mathbb{Z}^2$,
$T_1$, $T_2$ are the left shift operators on $\mat$,
while each $a^{m,n}$ is an $(N,M)$-periodic bi-infinite matrix of real numbers.

We denote the set of $(N,M)$-periodic pseudo partial difference operators by $\Psi\DO$.
It is an associative algebra with respect to addition and multiplication of operators.
The operator is invertible if the coefficient $a^{n_0,m_0}$ of lowest power is a bi-infinite matrix none of whose elements vanishes.
The set of invertible $(N,M)$-periodic pseudo partial difference operators is denoted by $\mathrm{I}\Psi\DO$,
which is a group with respect to multiplication.
It can be regarded as an infinite-dimensional Lie group with Lie algebra $\Psi\DO$.

\subsection{General Poisson-Lie groups}
This subsection is a brief introduction to the theory of Poisson-Lie group.
They are taken from \cite{refactorization2020,reyman1994group} and occasionally modified to fit our needs.

A Poisson-Lie group is a Lie group equipped with a compatible Poisson structure.
It is the key object for group-theoretic constructions of lots of integrable systems.
\begin{definition}
	A {Poisson-Lie group} is a Lie group $G$ endowed with a Poisson structure $\pi$, such that the group multiplication map
	$$
	m: G \times G \to G, \quad m(g,h) = gh
	$$
	is a Poisson map, where $G \times G$ is equipped with the product Poisson structure.
\end{definition}
Assume that $G$ is a Poisson-Lie group and $\mathcal{B}$ is its Lie algebra.
Then $G$ is coboundary, if its Poisson tensor at every point $g\in G$ can be written as
\begin{align}\label{def_pi}
	\pi_g = \frac{1}{2}\left( (\lambda_g)_* \hat{r} - (\rho_g)_* \hat{r} \right) 
\end{align}
for a fixed element $\hat{r} \in \mathcal{B} \wedge \mathcal{B}$ called the {classical r-matrix}. 
Here $\lambda_g$ and $\rho_g$ are, respectively, the left and right translation by $g$.

A simple sufficient condition for the Poisson tensor \eqref{def_pi} to satisfy the Jacobi identity is that the r-matrix  satisfies the {modified Yang-Baxter equation (mYBE)}:
\begin{align*}
	[rx,ry]-r[rx,y]-r[x,ry]=-[x,y],\quad
	\forall\,x,\,y\in\mathcal{B}.
\end{align*}
A particularly important class of  r-matrices comes from a decomposition of the Lie algebra:
\begin{proposition}\label{pro_r}
	Let $\mathcal{B}$ be a Lie algebra endowed with an invariant inner product.
	Assume that $\mathcal{B}$, as a vector space,
	can be written as a direct sum of three subalgebras $\mathcal{B}_{>0}$,
	$\mathcal{B}_0$ and $\mathcal{B}_{<0}$,
	such that $[\mathcal{B}_0,\mathcal{B}_{>0}]\subset\mathcal{B}_{>0}$,
	$[\mathcal{B}_0,\mathcal{B}_{<0}]\subset\mathcal{B}_{<0}$, the subalgebras $\mathcal{B}_{>0}$, $\mathcal{B}_{<0}$ are isotropic,
	and $\mathcal{B}_0$ is orthogonal to both $\mathcal{B}_{>0}$ and $\mathcal{B}_{<0}$.
	Then $r:=p_{>0}-p_{<0}$ satisfies the modified Yang-Baxter equation,
	where $p_{>0}$, $p_{<0}$ are projectors $\mathcal{B}\rightarrow\mathcal{B}_{>0}$, $\mathcal{B}\rightarrow\mathcal{B}_{<0}$ respectively.
	Then the group $G$ of the Lie algebra $\mathcal{B}$ is a factorizable Poisson-Lie group.
\end{proposition}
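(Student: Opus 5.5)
The plan is to verify the modified Yang--Baxter equation directly, using the decomposition $\mathcal{B}=\mathcal{B}_{>0}\oplus\mathcal{B}_0\oplus\mathcal{B}_{<0}$ and bilinearity. The invariant inner product then identifies $\mathcal{B}$ with $\mathcal{B}^*$, which is what lets $r$ define a Poisson tensor through \eqref{def_pi}. Write $x=x_++x_0+x_-$ and $y=y_++y_0+y_-$, so that $rx=x_+-x_-$. Both sides of the mYBE are bilinear in $(x,y)$. It therefore suffices to check the identity on pairs of homogeneous elements, $x\in\mathcal{B}_a$ and $y\in\mathcal{B}_b$ with $a,b\in\{>0,0,<0\}$; by antisymmetry of both sides in $(x,y)$ we may assume $a\le b$ in a fixed ordering.

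The cases are as follows.
\begin{enumerate}
\item Both $x,y\in\mathcal{B}_{>0}$. Since $\mathcal{B}_{>0}$ is a subalgebra, $[rx,ry]=[x,y]$ and $r[rx,y]=r[x,ry]=[x,y]$. The left side is $[x,y]-2[x,y]=-[x,y]$, as required. The case $x,y\in\mathcal{B}_{<0}$ is symmetric: signs square away and the same value results.
\item $x\in\mathcal{B}_{>0}$ and $y\in\mathcal{B}_{<0}$. Here $[rx,ry]=-[x,y]$, and $r[rx,y]=r[x,y]$, $r[x,ry]=-r[x,y]$. The two $r$-terms cancel, and the left side equals $-[x,y]$. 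This case needs no knowledge of where $[x,y]$ lands.
\item $x\in\mathcal{B}_0$. Then $rx=0$ and the left side reduces to $-r[x,ry]$. If $y\in\mathcal{B}_{>0}$, then $[x,y]\in\mathcal{B}_{>0}$ by hypothesis, so $-r[x,y]=-[x,y]$. If $y\in\mathcal{B}_{<0}$, then $-r[x,-y]=r[x,y]=-[x,y]$, since $[x,y]\in\mathcal{B}_{<0}$. If $y\in\mathcal{B}_0$, the left side is $0$. This is where the analysis breaks down, as explained below.
\end{enumerate}

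The main obstacle is the $\mathcal{B}_0$--$\mathcal{B}_0$ case: $r$ as written yields $0$, while the mYBE demands $-[x,y]$. So the argument needs either $\mathcal{B}_0$ abelian, or $r$ modified on $\mathcal{B}_0$. In the intended application, $\mathcal{B}_0$ is spanned by the multiplication operators $a^{0,m}T_2^m$. These do not obviously commute, because the coefficients are $(N,M)$-periodic but the $T_2$-shifts act on them. The statement as given therefore does not hold in full generality without an extra hypothesis.

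I would handle this in one of two ways:
\begin{itemize}
\item Assume, or note that the intended decomposition gives, $\mathcal{B}_0$ abelian. This is the standard Reyman--Semenov-Tian-Shansky setting used in \cite{refactorization2020}, where $\mathcal{B}_0$ consists of multiplication operators.
\item Replace $r$ on $\mathcal{B}_0$ by a solution $r_0$ of the mYBE on $\mathcal{B}_0$ that is skew with respect to the restricted form. The cross cases above still go through if $r_0$ preserves $\mathcal{B}_0$.
\end{itemize}

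Skew-symmetry of $r$ follows from the hypotheses directly. Because $\mathcal{B}_{\pm}$ are isotropic and orthogonal to $\mathcal{B}_0$, the form pairs only $\mathcal{B}_{>0}$ with $\mathcal{B}_{<0}$. Then
\[
\langle rx,y\rangle=\langle x_+,y_-\rangle-\langle x_-,y_+\rangle=-\langle x,ry\rangle .
\]
This allows $r$ to be viewed as an element $\hat r\in\mathcal{B}\wedge\mathcal{B}$.

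The standard results of \cite{reyman1994group} then complete the argument. A skew $r$ satisfying the mYBE makes \eqref{def_pi} a Poisson tensor, i.e. the Sklyanin bracket satisfies the Jacobi identity. The bivector has coboundary form, so multiplicativity is automatic, and this gives a Poisson-Lie group. The group is factorizable because $r\pm 1$ have images $2(\mathcal{B}_{>0}\oplus\mathcal{B}_0)$ and $-2(\mathcal{B}_{<0}\oplus\mathcal{B}_0)$ respectively, up to the $\mathcal{B}_0$ adjustment above; these give the factorization $g=g_+g_-^{-1}$ near the identity.
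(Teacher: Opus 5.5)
Your case analysis is correct, and it is more than the paper provides. The paper gives no proof of this proposition, only the citations \cite{refactorization2020,reyman1994group}. Your route is therefore genuinely self-contained: you check the six homogeneous cases and get skew-symmetry of $r$ from isotropy and orthogonality.

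Your diagnosis of the $\mathcal{B}_0$--$\mathcal{B}_0$ case is also right. There the left-hand side of the mYBE vanishes identically, so the conclusion needs $[\mathcal{B}_0,\mathcal{B}_0]=0$, and the statement as printed is false without it. For a counterexample, take $\mathfrak{gl}_4(\mathbb{R})$ with the trace form. Let $\mathcal{B}_{>0}$ be the upper-right $2\times 2$ block, $\mathcal{B}_0$ the block-diagonal matrices, and $\mathcal{B}_{<0}$ the lower-left $2\times 2$ block. Every listed hypothesis holds, yet $\mathcal{B}_0\cong\mathfrak{gl}_2\oplus\mathfrak{gl}_2$ is non-abelian and $p_{>0}-p_{<0}$ fails the mYBE.

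Your second remedy, adding a skew mYBE solution $r_0$ on $\mathcal{B}_0$, is the standard one. The mixed cases do survive it, since the two terms $[r_0x,y]$ cancel.

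The one point to correct is your remark about the intended application. The paper's middle subalgebra is $\mathcal{B}_{0,0}=\Psi\DO(\{(0,0)\})$. This contains only the multiplication operators by $(N,M)$-periodic matrices, and these commute. Because the splitting uses the lexicographic order, the terms $a^{0,m}T_2^m$ lie in $\mathcal{B}_{\succ(0,0)}$ when $m>0$ and in $\mathcal{B}_{\prec(0,0)}$ when $m<0$.

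The non-abelian algebra you describe, where indeed $[aT_2,b]=a(b_{0,1}-b)T_2\neq 0$, would arise only from splitting by the $T_1$-exponent alone. So your first remedy applies verbatim: the proposition needs only the added hypothesis that $\mathcal{B}_0$ is abelian. The paper's later use of $r=p_{>0}-p_{<0}$ is unaffected, for example the identity $r+\mathrm{Id}=2p_{>0}+p_0$ in the proof of Proposition~\ref{pro_sub}.
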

For a Lie group $G$ of invertible elements in an associative algebra $A$ with an  invariant inner product $\langle\cdot,\cdot\rangle$, the coboundary Poisson-Lie tensor takes the explicit computable form:
\begin{align}\label{eq_pi}
	\pi_{g}(x,y)=\frac{1}{2}\left(
	\langle r(xg),yg\rangle
	-
	\langle r(gx),gy\rangle
	\right),\quad
	\forall\,g\in G,\,x,\,y\in A.
\end{align}
where $r:A\to A$ is the skew-symmetric linear operator corresponding to the classical $r$-matrix
and we identify the cotangent space $T_g^*G$ with the tangent space $T_gG=A$.
Here the right hand side of \eqref{eq_pi} is defined for every $g\in A$.
Therefore it is a Poisson bracket on the whole associative algebra $A$.

Finally,
we give two propositions on the coboundary Poisson-Lie groups.
\begin{proposition}\label{pro_auto}
Let $\sigma:\,G\rightarrow G$ be an automorphism of a coboundary Poisson-Lie group.
If the differential of \(\sigma\) at the identity preserves the r-matrix $\hat{r}$,
then \(\sigma\) is a Poisson map.
\end{proposition}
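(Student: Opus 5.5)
We need to show that the Poisson tensor $\pi$ on $G$ is preserved by $\sigma$, i.e. $\sigma_*\pi_g=\pi_{\sigma(g)}$ for every $g\in G$. We write $\phi:=d\sigma|_e:\mathcal{B}\to\mathcal{B}$. Since $\sigma$ is a group automorphism, $\phi$ is a Lie algebra automorphism, and the hypothesis says $(\phi\wedge\phi)\hat r=\hat r$.

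The key identities are the intertwining relations $\sigma\circ\lambda_g=\lambda_{\sigma(g)}\circ\sigma$ and $\sigma\circ\rho_g=\rho_{\sigma(g)}\circ\sigma$. Both follow immediately from $\sigma(gh)=\sigma(g)\sigma(h)$ and $\sigma(hg)=\sigma(h)\sigma(g)$. Differentiating at the identity gives
\begin{align*}
d\sigma_g\circ(\lambda_g)_*|_e=(\lambda_{\sigma(g)})_*|_e\circ\phi,\qquad d\sigma_g\circ(\rho_g)_*|_e=(\rho_{\sigma(g)})_*|_e\circ\phi .
\end{align*}

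We now push forward the coboundary formula \eqref{def_pi} along $\sigma$:
\begin{align*}
(\sigma_*\pi)_{\sigma(g)}=\tfrac12\big((d\sigma_g\wedge d\sigma_g)(\lambda_g)_*\hat r-(d\sigma_g\wedge d\sigma_g)(\rho_g)_*\hat r\big).
\end{align*}
By the intertwining relations this equals
\begin{align*}
\tfrac12\big((\lambda_{\sigma(g)})_*(\phi\wedge\phi)\hat r-(\rho_{\sigma(g)})_*(\phi\wedge\phi)\hat r\big).
\end{align*}
Using $(\phi\wedge\phi)\hat r=\hat r$, this is exactly $\pi_{\sigma(g)}$. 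Hence $\sigma_*\pi=\pi$ at every point $\sigma(g)$, and since $\sigma$ is surjective these points cover all of $G$. Therefore $\sigma$ is a Poisson map.

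It is convenient to check the same computation in the dual picture of formula \eqref{eq_pi}, which is what is used for $\mathrm{I}\Psi\DO$. Assume $\sigma$ extends to an automorphism of the associative algebra $A$ that preserves $\langle\cdot,\cdot\rangle$, and that the invariance condition reads $\phi r\phi^{-1}=r$. Take $f_1,f_2$ and set $F_i=f_i\circ\sigma$. Then $\nabla F_i(g)=\phi^{*}\nabla f_i(\sigma(g))=\phi^{-1}\nabla f_i(\sigma(g))$, where the last equality uses that $\phi$ is orthogonal. Invariance of the pairing together with $\phi(xg)=\phi(x)\sigma(g)$ then gives
\begin{align*}
\langle r(\phi^{-1}x\,g),\phi^{-1}y\,g\rangle=\langle r(x\sigma(g)),y\sigma(g)\rangle ,
\end{align*}
and similarly for the left-translated term. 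Hence $\{F_1,F_2\}(g)=\{f_1,f_2\}(\sigma(g))$.

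The only delicate point is the correct meaning of "preserves the $r$-matrix". If it means $(\phi\wedge\phi)\hat r=\hat r$, the tensor argument above is the whole proof and needs no orthogonality assumption. For the operator form one must also know that $\phi$ preserves the pairing used to identify $\mathcal{B}\wedge\mathcal{B}$ with skew operators. This holds for the shift-conjugation automorphisms applied later to pseudo partial difference operators, since those preserve the trace pairing.
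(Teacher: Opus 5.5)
The paper does not prove this proposition. It is quoted, together with Proposition~\ref{pro_ad}, as background from \cite{refactorization2020,reyman1994group}, so there is no argument of the author's to compare against.

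Your proof is correct, and it is the standard one. Differentiating $\sigma\circ\lambda_g=\lambda_{\sigma(g)}\circ\sigma$ and $\sigma\circ\rho_g=\rho_{\sigma(g)}\circ\sigma$ at $e$, and using $(\phi\wedge\phi)\hat r=\hat r$, gives $(d\sigma_g\wedge d\sigma_g)\pi_g=\pi_{\sigma(g)}$ for every $g$. That is exactly the definition of a Poisson map, so the appeal to surjectivity is superfluous.

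Your second, operator-level computation is the one that matters for how the paper uses the result. In $\mathrm{I}\Psi\DO$ the $r$-matrix $r=p_{>0}-p_{<0}$ is only formally an element of $\mathcal{B}\wedge\mathcal{B}$. Moreover, in the proof of Proposition~\ref{pro_sub}(3) the automorphism is conjugation by an $(N,M)$-quasi-periodic multiplication operator $\gamma$, not a shift. That conjugation preserves the trace pairing and each of $\mathcal{B}_{\succ(0,0)}$, $\mathcal{B}_{0,0}$, $\mathcal{B}_{\prec(0,0)}$. These are precisely your two hypotheses, $\phi$ orthogonal and $\phi r\phi^{-1}=r$, under which your identity $\{f_1\circ\sigma,f_2\circ\sigma\}=\{f_1,f_2\}\circ\sigma$ goes through.
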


\begin{proposition}\label{pro_ad}
Let $G$ be a Lie group endowed with a coboundary Poisson structure $\pi$ defined by an $r$-matrix $\hat{r}$.
Then the Poisson structure $\pi$ vanishes at $g\in G$ iff $(\mathrm{Ad}_g)_*\hat{r}=\hat{r}$.
\end{proposition}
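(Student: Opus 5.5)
The statement to prove is Proposition \ref{pro_ad}: the coboundary Poisson structure $\pi$ vanishes at $g$ if and only if $(\mathrm{Ad}_g)_*\hat r=\hat r$. The plan is to work directly from the formula \eqref{def_pi}, $\pi_g=\frac12\big((\lambda_g)_*\hat r-(\rho_g)_*\hat r\big)$, and transport everything back to the identity. Both terms are bivectors at $g$, and left translation $(\lambda_g)_*$ is a linear isomorphism $\mathcal{B}\wedge\mathcal{B}\to T_gG\wedge T_gG$. So $\pi_g=0$ is equivalent to $(\lambda_{g^{-1}})_*\pi_g=0$.

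First, apply $(\lambda_{g^{-1}})_*$ to $\pi_g$. Since $\lambda_{g^{-1}}\circ\lambda_g=\mathrm{id}$, this gives
\[
(\lambda_{g^{-1}})_*\pi_g=\tfrac12\big(\hat r-(\lambda_{g^{-1}}\circ\rho_g)_*\hat r\big).
\]
Second, identify the composite map. With the paper's convention $\rho_g(h)=hg$, we get $\lambda_{g^{-1}}\circ\rho_g(h)=g^{-1}hg$. This is conjugation by $g^{-1}$, a group automorphism fixing the identity, and its differential at $e$ is $\mathrm{Ad}_{g^{-1}}$. Hence $(\lambda_{g^{-1}})_*\pi_g=\frac12\big(\hat r-(\mathrm{Ad}_{g^{-1}})_*\hat r\big)$. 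This vanishes iff $(\mathrm{Ad}_{g^{-1}})_*\hat r=\hat r$. Applying the invertible map $(\mathrm{Ad}_g)_*$ to both sides, this is equivalent to $(\mathrm{Ad}_g)_*\hat r=\hat r$, which is the claim.

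As a consistency check in the associative-algebra setting \eqref{eq_pi}, substitute $x\mapsto g^{-1}x$, $y\mapsto g^{-1}y$ in $\langle r(gx),gy\rangle$. The vanishing of $\pi_g$ then reads
\[
\langle r(\mathrm{Ad}_{g^{-1}}x),\mathrm{Ad}_{g^{-1}}y\rangle=\langle rx,y\rangle \quad\text{for all } x,y.
\]
By invariance of the inner product, this says $\mathrm{Ad}_g\circ r\circ\mathrm{Ad}_{g^{-1}}=r$, i.e. $\mathrm{Ad}_g$ commutes with $r$. This is the operator form of the same condition and can be given as the concrete verification.

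The only delicate point is bookkeeping. One must check that the convention for $\rho_g$ (right multiplication $h\mapsto hg$) makes the composite conjugation by $g^{-1}$ rather than by $g$. One must also note that the condition is symmetric under $g\leftrightarrow g^{-1}$, so this choice does not affect the result. No analytic issues arise, since only pushforwards of a fixed bivector at a single point are involved.
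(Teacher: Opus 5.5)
Your proposal is correct and is essentially the paper's argument. The paper translates $\pi_g$ back to the identity by right translation, via $(\mathrm{Ad}_g)_*\hat r=(\rho_{g^{-1}})_*(\lambda_g)_*\hat r$; you translate by $\lambda_{g^{-1}}$, obtain $\mathrm{Ad}_{g^{-1}}$, and then invert, which is equivalent (your associative-algebra check is a valid but optional extra).
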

\begin{proof}
Since $(\mathrm{Ad}_g)_*\hat{r}=(\rho_g^{-1})_*(\lambda_g)_*\hat{r}$,
we have $(\mathrm{Ad}_g)_*\hat{r}=\hat{r}$ iff $(\lambda_g)_*\hat{r}=(\rho_g)_*\hat{r}$, i.e. $\pi_g=0$.
\end{proof}
\subsection{The Poisson-Lie group of pseudo partial difference operators}

The Lie algebra of the group $\mathrm{I}\Psi\DO$ is the space $\Psi\DO$ of all $(N,M)$-periodic pseudo partial difference operators.
The algebra has an invariant inner product defined by
\begin{align}\label{def_inn}
	\langle \D_1,\D_2\rangle=\Tr\, \D_1\D_2,\quad
	\forall\, \D_1,\,\D_2\in\Psi\DO,
\end{align}
where the trace of an $(N,M)$-periodic pseudo partial difference operator is defined as
\begin{align*}
	\Tr\left(\sum_{(n,m)\succ(n_0,m_0)}a^{n,m}T_1^nT_2^m\right):=
	\sum_{i=1}^N\sum_{j=1}^Ma^{0,0}_{i,j}.
\end{align*}
The inner product \eqref{def_inn} is non-degenerate and invariant in the associative algebra sense,
i.e. $\langle\D_1,\D_2\D_3\rangle=\langle\D_1\D_2,\D_3\rangle$.
Furthermore,
it is symmetric, that is, $\Tr\,\D_1\D_2=\Tr\,\D_2\D_1$.

Represent the algebra $\Psi\DO$ of $(N,M)$-periodic partial difference operators as the sum of three subalgebras:
\begin{align*}
	\mathcal{B}_{\prec(0,0)}:=\Psi\DO(\mathbb{Z}^2_{\prec(0,0)}),\quad
	\mathcal{B}_{0,0}:=\Psi\DO({\{(0,0)\}}),\quad
	\mathcal{B}_{\succ(0,0)}:=\Psi\DO(\mathbb{Z}^2_{\succ(0,0)}),
\end{align*}
where $\mathbb{Z}^2_{\succ(0,0)}$ stands for elements in $\mathbb{Z}^2$ bigger than $(0,0)$ and $\mathbb{Z}^2_{\prec(0,0)}$ for elements smaller than $(0,0)$.
The decomposition satisfies the requirements of Proposition \ref{pro_r}, so we get an $r$-matrix $r:=p_{>0}-p_{<0}$ and hence a Poisson-Lie structure on $\mathrm{I}\Psi\DO$.

Let $\nabla^l f$, $\nabla^rf$ be the left and right gradients for a function $f\in C^\infty(\mathrm{I}\Psi\DO)$: 
\begin{align}\label{def_lr}
	\langle \nabla^lf(x),y\rangle=\left.\frac{\d}{\d t}\right|_{t=0}f(\exp(ty)x),\quad
	\langle \nabla^rf(x),y\rangle=\left.\frac{\d}{\d t}\right|_{t=0}f(x\exp(ty)).
\end{align}
Then the bracket given by 
\begin{align}\label{def_sklyanin}
	\{f_1,f_2\}
	=\frac{1}{2}
	\left(
	\langle r(\nabla^rf_1),\nabla^rf_2\rangle
	-\langle r(\nabla^lf_1),\nabla^lf_2
	\rangle
	\right),
\end{align}
defines a Poisson-Lie bracket, which is called the Sklyanin bracket.
\begin{proposition}\label{pro_sub}
The group $\mathrm{I}\Psi\DO$ of $(N,M)$-periodic invertible pseudo partial difference operators, together with the Poisson structure defined by \eqref{def_sklyanin}, is a Poisson-Lie group.
This structure has the following properties:
\begin{enumerate}
	\item Assume that $\J\subset\mathbb{Z}^2$ is a finite subset that consists of consecutive elements (in the lexicographical order).
	Then the subset $\mathrm{I}\DO(\J):=\mathrm{I}\Psi\DO\cap\DO(\J)$ is a Poisson submanifold of $\mathrm{I}\Psi\DO$.
	\item 
	If $\J$ is a one-point set,
	then the restriction of $\pi$ to $\mathrm{I}\Psi\DO(\J)$ is zero.
	\item 
	The Poisson structure is invariant under the left-right action \eqref{eq_lr} of the group $H\tilde{\times}H$ of pairs of non-vanishing $(N,M)$-quasi-periodic bi-infinite matrices with the same monodromy.
\end{enumerate}
\end{proposition}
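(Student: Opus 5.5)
The plan is to follow the one-dimensional argument of \cite{refactorization2020}, replacing the grading by powers of a single shift with the lexicographical grading on $\mathbb{Z}^2$. Write $\mathcal{B}=\Psi\DO$, with inner product \eqref{def_inn} and the splitting $\mathcal{B}=\mathcal{B}_{\prec(0,0)}\oplus\mathcal{B}_{0,0}\oplus\mathcal{B}_{\succ(0,0)}$.

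\textbf{Poisson-Lie property.} I will first check the hypotheses of Proposition \ref{pro_r}.
\begin{enumerate}
\item Since $\prec$ is a total order compatible with addition in $\mathbb{Z}^2$, products of terms of positive (negative) degree stay positive (negative), and multiplication by degree-$(0,0)$ multiplication operators preserves degree. So the three pieces are subalgebras, and $\mathcal{B}_{0,0}$ normalizes the other two.
\item $\Tr$ only reads the $(0,0)$-coefficient, and $aT_1^nT_2^m\cdot bT_1^{n'}T_2^{m'}$ has degree $(n+n',m+m')$. Hence $\langle\cdot,\cdot\rangle$ pairs degree $(n,m)$ only with degree $(-n,-m)$. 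This gives isotropy of $\mathcal{B}_{\succ(0,0)}$ and $\mathcal{B}_{\prec(0,0)}$, and orthogonality of $\mathcal{B}_{0,0}$ to both.
\end{enumerate}
Thus $r=p_{>0}-p_{<0}$ satisfies the mYBE. Using the associative-algebra form \eqref{eq_pi}, with $\nabla^r f=g\,df$ and $\nabla^l f=df\,g$ under the trace identification, \eqref{def_sklyanin} is the coboundary tensor \eqref{def_pi}. It is therefore multiplicative and Poisson, so $\mathrm{I}\Psi\DO$ is Poisson-Lie.

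One technical point needs care: $\Psi\DO$ consists of series bounded below in $\prec$, and $\Tr$ of a product must be a finite sum. For any fixed $n$, the set of $(n',m')\succ(n_0,m_0)$ with first coordinate $n$ may be infinite. I would handle this by working with the completed algebra where the product is well-defined, and checking that only finitely many pairs contribute to the $(0,0)$ coefficient. If this fails, I would restrict to the subalgebra of finite-support-in-$T_2$ truncations and pass to a limit.

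\textbf{Property (1).} Let $\J=\{j_0\prec\dots\prec j_k\}$ be consecutive. For $g\in\DO(\J)$ I will show the Hamiltonian vector field $\pi_g(x,\cdot)$ is tangent to $\DO(\J)$. Write $2\pi_g^\sharp(x)=g\,r(xg)-r(gx)\,g$ and use $r=2p_{>0}+p_0-\mathrm{id}$. Then
\[
2\pi_g^\sharp(x)=2g\,p_{>0}(xg)+g\,p_0(xg)-2p_{>0}(gx)\,g-p_0(gx)\,g,
\]
since $gxg$ cancels; this is a finite combination.

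Now $g\,p_{>0}(xg)$ has all terms $\succ j_0$. Equivalently, using $p_{>0}=\mathrm{id}-p_0-p_{<0}$, it equals $gxg-g\,p_0(xg)-g\,p_{<0}(xg)$, whose top terms are $\preceq j_k$ after cancellation against the matching terms from $p_{>0}(gx)g$. A two-sided degree comparison then shows that every term lies between $j_0$ and $j_k$. Consecutiveness is exactly what ensures that "between $j_0$ and $j_k$" means "in $\J$". This degree bookkeeping is the main obstacle, because the interval in lexicographic order is not a box, and I must track the bounds on both sides carefully.

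\textbf{Property (2).} If $\J=\{(n,m)\}$, then $g=aT_1^nT_2^m$ and $\mathrm{Ad}_g$ preserves each graded piece: conjugation by $T_1^nT_2^m$ and by multiplication operators preserves degree. Hence $(\mathrm{Ad}_g)_*\hat r=\hat r$, and Proposition \ref{pro_ad} gives $\pi_g=0$.

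\textbf{Property (3).} The elements $\alpha,\beta$ are quasi-periodic, not periodic. However, a pair with the same monodromy acts by $\D\mapsto\beta^{-1}\D\alpha$, which equals $(\beta^{-1}\alpha)\cdot(\alpha^{-1}\D\alpha)$ with $\beta^{-1}\alpha$ periodic and invertible of degree $(0,0)$. Conjugation by $\alpha$ preserves periodicity, the grading, and the trace, so it is an automorphism preserving $\hat r$. By Proposition \ref{pro_auto} it is Poisson. Left multiplication by $\beta^{-1}\alpha\in\mathrm{I}\Psi\DO(\{(0,0)\})$, where $\pi$ vanishes by (2), is Poisson by multiplicativity of $\pi$. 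Composing the two gives invariance.
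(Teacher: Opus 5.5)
Your proposal is correct and follows essentially the same route as the paper. For (1) you use the $r\pm\mathrm{id}$ trick to bound the support of $\D\,r(Q\D)-r(\D Q)\D$ below by $\min\J$ and above by $\max\J$. For (2) you apply Proposition \ref{pro_ad}. For (3) you split $\beta^{-1}\D\alpha$ into multiplication by a periodic degree-$(0,0)$ element, where $\pi$ vanishes, and conjugation by a quasi-periodic matrix, which is Poisson by Proposition \ref{pro_auto}.

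Two minor remarks. Under the trace pairing one actually has $\nabla^l f=g\,df$ and $\nabla^r f=df\,g$; you swapped them, which only flips the sign of the bracket. Your concern about products and $\Tr$ of lex-bounded-below series is legitimate, and the paper does not address it. It is resolved by requiring well-ordered supports (each row $n\geq n_0$ bounded below in $m$), where only finitely many pairs contribute to each coefficient.
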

\begin{proof}
	Recall that the formula \eqref{eq_pi} can be rewritten as
	\begin{align*}
		\pi_{g}(x,y)&=\frac{1}{2}\left(
		\langle r(xg),yg\rangle
		-
		\langle r(gx),gy\rangle
		\right)\\
		&=\frac{1}{2}\left(
		\langle gr(xg),y\rangle
		-
		\langle r(gx)g,y\rangle
		\right)
		=\frac{1}{2}\langle gr(xg)-r(gx)g,y\rangle.
	\end{align*}
	Thus,
	we can treat the tensor $\pi$ as a map $A\rightarrow A$:
	\begin{align*}
		\pi_g(x)=\frac{1}{2}\left(g r(xg)-r(gx)g\right),
	\end{align*}
	so we have
	$\pi_g(x,y)=\langle \pi_g(x),y\rangle.$
	In the case of $A=\Psi\DO$,
	the Poisson tensor $\pi_\D$ is given by
	\begin{align}\label{eq_dd}
		\pi_\D(Q)=\frac{1}{2}\left(\D r(Q\D)-r(\D Q)\D\right),
	\end{align}
	where $\D\in\mathrm{I}\Psi\DO$ and $Q\in\Psi\DO$.
	It is sufficient to prove that for $\D\in\mathrm{I}\Psi\DO(\J)$ the image of the Poisson tensor \eqref{eq_dd} belongs to  the tangent space to $\mathrm{I}\Psi\DO(\J)$ at $\D$.
	Equivalently,
	we need to prove that \eqref{eq_dd} is supported in $\J$ whenever $\D$ is supported in $\J$.
	Note that \eqref{eq_dd} is invariant under the transformation $r\rightarrow r\pm Id$.
	Since $r+Id=2p_{>0}+p_0$,
	the operator $r(Q\D)$ and $r(\D Q)$ have terms of  $T_1^nT_2^m$ with $(n,m)\succeq(0,0)$.
	Thus we get 
	\begin{align*}
		\min \,\text{supp}\,\pi_\D(Q)\succeq\min\,\text{supp}\,\D=\min\,\J.
	\end{align*}
	Similarly,
	consider the case of $r-Id$,
	we get $\text{supp}\pi_\D(Q)\preceq\max\,\J$.
	In a word,
	we have $\text{supp}\,\pi_\D(Q)\subset[\min\,\J,\max\,\J]=\J$ as desired.
	
	To prove the second statement,
	notice that if $\D$ is supported in a one-point set,
	then  conjugation by $\D$ preserves the subalgebras $\mathcal{B}_\pm$ and $\mathcal{B}_0$, as well as the inner product on $\Psi\DO$.
	According to Proposition \ref{pro_ad},
	we have $\pi(\D)=0$.
	
	To prove the third statement,
	we represent the left-right action \eqref{def_lr} as a superposition of two actions: one is of the same form, but with $(N,M)$-periodic $\alpha$, $\beta$,
	while the other one is conjugation action by a $(N,M)$-quasi-periodic $\gamma$.
	The former one is Poisson since the Poisson structure vanishes on scalar bi-infinite matrices,
	while the later is also Poisson because conjugation by bi-infinite matrices preserves  the subalgebras $\mathcal{B}_\pm$ and $\mathcal{B}_0$, as well as the inner product on $\Psi\DO$, and hence Poisson by Proposition \ref{pro_auto}.
	Thus the left-right action \eqref{def_lr} is Poisson.
\end{proof}
\begin{remark}
The central functions are defined as $f_{i,j,k}(\D):=\Tr\, T_{1}^{iN}T_2^{jM}\D^k$,
where $i,j\in\mathbb{Z}$ and $k\in\mathbb{N}$.
They  Poisson commute.
\end{remark}
\subsection{Explicit formulas for Poisson brackets of coordinate functions}
In this subsection, we utilize the Poisson-Lie framework established above to derive explicit coordinate-wise formulas for the Poisson brackets of \((N,M)\)-periodic pseudo partial difference operators.

Before proceeding with explicit coordinate computations, we clarify a slight abuse of notation regarding subscripts, which is standard but potentially ambiguous. When a coefficient $a$ appears without subscripts in an operator expression (e.g., $aT_1$), it represents the entire bi-infinite  matrix. 
If subscripts are attached to a coefficient acting as a matrix within an operator expression (e.g., $a_{n,m}T_1$), they denote a global grid shift of the matrix, effectively meaning the matrix elements are evaluated at shifted indices, $(a_{n,m})_{i,j} = a_{i+n,j+m}$. Conversely, when a coefficient $a$ appears in isolation as a scalar variable (such as the coordinate definition $a_{i,j}$), the subscripts denote its specific scalar component at the lattice point $(i,j)$.

\begin{proposition}
	Let $x=\sum_{i,j}a^{i,j}T_1^iT_2^j$,
	$y=\sum_{n,m}y^{n,m}T_1^nT_2^m$ be two generic elements in $\Psi\DO$.
	Let $f \in C^\infty(\Psi \DO)$ be a smooth function,
	then the left gradient is $\nabla^lf(x)=\sum_{n,m}A^{n,m}T_1^nT_2^m$,
	and the right gradient is $\nabla^rf(x)=\sum_{n,m}B^{n,m}T_1^nT_2^m$,
	where
	\begin{align*}
		A_{r,s}^{n,m}:=
		\sum_{i,j}\frac{\partial f}{\partial a_{r+n,s+m}^{i-n,j-m}}a_{r,s}^{i,j},\quad
		B_{r,s}^{n,m}:=\sum_{i,j}
		\frac{\partial f}{\partial a^{i-n,j-m}_{r+n-i,s+m-j}}a^{i,j}_{r+n-i,s+m-j}.
	\end{align*}
\end{proposition}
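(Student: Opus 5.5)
The plan is a direct first-order computation: expand both sides of the defining identities \eqref{def_lr} and read off $A^{n,m}$ and $B^{n,m}$ by comparing coefficients of the independent variables $y^{n,m}_{r,s}$, where $(r,s)$ runs over the fundamental domain $\{1,\dots,N\}\times\{1,\dots,M\}$. Since $\exp(ty)=1+ty+O(t^2)$, I would first reduce the definitions to
\begin{align*}
\langle \nabla^l f(x),y\rangle=\d f_x(yx),\qquad \langle \nabla^r f(x),y\rangle=\d f_x(xy),
\end{align*}
where $\d f_x(z)=\sum_{p,q}\sum_{r,s}\frac{\partial f}{\partial a^{p,q}_{r,s}}z^{p,q}_{r,s}$. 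This uses the coordinates $a^{p,q}_{r,s}$ on $\Psi\DO$, with $(r,s)$ in the fundamental domain and extended periodically.

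The second step is to compute the coefficients of the products with the commutation rule $T_1^nT_2^m\,a=a_{n,m}T_1^nT_2^m$, using the shifted-subscript convention fixed just before the proposition. This gives
\begin{align*}
(yx)^{p,q}_{r,s}=\sum_{n,m}y^{n,m}_{r,s}\,a^{p-n,q-m}_{r+n,s+m},\qquad
(xy)^{p,q}_{r,s}=\sum_{i,j}a^{i,j}_{r,s}\,y^{p-i,q-j}_{r+i,s+j}.
\end{align*}
Substituting these into $\d f_x$ produces a bilinear expression in $\partial f/\partial a$, $a$ and $y$.

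Next I would expand the left-hand sides with the trace pairing \eqref{def_inn}. Only the $(0,0)$-coefficient of $\nabla f\cdot y$ contributes, so
\begin{align*}
\langle \nabla f,y\rangle=\sum_{r,s}\sum_{n,m}(\nabla f)^{n,m}_{r,s}\,y^{-n,-m}_{r+n,s+m}.
\end{align*}
Matching the coefficient of each $y^{-n,-m}_{r+n,s+m}$ against the expressions from the second step then determines the gradients. To do this I would relabel the summation indices, replacing $p,q$ by $i,j$ and shifting $(r,s)\mapsto(r+n,s+m)$ or its inverse. I would also use $(N,M)$-periodicity of all coefficients to move every lattice index back into the fundamental domain. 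For the left gradient the relabeling should give $A^{n,m}_{r,s}=\sum_{i,j}\frac{\partial f}{\partial a^{i-n,j-m}_{r+n,s+m}}a^{i,j}_{r,s}$. The right gradient is analogous, but the shift now falls on $y$ rather than on $a$, which produces the doubly shifted indices in $B^{n,m}_{r,s}$.

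The main obstacle is bookkeeping rather than ideas. The sign of the exponent in the trace pairing (pairing $(n,m)$ with $(-n,-m)$) and the direction of the lattice shifts must be tracked so that the final formulas come out in exactly the stated form. At that step I would first test the formulas on monomials $x=aT_1^iT_2^j$ and $y=yT_1^nT_2^m$, to fix the sign and shift conventions before writing out the general sum. I would also note that the sums are finite, or at least formally well-defined, for generic pseudo-difference $x$, because only finitely many terms contribute to each fixed coefficient.
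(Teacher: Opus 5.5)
Your proposal is correct and follows essentially the same route as the paper. The paper also expands the derivative of $f(\exp(ty)x)$ at $t=0$ as $\sum \frac{\partial f}{\partial a^{i+n,j+m}_{r,s}}a^{i,j}_{r+n,s+m}y^{n,m}_{r,s}$, writes the trace pairing as $\sum y^{n,m}_{r,s}A^{-n,-m}_{r+n,s+m}$, compares coefficients of $y^{n,m}_{r,s}$, and treats the right gradient (via $xy$) "similarly"; your product coefficients, pairing formula and relabelings check out and reproduce exactly the stated $A^{n,m}_{r,s}$ and $B^{n,m}_{r,s}$.
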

\begin{proof}
	The left hand side of the first equation in \eqref{def_lr} is
	\begin{align}\label{eq1}
		\langle \nabla^lf(x),y\rangle=
		\Tr\left(
		\sum_{n,m,i,j}y^{n,m}A_{n,m}^{i,j}T_1^{i+n}T_2^{j+m}
		\right)
		=
		\sum_{n,m,r,s}y_{r,s}^{n,m}A^{-n,-m}_{r+n,s+m}.
	\end{align}
	The right hand side of the first equation in \eqref{def_lr} is
	\begin{align}\label{eq2}
		\left.\frac{\d}{\d t}\right|_{t=0}f(\exp(ty)x)
		=
		\sum_{i,j,n,m,r,s}\frac{\partial f}{\partial a_{r,s}^{i+n,j+m}}a^{i,j}_{r+n,s+m}y_{r,s}^{n,m}.
	\end{align}
	Comparing \eqref{eq1} and \eqref{eq2},
	we get
	\begin{align*}
		A_{r+n,s+m}^{-n,-m}=
		\sum_{i,j}\frac{\partial f}{\partial a_{r,s}^{i+n,j+m}}a^{i,j}_{r+n,s+m},
	\end{align*}
	or equivalently,
	\begin{align*}
		A_{r,s}^{n,m}=
		\sum_{i,j}\frac{\partial f}{\partial a_{r+n,s+m}^{i-n,j-m}}a_{r,s}^{i,j}.
	\end{align*}
	The right gradient can be obtained similarly.
\end{proof}
Having established the expressions for the left and right gradients of a general smooth function, 
we now apply this result specifically to the fundamental coordinate functions $a_{p,q}^{k,l}$ to obtain the explicit gradients required for computing the Poisson structure.
\begin{coro}\label{coro_coordinate}
	Let $x=\sum_{i,j}a^{i,j}T_1^iT_2^j$, 
	the left gradient of the coordinate function $a_{p,q}^{k,l}$
	is 
	\begin{align*}
		\nabla^la_{p,q}^{k,l}
		=
		\sum_{n,m}a^{n+k,m+l}e_{p-n,q-m}T_1^nT_2^m,
	\end{align*}
	while the right gradient is
	\begin{align*}
		\nabla^ra_{p,q}^{k,l}
		=
		\sum_{n,m}a^{n+k,m+l}_{-k,-l}e_{p+k,q+l}T_1^nT_2^m,
	\end{align*}
	where $e_{i,j}\in\mat$ denotes the standard basis matrix with $1$ at the $(i,j)$-th entry and $0$ elsewhere.
\end{coro}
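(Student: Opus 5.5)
The plan is to specialize the preceding proposition to $f=a_{p,q}^{k,l}$, where the only nonzero partial derivative is $\partial f/\partial a^{k,l}_{p,q}=1$, and then read off the coefficients.

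\textbf{Left gradient.} In
$$A^{n,m}_{r,s}=\sum_{i,j}\frac{\partial f}{\partial a^{i-n,j-m}_{r+n,s+m}}\,a^{i,j}_{r,s},$$
a term survives only when $i-n=k$, $j-m=l$ and $(r+n,s+m)=(p,q)$. Here I use that the coefficient $a^{n,m}$ is determined by its values on the fundamental domain, so we work in the periodic sense. This gives
$$A^{n,m}_{r,s}=a^{n+k,m+l}_{r,s}\,\delta_{(r,s),(p-n,q-m)}.$$
As a bi-infinite matrix, $A^{n,m}$ is therefore $a^{n+k,m+l}e_{p-n,q-m}$, with the product taken entrywise. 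Summing against $T_1^nT_2^m$ yields the stated formula for $\nabla^l a^{k,l}_{p,q}$.

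\textbf{Right gradient.} In
$$B^{n,m}_{r,s}=\sum_{i,j}\frac{\partial f}{\partial a^{i-n,j-m}_{r+n-i,s+m-j}}\,a^{i,j}_{r+n-i,s+m-j},$$
the surviving index must satisfy $i=n+k$, $j=m+l$ and $(r+n-i,\,s+m-j)=(p,q)$. The last condition becomes $r-k=p$, $s-l=q$, i.e. $(r,s)=(p+k,q+l)$. The coefficient is then $a^{n+k,m+l}_{r-k,s-l}$.

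In the paper's shifted-subscript notation this is the entry at $(r,s)$ of $a^{n+k,m+l}_{-k,-l}$. Hence
$$B^{n,m}=a^{n+k,m+l}_{-k,-l}\,e_{p+k,q+l},$$
which gives the stated right gradient.

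\textbf{Main obstacle.} The only real difficulty is index bookkeeping. One must keep straight the two roles of subscripts: a scalar component versus a global grid shift of the matrix. One must also interpret $e_{p,q}$ periodically, i.e. summed over the $(N,M)$-translates, so that the gradient is itself $(N,M)$-periodic.

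As a check, I will verify directly that $\Tr\bigl(\nabla^l a^{k,l}_{p,q}\, y\bigr)=\frac{\d}{\d t}\big|_{t=0}\, a^{k,l}_{p,q}\bigl((1+ty)x\bigr)$. The right-hand side equals $\sum_{n,m}y^{n,m}_{p,q}\,a^{k-n,l-m}_{p+n,q+m}$. The trace pairing picks out exactly this sum from the displayed formula after reindexing $n\to -n$, which confirms the formula independently of the general proposition.
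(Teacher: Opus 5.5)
Your proposal is correct and follows the paper's approach: the paper gives no separate proof and treats the corollary as an immediate specialization of the preceding proposition to $f=a^{k,l}_{p,q}$, and your index bookkeeping for both $A^{n,m}$ and $B^{n,m}$, including the shifted-subscript reading of $a^{n+k,m+l}_{-k,-l}$, is right. Your remark that $e_{p,q}$ must be read $(N,M)$-periodically, so that the gradient lies in $\Psi\mathrm{DO}_{N,M}$ and the trace pairing works, is a genuine clarification the paper leaves implicit, and your direct trace check also holds.
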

Now  we can substitute these gradients directly into the underlying Sklyanin bracket defined in \eqref{def_sklyanin} to explicitly evaluate the Poisson brackets between any pair of coordinate functions.
\begin{theorem}\label{the_poi}
	Let $x=\sum_{i,j}a^{i,j}T_1^iT_2^j$,
	the Poisson bracket for coordinate functions is
	\begin{align}\label{Poisson_coordinate}
		\begin{split}
			\{a_{p,q}^{k,l},a_{r,s}^{u,v}\}
			=&
			\frac{1}{2}a_{p,q}^{u+r-p,v+s-q}a_{r,s}^{k+p-r,l+q-s}\\
			&\cdot(
			\sgn_{r+u-p-k}+\delta_{p+k,r+u}\sgn_{s+v-q-l}
			-\sgn_{p-r}-\delta_{p,r}\sgn_{q-s}),
		\end{split}
	\end{align}
	where $\delta_{x,y}$ denote the Kronecker delta, defined to be $1$ if $x=y$
	and $0$
	otherwise,
	while $\sgn_x=x/|x|$ for $x\neq 0$ and $\sgn_0=0$. 
\end{theorem}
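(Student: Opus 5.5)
The plan is to substitute the gradients of Corollary \ref{coro_coordinate} directly into the Sklyanin bracket \eqref{def_sklyanin} and compute the two terms $\langle r(\nabla^r f_1),\nabla^r f_2\rangle$ and $\langle r(\nabla^l f_1),\nabla^l f_2\rangle$ separately, with $f_1=a^{k,l}_{p,q}$ and $f_2=a^{u,v}_{r,s}$. Since $r=p_{>0}-p_{<0}$ acts diagonally on monomials $T_1^nT_2^m$, we have
\[
r\Big(\sum_{n,m}c^{n,m}T_1^nT_2^m\Big)=\sum_{n,m}\epsilon(n,m)\,c^{n,m}T_1^nT_2^m,
\]
where $\epsilon(n,m)=1$ if $(n,m)\succ(0,0)$, $\epsilon(n,m)=-1$ if $(n,m)\prec(0,0)$, and $\epsilon(0,0)=0$. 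In the lexicographic order this sign is exactly
\[
\epsilon(n,m)=\sgn_n+\delta_{n,0}\sgn_m .
\]
This identity is the source of the paired $\sgn+\delta\,\sgn$ structure in \eqref{Poisson_coordinate}.

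For the left term, I will write $\nabla^l a^{k,l}_{p,q}=\sum_{n,m}a^{n+k,m+l}e_{p-n,q-m}T_1^nT_2^m$ and similarly for $a^{u,v}_{r,s}$. I then compute
\[
\Tr\big(c\,T_1^nT_2^m\,c'\,T_1^{n'}T_2^{m'}\big),
\]
which is nonzero only when $n'=-n$ and $m'=-m$. In that case it equals $\sum_{i,j} c_{i,j}\,c'_{i+n,j+m}$. With $c,c'$ supported on single lattice points $e_{p-n,q-m}$ and $e_{r+n,s+m}$, the trace forces $p-n=r$ and $q-m=s$. This gives $n=p-r$, $m=q-s$, and a single surviving term of the form $\epsilon(p-r,q-s)\,a^{u+r-p,\,v+s-q}_{p,q}\,a^{k+p-r,\,l+q-s}_{r,s}$, possibly after a periodic shift of indices. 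This produces the term $-(\sgn_{p-r}+\delta_{p,r}\sgn_{q-s})$ together with the prefactor.

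The right term is handled the same way using $\nabla^r a^{k,l}_{p,q}=\sum a^{n+k,m+l}_{-k,-l}e_{p+k,q+l}T_1^nT_2^m$. Matching the support points $p+k$ and $r+u$ now forces the shift $(n,m)=(r+u-p-k,\,s+v-q-l)$. This yields $\epsilon(r+u-p-k,\,s+v-q-l)$, that is, $\sgn_{r+u-p-k}+\delta_{p+k,r+u}\sgn_{s+v-q-l}$, multiplying the same product of coefficients.

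The main obstacle is the index bookkeeping. Three things must be checked: that the coefficient products coming from the left and right terms coincide as $a^{u+r-p,v+s-q}_{p,q}\,a^{k+p-r,l+q-s}_{r,s}$, using the shift convention $(a_{n,m})_{i,j}=a_{i+n,j+m}$; that indices are read modulo $(N,M)$ consistently with periodicity; and that the sign from $\langle r(\cdot),\cdot\rangle$ lands on the correct shift ($n$ versus $-n$). I would first verify the formula on a simple case, for instance $k=l=u=v=0$ with $(p,q)\neq(r,s)$, where the bracket should vanish by Proposition \ref{pro_sub}(2), and use that case to fix the signs before doing the general bookkeeping.
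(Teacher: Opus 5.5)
Your proposal is correct and follows the paper's proof essentially step for step. You substitute the gradients of Corollary~\ref{coro_coordinate} into \eqref{def_sklyanin} and use that $r$ multiplies $T_1^nT_2^m$ by $\sgn_n+\delta_{n,0}\sgn_m$. The trace pairing then collapses the left and right terms to the single shifts $(n,m)=(p-r,q-s)$ and $(n,m)=(r+u-p-k,\,s+v-q-l)$, both carrying the factor $a^{u+r-p,v+s-q}_{p,q}a^{k+p-r,l+q-s}_{r,s}$.

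One caveat, which the paper shares, concerns the modulo-$(N,M)$ check you list. The single-term collapse treats $e_{i,j}$ as a one-site indicator, but with periodic coefficients the pairing only forces $n\equiv p-r \pmod{N}$ and $m\equiv q-s \pmod{M}$ (and analogously for the right term). So the stated formula is the contribution of a single periodic image, and that image is the only one when the supports are small compared with $N$ and $M$.
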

\begin{proof}
	According to \eqref{def_sklyanin},
	we have
	\begin{align}\label{pd0}
		\{a_{p,q}^{k,l},a_{r,s}^{u,v}\}
		=
		\frac{1}{2}
		\left(
		\langle r(\nabla^ra_{p,q}^{k,l}),\nabla^ra_{r,s}^{u,v}\rangle
		-\langle r(\nabla^la_{p,q}^{k,l}),\nabla^la_{r,s}^{u,v}
		\rangle
		\right).
	\end{align}
	Using the formulas in  Corollary \ref{coro_coordinate},
	we have
	\begin{align}\label{pd1}
		\begin{split}
			&\langle r(\nabla^ra_{p,q}^{k,l}),\nabla^ra_{r,s}^{u,v}\rangle\\
			=&\,
			\langle
			r(
			\sum_{n,m} a_{-k,-l}^{n+k,m+l}e_{p+k,q+l}T_1^nT_2^m),
			\,
			\sum_{n,m}
			a_{-u,-v}^{n+u,m+v}e_{r+u,s+v}T_1^nT_2^m
			\rangle
			\\
			=&
			\left(\sum_{n>0,m}-\sum_{n<0,m} \right)
			a_{-k,-l}^{n+k,m+l}e_{p+k,q+l}a_{n-u,m-v}^{u-n,v-m}e_{r+u-n,s+v-m}\\
			&+\left(\sum_{m>0}-\sum_{m<0}\right)
			a_{-k,-l}^{k,m+l}e_{p+k,q+l}a_{-u,m-v}^{u,v-m}e_{r+u,s+v-m}\\
			=&\,
			a_{p,q}^{u+r-p,v+s-q}a_{r,s}^{k+p-r,l+q-s}\left(\sgn_{r+u-p-k}+\delta_{p+k,r+u}\sgn_{s+v-q-l}\right).
		\end{split}
	\end{align}
	Similarly,
	we get
	\begin{align}\label{pd2}
		\langle r(\nabla^la_{p,q}^{k,l}),\nabla^la_{r,s}^{u,v}
		\rangle
		=a_{p,q}^{u+r-p,v+s-q}a_{r,s}^{k+p-r,l+q-s}
		(
		\sgn_{p-r}+\delta_{p,r}\sgn_{q-s}).
	\end{align}
	Substituting \eqref{pd1} and \eqref{pd2} into \eqref{pd0},
	we obtain the Poisson bracket.
\end{proof}

It is worth emphasizing that while the pseudo-difference operators and their gradients $\nabla^{l}f, \nabla^{r}f$ (as derived in Corollary \ref{coro_coordinate}) are formal Laurent series involving infinite summations, the explicit Poisson brackets of coordinate functions in Theorem \ref{the_poi} evaluate to   finite expressions. 
This remarkable algebraic collapse occurs because the trace inner product perfectly pairs specific matching terms from the infinite series, which, together with the projection operator $r = p_{>0} - p_{<0}$, strictly annihilates the rest of the formal sum, yielding the clean finite formulas in \eqref{Poisson_coordinate}.
\subsection{Reductions of partial difference operators}
In this subsection, we investigate some reductions for $(N,M)$-periodic partial difference operators: the reduction to ordinary (1D) difference operators and the reduction to sparse operators supported on sub-lattices.

A general pseudo partial difference operator is $x=\sum_{i,j}a^{i,j}T_1^iT_2^j$.
Here,
we consider a special case $i\equiv 0$ and let $T=T_2$,
that is $x=\sum a^iT^i$.
\begin{proposition}
	Let $x=\sum a^iT^i$, then
	\begin{align*}
		\nabla^lf(x)=\sum A^iT^i,\quad
		A_n^i:=\sum_m \frac{\partial f}{\partial a_{n+i}^m}a_n^{m+i},
	\end{align*}
	and
	\begin{align*}
		\nabla^rf(x)=\sum B^iT^i,\quad
		B^i_n:=\sum_m \frac{\partial f}{\partial a_{n-m}^m}a_{n-m}^{m+i}.
	\end{align*}
	In particular,
	for coordinate functions $f_q^l(x)=a^l_q$,
	we have 
	\begin{align*}
	\nabla^la_q^l=\sum_i a^{l+i}e_{q-i}T^i,\quad
	\nabla^ra_q^l=\sum_i a_{-l}^{l+i}e_{l+q}T^i,
	\end{align*}
	where $e_{i}\in\mathbb{R}^\infty$ denotes the sequence with $1$ at the $i$-th entry and $0$ elsewhere.
\end{proposition}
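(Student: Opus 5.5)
The plan is to specialize the general gradient formulas of the preceding proposition (and Corollary \ref{coro_coordinate}) to operators supported on the line $\{0\}\times\mathbb{Z}$. I would also check the result directly from the definition \eqref{def_lr}, since the index bookkeeping is the only real content. Put $x=\sum_i a^iT^i$ and a test element $y=\sum_n y^nT^n$, with $T=T_2$. All coefficients are bi-infinite sequences acting by multiplication, and the commutation rule $T^n a=a_nT^n$ holds, where $(a_n)_j=a_{j+n}$. The trace is $\Tr(\sum_i c^iT^i)=\sum_{j=1}^{M}c^0_j$, with the periodic sum written formally as $\sum_j$ as in the paper.

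For the left gradient, I would expand both sides of $\langle\nabla^l f(x),y\rangle=\frac{\d}{\d t}|_{t=0}f(\exp(ty)x)$. On the right, the first-order variation of $x$ is $yx=\sum_{n,m}y^n a^m_n T^{n+m}$. The coefficient of $T^{k}$ at site $r$ is therefore $\sum_n y^n_r a^{k-n}_{r+n}$, and the derivative equals $\sum_{k,n,r}\frac{\partial f}{\partial a^k_r}\,y^n_r\,a^{k-n}_{r+n}$. On the left, the $T^0$ coefficient of $\nabla^l f\cdot y=\sum_{i,n}A^i y^n_i T^{i+n}$ is $\sum_i A^i y^{-i}_i$, so $\langle\nabla^lf,y\rangle=\sum_{r,i}A^i_r y^{-i}_{r+i}$. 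Setting $n=-i$ and $s=r+i$ and matching the coefficients of $y^{-i}_{s}$ gives
\[
A^i_{s-i}=\sum_k \frac{\partial f}{\partial a^k_s}\,a^{k+i}_{s-i}.
\]
Renaming $s-i=n$ and $k=m$ yields $A^i_n=\sum_m \frac{\partial f}{\partial a^m_{n+i}}a^{m+i}_n$, as claimed.

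For the right gradient, the variation is $xy=\sum_{m,n}a^m y^n_m T^{m+n}$. The derivative becomes $\sum \frac{\partial f}{\partial a^{k}_r}a^m_r y^{k-m}_{r+m}$. The pairing $\langle\nabla^rf,y\rangle$ has the same form $\sum_{r,i}B^i_r y^{-i}_{r+i}$; cyclicity of the trace ensures it makes no difference which side $y$ sits on. I would match the coefficient of $y^{-i}_{s}$, setting $k-m=-i$ and $r+m=s$, so that $r=s-m$ and $k=m-i$. The natural output is an expression in $a^{m}_{s-m}$ and $\partial f/\partial a^{m-i}_{s-m}$, which then has to be rewritten as $B^i_{s-i}$ and reindexed.

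This reindexing is the step I expect to be the main obstacle. Forcing it into the stated form $B^i_n=\sum_m\frac{\partial f}{\partial a^m_{n-m}}a^{m+i}_{n-m}$ requires shifting the summation variable $m\mapsto m+i$ and $n=s-i$ consistently. I would carry this out slowly, and cross-check the result against the 2D formula for $B$ with the $T_1$-indices set to zero.

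Finally, I would substitute $f=a^l_q$, so that $\partial f/\partial a^m_p=\delta_{m,l}\delta_{p,q}$. In the left formula this forces $n+i=q$, giving $A^i=a^{l+i}e_{q-i}$ and hence $\nabla^la^l_q=\sum_i a^{l+i}e_{q-i}T^i$. In the right formula it forces $m=l$ and $n=q+l$, giving $B^i=a^{l+i}_{-l}e_{l+q}$. Both agree with Corollary \ref{coro_coordinate} at $k=p=0$, which serves as a consistency check.
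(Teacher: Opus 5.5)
Your proposal is correct and follows the paper's route: the paper obtains these formulas by restricting the two-dimensional gradient formulas and Corollary~\ref{coro_coordinate} to operators in $T_2$ alone, and your direct check from \eqref{def_lr} is that same computation redone in one variable. The reindexing you flag as the main obstacle goes through exactly as you describe: substituting $m\mapsto m+i$ and $n=s-i$ into $B^i_{s-i}=\sum_m \frac{\partial f}{\partial a^{m-i}_{s-m}}a^m_{s-m}$ gives $B^i_n=\sum_m\frac{\partial f}{\partial a^m_{n-m}}a^{m+i}_{n-m}$.
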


\begin{proposition}
	The Poisson bracket for coordinate functions is
	\begin{align}\label{poisson}
		\{a_q^l,a_s^v\}=
		\frac{1}{2}a_{q}^{v+s-q}a_s^{l+q-s}\left(\sgn_{s+v-q-l}-\sgn_{q-s}\right).
	\end{align}
\end{proposition}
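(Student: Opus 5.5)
The plan is to obtain \eqref{poisson} as the one-dimensional reduction of Theorem \ref{the_poi}. Setting every coefficient with nonzero first index equal to zero ($i\equiv 0$), and discarding the dependence on the first lattice coordinate, reproduces the operator $x=\sum a^iT^i$ with $T=T_2$. I would not invoke the two-dimensional formula as a black box, because the restriction to this subspace must be checked to be compatible with the Sklyanin bracket. Instead I would redo the computation of Theorem \ref{the_poi} directly, using the gradients of the coordinate functions from the preceding proposition:
\begin{align*}
\nabla^la_q^l=\sum_i a^{l+i}e_{q-i}T^i,\qquad \nabla^ra_q^l=\sum_i a_{-l}^{l+i}e_{l+q}T^i .
\end{align*}
In one dimension the lexicographic decomposition collapses to $\mathcal{B}_{>0}\oplus\mathcal{B}_0\oplus\mathcal{B}_{<0}$ by powers of $T$, so $r(\sum c^iT^i)=\sum_{i>0}c^iT^i-\sum_{i<0}c^iT^i$.

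The computation has two parts. First I would evaluate the right-gradient term $\langle r(\nabla^r a_q^l),\nabla^r a_s^v\rangle$. The trace only keeps the $T^0$ coefficient of the product, so the $T^i$ term of the first factor pairs with the $T^{-i}$ term of the second. Commuting $T^i$ past the diagonal matrix $a^{v-i}_{-v}e_{v+s}$ shifts it to $a^{v-i}_{i-v}e_{v+s-i}$. The product $e_{l+q}e_{v+s-i}$ then forces $i=s+v-q-l$, and $r$ contributes the factor $\sgn_i$. Evaluating the surviving coefficients at that lattice point should give
\begin{align*}
a_q^{v+s-q}a_s^{l+q-s}\,\sgn_{s+v-q-l}.
\end{align*}
Second, I would treat the left-gradient term in the same way. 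There the basis matrices are $e_{q-i}$ and $e_{s+i}$ after the shift, which forces $i=q-s$ and should yield
\begin{align*}
a_q^{v+s-q}a_s^{l+q-s}\,\sgn_{q-s}.
\end{align*}
Substituting both terms into \eqref{def_sklyanin} with the prefactor $\tfrac12$ gives \eqref{poisson}. As a consistency check, the result should agree with Theorem \ref{the_poi} specialized to $k=u=0$, $p=r$: the $\delta$-terms become active, the first-index signs vanish, and one recovers exactly \eqref{poisson}. It should also agree with the bracket in \cite{refactorization2020}.

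The main obstacle is the index bookkeeping in the shift convention. Specifically, one must verify that in the right-gradient term the coefficients $a^{l+i}_{-l}$ and $a^{v-i}_{i-v}$, evaluated at the lattice points $l+q$ and $v+s-i$, become $a_q^{v+s-q}$ and $a_s^{l+q-s}$ once $i=s+v-q-l$ is imposed. The other concern is periodicity: the bracket is really the sum over periodic copies of $e_n$. I would handle this by interpreting subscripts modulo $M$, and I expect the single-term form to hold as in Theorem \ref{the_poi}.
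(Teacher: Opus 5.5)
Your proposal is correct, and the computation checks out. In the right-gradient term, the constraint $l+q=v+s-i$ gives the evaluations $a^{l+i}_{q}=a_q^{v+s-q}$ and $a^{v-i}_{s}=a_s^{l+q-s}$. The left-gradient term gives $\sgn_{q-s}$ with the same prefactor.

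Your route differs from the paper's mainly in emphasis. The paper does no computation here. It asserts that \eqref{poisson} follows by reducing Theorem~\ref{the_poi} to $k=u=0$, $p=r$, so that $\sgn_{r+u-p-k}=\sgn_{p-r}=0$ and only the $\delta$-terms survive, and it notes agreement with \cite{refactorization2020}. That reduction is exactly your ``consistency check''.

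You instead run the trace computation intrinsically in the one-dimensional group, with $r$ acting by $\sgn$ of the power of $T$. This makes your argument self-contained. It does not rely on the step the paper leaves implicit: that the two-dimensional bracket of the coordinates $a^{0,l}_{p,q}$, evaluated at one-dimensional points, is the one-dimensional Sklyanin bracket. The paper's route is shorter once that step is noted, and the step is easy. At a point with $a^{n,m}=0$ for $n\neq 0$, the gradients in Corollary~\ref{coro_coordinate} with $k=0$ have only $n=0$ terms, so they coincide with the one-dimensional gradients, and on such terms the lexicographic $r$ acts by $\sgn_m$.

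One bookkeeping slip to fix when you write it out. In the left-gradient term, conjugation by $T^i$ sends $e_{s+i}$ to $e_s$, since $T^i e_{s+i} T^{-i}=e_s$. It is $e_{q-i}e_s\neq 0$ that forces $i=q-s$; read literally, ``$e_{q-i}$ and $e_{s+i}$ after the shift'' would give $2i=q-s$. With this correction, the surviving coefficients are $a^{l+i}_{s}=a_s^{l+q-s}$ and $a^{v-i}_{s+i}=a_q^{v+s-q}$, as you claim.

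On periodicity, you are at the same level of rigour as the paper, which also uses the non-periodic $e_{i}$ formally. If you genuinely replace $e_n$ by its $M$-periodic indicator, the trace keeps every $i$ in the relevant residue class mod $M$. A single term then survives only for suitable integer lifts of $q,s$ and supports narrower than $M$. The paper's formula carries the same implicit convention.
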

These results follow directly from the reduction of partial difference operators and coincide with those in \cite{refactorization2020}.

Another reduction is to sparse operators. A pseudo partial difference operator is called $(p_1,p_2)$ and $(q_1,q_2)$ sparse if its support is an arithmetic progression with steps $(p_1,p_2)$ and $(q_1,q_2)$, where $(p_1,p_2), (q_1,q_2) \in \mathbb{Z}^2$ are linearly independent. 
It is worth noting that the choice of such step vectors is not unique; rather, they serve as a $\mathbb{Z}$-basis for the underlying sublattice of the support.
\begin{example}
The operator $1+T_2^2+T_1^3$ is $(0,2)$ and $(3,0)$ sparse,
while the operator $T_1^{-1}T_2^{-1}+1+T_1T_2^{-1}+T_1T_2+T_1^2$ is $(1,1)$ and $(2,0)$ sparse.
\end{example}

Denote the set of invertible $(p_1,p_2)$ and $(q_1,q_2)$ sparse pseudo partial difference operators by $\mathrm{I}\Psi\DO(\mathbb{Z}\cdot(p_1,p_2)+\mathbb{Z}\cdot(q_1,q_2)+*)$,
which is a Lie subgroup of $\mathrm{I}\Psi\DO$.
The corresponding Lie algebra is the space $\Psi\DO(\mathbb{Z}\cdot(p_1,p_2)+\mathbb{Z}\cdot(q_1,q_2))$ of pseudo partial difference operators supported in $\mathbb{Z}\cdot(p_1,p_2)+\mathbb{Z}\cdot(q_1,q_2)$.

However it is not a Poisson-Lie subgroup of $\mathrm{I}\Psi\DO$ in general.
In the following,
we consider a simple case $N=M$,
and we define a different Poisson structure $\pi^{p,q}$ on $\mathrm{I}\Psi\mathrm{DO}_{N,N}(\mathbb{Z}\cdot(p_1,p_2)+\mathbb{Z}\cdot(q_1,q_2)+*)$,
which has the same properties as the Poisson structure on $\mathrm{I}\Psi\mathrm{DO}_{N,N}$.
When $\gcd(p_1 q_2 - p_2 q_1, N) = 1$,
there is a group isomorphism $\mathrm{I}\Psi\mathrm{DO}_{N,N}(\mathbb{Z}\cdot(p_1,p_2)+\mathbb{Z}\cdot(q_1,q_2))\simeq\mathrm{I}\Psi\mathrm{DO}_{N,N}$:
\begin{align*}
\sum_{i,j}a^{ip_1+jq_1,ip_2+jq_2}T_1^{ip_1+jq_1}T_2^{ip_2+jq_2}
\mapsto
\sum_{i,j}\tilde{a}^{i,j}S_1^iS_2^j,
\end{align*}
where $S_1:=T_1^{p_1}T_2^{p_2}$, $S_2:=T_1^{q_1}T_2^{q_2}$ and $\tilde{a}^{i,j}_{n,m}=a^{ip_1+jq_1,ip_2+jq_2}_{np_1+mq_1,np_2+mq_2}$.
\begin{lemma}\label{lem:bijection}
	Let $\mathbf{S} = \begin{pmatrix} p_1 & q_1 \\ p_2 & q_2 \end{pmatrix}$ be the step matrix associated with the sparse operators. If the determinant of $\mathbf{S}$ is coprime to the period $N$, i.e., $\gcd(p_1 q_2 - p_2 q_1, N) = 1$, then the linear index transformation
	\begin{equation}
		\begin{pmatrix} i' \\ j' \end{pmatrix} = \begin{pmatrix} p_1 & q_1 \\ p_2 & q_2 \end{pmatrix} \begin{pmatrix} i \\ j \end{pmatrix} \pmod N
	\end{equation}
	is a bijection on the discrete torus $\mathbb{Z}_N \times \mathbb{Z}_N$.
\end{lemma}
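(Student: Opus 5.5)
The map $\Phi:(i,j)\mapsto \mathbf{S}(i,j)^T \bmod N$ is well defined on $\mathbb{Z}_N\times\mathbb{Z}_N$, since $\mathbf{S}$ has integer entries and is $\mathbb{Z}$-linear, so it respects congruence modulo $N$. The torus is finite with $N^2$ elements. It therefore suffices to show that $\Phi$ is injective, or alternatively to exhibit an explicit inverse. I would construct the inverse, because it gives a formula that can later be used to write down the inverse group isomorphism $\mathrm{I}\Psi\mathrm{DO}_{N,N}\to\mathrm{I}\Psi\mathrm{DO}_{N,N}(\mathbb{Z}\cdot(p_1,p_2)+\mathbb{Z}\cdot(q_1,q_2))$.

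\emph{Step 1 (inverse of the determinant).} Set $\Delta:=p_1q_2-p_2q_1$. By hypothesis $\gcd(\Delta,N)=1$, so Bézout's identity gives integers $u,v$ with $u\Delta+vN=1$. Hence $\Delta$ is a unit in $\mathbb{Z}_N$, with inverse $u$.

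\emph{Step 2 (adjugate).} Let $\operatorname{adj}\mathbf{S}=\begin{pmatrix} q_2 & -q_1\\ -p_2 & p_1\end{pmatrix}$. This satisfies $\operatorname{adj}\mathbf{S}\,\mathbf{S}=\mathbf{S}\,\operatorname{adj}\mathbf{S}=\Delta I$ over $\mathbb{Z}$. Define $\mathbf{S}^{-1}_N:=u\,\operatorname{adj}\mathbf{S}$, which has integer entries. Then
\begin{align*}
\mathbf{S}^{-1}_N\mathbf{S}=\mathbf{S}\,\mathbf{S}^{-1}_N=u\Delta I\equiv I \pmod N.
\end{align*}
So the map $(i',j')\mapsto \mathbf{S}^{-1}_N(i',j')^T \bmod N$ is a two-sided inverse of $\Phi$ on $\mathbb{Z}_N\times\mathbb{Z}_N$, and $\Phi$ is a bijection.

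\emph{Alternative route.} One could argue by injectivity instead. Suppose $\mathbf{S}x\equiv 0\pmod N$. Multiplying by $\operatorname{adj}\mathbf{S}$ gives $\Delta x\equiv 0\pmod N$, and coprimality then forces $x\equiv 0$.

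\emph{Where the care is needed.} The argument is elementary, and there is no real obstacle. The only point to watch is checking that $\Phi$ is well defined modulo $N$ and that the inverse is an integer matrix: one must use $u\operatorname{adj}\mathbf{S}$ rather than $\mathbf{S}^{-1}$ over $\mathbb{Q}$. This is exactly where $\gcd(\Delta,N)=1$ enters.

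It is also worth recording why the hypothesis cannot be dropped. If $g=\gcd(\Delta,N)>1$, then $\det\mathbf{S}$ reduces to a non-unit modulo a prime $p\mid g$. The reduced matrix then has nontrivial kernel over $\mathbb{F}_p$, and lifting a kernel vector gives a nonzero $x\in\mathbb{Z}_N^2$, namely $(N/p)$ times the lift, with $\mathbf{S}x\equiv0$. So the coprimality condition is also necessary.
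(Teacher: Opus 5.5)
Your proof is correct and takes the same approach as the paper. The paper simply cites the standard fact that a matrix over $\mathbb{Z}_N$ is invertible iff its determinant is a unit in $\mathbb{Z}_N$, i.e.\ iff $\gcd(\Delta,N)=1$; you prove the needed direction explicitly with Bézout and the integer inverse $u\,\operatorname{adj}\mathbf{S}$, and your necessity argument matches the paper's ``iff'' conclusion.
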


\begin{proof}
	The index transformation is a linear map on $\mathbb{Z}_N \times \mathbb{Z}_N$ defined by the matrix $\mathbf{S}$. 
	By standard matrix algebra over the ring $\mathbb{Z}_N$, $\mathbf{S}$ admits an inverse if and only if its determinant $\Delta = p_1 q_2 - p_2 q_1$ is an invertible element in $\mathbb{Z}_N$. This holds true if and only if $\gcd(\Delta, N) = 1$. 	
 Consequently, the mapping $(i, j)^T \mapsto (i', j')^T \pmod N$ is a one-to-one correspondence iff $\gcd(\Delta, N) = 1$. 
\end{proof}
According to this Lemma,
this coprimality condition ensures that the redefined coefficients $\tilde{a}_{n,m}^{i,j}$ perfectly cover the $(N,N)$-periodic grid without overlapping or missing entries, establishing the isomorphism.
Besides, the new coefficients $\tilde{a}_{n,m}^{i,j}$ are also $(N,N)$-periodic.
The Poisson structure $\pi^{p,q}$ can be defined as the pull-back of the structure $\pi$ by this isomorphism.
Note that the operators in $\mathrm{I}\Psi\mathrm{DO}_{N,N}(\mathbb{Z}\cdot(p_1,p_2)+\mathbb{Z}\cdot(q_1,q_2))\simeq\mathrm{I}\Psi\mathrm{DO}_{N,N}$ admit a similar lexicographical order.
If we require that the resulting structure is invariant under  multiplications by $T_1$, $T_2$,
we can uniquely extend $\pi^{p,q}$ to the whole group $\mathrm{I}\Psi\mathrm{DO}_{N,N}(\mathbb{Z}\cdot(p_1,p_2)+\mathbb{Z}\cdot(q_1,q_2)+*)$.
\begin{proposition}
There is a Poisson structure $\pi^{p,q}$ on the group $\mathrm{I}\Psi\mathrm{DO}_{N,N}(\mathbb{Z}\cdot(p_1,p_2)+\mathbb{Z}\cdot(q_1,q_2)+*)$.
If $\J\subset\mathbb{Z}^2$ is an arithmetic progression with common difference $(q_1,q_2)$,
then $\mathrm{IDO}_{N,N}(\J)$ is a Poisson submanifold of $\mathrm{I}\Psi\mathrm{DO}_{N,N}(\mathbb{Z}\cdot(p_1,p_2)+\mathbb{Z}\cdot(q_1,q_2)+*)$.
\end{proposition}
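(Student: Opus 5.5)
The plan is to transport everything through the group isomorphism $\Phi:\mathrm{I}\Psi\mathrm{DO}_{N,N}(\mathbb{Z}\cdot(p_1,p_2)+\mathbb{Z}\cdot(q_1,q_2))\to\mathrm{I}\Psi\mathrm{DO}_{N,N}$, $\sum \tilde a^{i,j}S_1^iS_2^j\mapsto\sum\tilde a^{i,j}T_1^iT_2^j$. This map is well defined and bijective on coefficients by Lemma \ref{lem:bijection}, using $\gcd(p_1q_2-p_2q_1,N)=1$.

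\textbf{Step 1: $\Phi$ is a group isomorphism.} I first check that $\Phi$ is multiplicative. Since $S_1,S_2$ commute, and $S_1a=a_{p_1,p_2}S_1$ while $T_1\tilde a=\tilde a_{1,0}T_1$, the reindexing $\tilde a_{n,m}=a_{np_1+mq_1,np_2+mq_2}$ intertwines the shift rules. Then $\pi^{p,q}:=\Phi^*\pi$ is well defined. Because $\Phi$ is a group isomorphism, the Poisson-Lie property of Proposition \ref{pro_sub} is inherited.

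\textbf{Step 2: extension to the cosets.} The full group is $\mathrm{I}\Psi\mathrm{DO}_{N,N}(\mathbb{Z}\cdot(p_1,p_2)+\mathbb{Z}\cdot(q_1,q_2)+*)$, a disjoint union of cosets $T_1^aT_2^b\cdot G_0$ of the lattice subgroup $G_0$, with $(a,b)$ ranging over representatives of $\mathbb{Z}^2/\Lambda$. I define $\pi^{p,q}$ on the coset $T_1^aT_2^bG_0$ by pushing forward along left multiplication by $T_1^aT_2^b$. To see this is consistent and multiplicative, I use that conjugation by $T_1^aT_2^b$ acts on $G_0$ as a coefficient shift. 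Under $\Phi$ this becomes conjugation by a shift, which preserves the three subalgebras of the lexicographic decomposition in the $S$-variables and preserves the trace. By Proposition \ref{pro_auto}, it is therefore a Poisson automorphism. Moreover, $\pi$ vanishes at pure monomials by Proposition \ref{pro_ad} / part (2) of Proposition \ref{pro_sub}. These two facts together give a well-defined Poisson-Lie structure invariant under multiplication by $T_1,T_2$.

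\textbf{Step 3: Poisson submanifold.} Let $\J=\{(a,b)+k(q_1,q_2):k=k_0,\dots,k_1\}$. After left multiplication by $T_1^{-a}T_2^{-b}$ (a Poisson map by Step 2), we may assume $(a,b)=0$. Then $\J$ consists of the points $kq$ for $k\in[k_0,k_1]$, and under $\Phi$ these correspond to the exponents $(0,k)$ of $S_1^0S_2^k$. In the lexicographic order these form a set of consecutive elements, since no $(0,m)$ lies strictly between $(0,k_0)$ and $(0,k_1)$ except those listed. Part (1) of Proposition \ref{pro_sub} then says $\Phi(\mathrm{IDO}(\J))$ is a Poisson submanifold, hence so is $\mathrm{IDO}_{N,N}(\J)$.

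The main obstacle is Step 2. One must show that the translated structures on different cosets glue compatibly with multiplication, which requires the conjugation-invariance argument. A second subtlety is handling the case where $(a,b)$ is not in $\Lambda$: there the shift by a non-lattice vector must still act on $(N,N)$-periodic coefficients compatibly with the reindexing, which again uses the coprimality condition.
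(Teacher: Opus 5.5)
Your proposal is correct and follows the paper's route. You define $\pi^{p,q}$ as the pull-back of $\pi$ under the coefficient-reindexing isomorphism (well defined by Lemma \ref{lem:bijection}), extend it to all cosets by invariance under multiplication by $T_1,T_2$, and obtain the submanifold property from Proposition \ref{pro_sub}. The paper only says the proof is ``similar to Proposition \ref{pro_sub}''; you instead apply its part (1) through $\Phi$ after translating $\J$ into the lattice $\mathbb{Z}\cdot(p_1,p_2)+\mathbb{Z}\cdot(q_1,q_2)$, where a step-$(q_1,q_2)$ progression becomes a consecutive block $\{(0,k)\}$ in the $S$-lexicographic order, and you make explicit the coset gluing and conjugation checks that the paper only asserts.
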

The properties can be proved similar to Proposition \ref{pro_sub}.
\begin{example}
Let $(p_1,p_2)=(2,0)$, $(q_1,q_2)=(1,1)$ and $\gcd(N,2)=1$.
Consider the sparse operators of the form $a+bT_1T_2$.
The Poisson bracket $\pi^{p,q}$ on such operators can be obtained from the bracket $\pi$ on operators of the form $a+bT_2$:
\begin{align*}
\{a_{n,m},b_{n,m}\}=\frac{1}{2}a_{n,m}b_{n,m},\quad
\{b_{n,m},a_{n+1,m+1}\}=\frac{1}{2}b_{n,m}a_{n+1,m+1},\quad
\forall\,n,m\in\mathbb{Z}.
\end{align*}

\end{example}
\section{Refactorization Maps}\label{sec_refac}

Having established the Poisson-Lie structure on partial difference operators, 
we now show that the pentagram-type maps on twisted $\J$-corrugated $(N,M)$-hedra  exactly manifests as  refactorization maps in this group.

\subsection{The refactorization map and integrability}
Recall  that the   pentagram-type map on $\J$-corrugated polyhedra from Section \ref{sec_j} is governed by a partition of the support $\J = \J_+ \cup \J_-$, where $\J_+$ and $\J_-$ have the same common difference. 
Algebraically, this corresponds to decomposing the partial difference operator as $\mathcal{D} = \mathcal{D}_+ + \mathcal{D}_-$, where $\mathcal{D}_{\pm} \in \NDDO(\J_{\pm})$. 
Throughout this section, we assume that there exist integers $k_+$ and $k_-$ such that 
\begin{align}\label{con}
\J_+ \subset \{k_+\} \times \mathbb{Z}, \quad \J_- \subset \{k_-\}\times\mathbb{Z}  .
\end{align}
For example,
we take $\J_+=\{(0,0),(0,1)\}$ and $\J_-=\{(1,0),(1,1)\}$.

To algebraically integrate the geometric dynamics, we translate the pentagram-type map into a refactorization problem for the operator pair $(\mathcal{D}_+, \mathcal{D}_-)$. 
In this subsection, we rigorously establish that this algebraic procedure is generically well-defined and inherently governs the integrable structure of the mapping.

\begin{theorem}\label{theorem_refac}
	
	Let $\J_\pm\subset\mathbb{Z}^2$ be a pair of non-empty disjoint finite sets with the same common difference, satisfying the condition \eqref{con}.
	Suppose that $(\D_+,\D_-)$ is an element of the space $\NDDO(\J_+)\times\NDDO(\J_-)$.
	Consider the multivalued map of that space to itself that assigns to $\D_\pm$ new pairs of partial difference operators $\tilde{\D}_\pm$ defined by
	\begin{align}\label{eq_refac}
		\tilde{\D}_+\D_-=\tilde{\D}_-\D_+.
	\end{align}
	Then the following is true.
	\begin{enumerate}
		\item
		The map $\D_\pm\mapsto\tilde{\D}_\pm$ is a generically defined single-valued transformation of the quotient $\NDDO(\J_+)\times\NDDO(\J_-)/H\tilde{\times}H$.
		
		\item 
		
		The map above is equivalent to the following refactorization dynamics
		\begin{align*}
			\mathcal{L}:=\D_-^{-1}\D_+\mapsto\tilde{\mathcal{L}}:=\D_+\D_-^{-1}.
		\end{align*}
		Equivalently,
		it has a Lax representation
		\begin{align*}
			\mathcal{L}\mapsto \D_+\mathcal{L}\D_+^{-1}.
		\end{align*}
		\item 
		The map is Poisson.
	\end{enumerate}
	
\end{theorem}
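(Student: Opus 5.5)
Under condition \eqref{con} we write $\D_+=T_1^{k_+}P(T_2)$ and $\D_-=T_1^{k_-}R(T_2)$, where $P$ and $R$ are ordinary difference operators in $T_2$ with $(N,M)$-periodic coefficients and with supports $\J_\pm$ projected to the second coordinate. These supports are arithmetic progressions with the same step, say $(0,s)$. The plan is to reduce \eqref{eq_refac} to a family of one-dimensional refactorization problems, indexed by the first lattice coordinate, and then to quote the one-dimensional theory of \cite{refactorization2020} fibrewise.

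\textbf{Step 1 (well-definedness).} We seek $\tilde\D_\pm=T_1^{k_\pm}\tilde P,\ T_1^{k_-}\tilde R$ with supports in $\J_\pm$. Commuting $T_1$ past coefficients only shifts them in the first index, so \eqref{eq_refac} becomes
\[
T_1^{k_++k_-}\bigl(\tilde P^{(k_-)}R=\tilde R^{(k_+)}P\bigr),
\]
where the superscript denotes a shift of coefficients in the first index. For each fixed first index $i$ this is an identity between one-dimensional difference operators in $T_2$ supported on $(\J_+)_2+(\J_-)_2$. Counting unknowns against equations gives $|\J_+|+|\J_-|$ unknowns and $|\J_+|+|\J_-|-1$ linear homogeneous equations per lattice point. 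Hence generically there is a one-dimensional solution space pointwise, and the solution is unique up to the diagonal rescaling $(\tilde\D_+,\tilde\D_-)\mapsto(\gamma\tilde\D_+,\gamma\tilde\D_-)$.

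This is exactly the refactorization in \cite{refactorization2020}, applied row by row. Periodicity of the equations lets us choose $\gamma$ quasi-periodic. The ambiguity is therefore absorbed by the left action of $H\tilde\times H$, while a right action $(\alpha,\beta)$ on $\D_\pm$ produces a gauge-equivalent output. Together these give single-valuedness on the quotient. Genericity is the condition that the relevant $(|\J|-1)\times|\J|$ minors do not vanish identically. To see this, we exhibit a constant-coefficient example, for which $\tilde\D_\pm=\D_\pm$ up to scalars works, as in Lemma \ref{lem:zariski_dense}.

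\textbf{Step 2 (Lax form).} $\D_-$ is invertible in $\mathrm{I}\Psi\DO$: its lowest lexicographic coefficient is non-vanishing by non-degeneracy. Similarly $\tilde\D_-$ is invertible. Equation \eqref{eq_refac} gives
\[
\tilde{\mathcal L}=\tilde\D_-^{-1}\tilde\D_+=\D_+\D_-^{-1}=\D_+(\D_-^{-1}\D_+)\D_+^{-1},
\]
which is the stated refactorization and Lax representation. Gauge transformations act on $\mathcal L$ by quasi-periodic conjugation, so this is compatible with the quotient.

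\textbf{Step 3 (Poisson property).} This is the main obstacle. The standard argument (Theorem 4.? type in \cite{refactorization2020}) goes as follows. The product map $m:(\D_+,\D_-)\mapsto\D_-^{-1}\D_+$ on $G\times G$ and the swapped product are Poisson into the Poisson–Lie group $\mathrm{I}\Psi\DO$, using inversion anti-Poisson together with suitable sign conventions. The quotient by $H\tilde\times H$ is Poisson by Proposition \ref{pro_sub}(3), and $\mathrm{I}\DO(\J_\pm)$ are Poisson submanifolds by Proposition \ref{pro_sub}(1), since $\J_\pm$ are consecutive in lexicographic order because they lie in vertical columns.

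Concretely, we equip $\NDDO(\J_+)\times\NDDO(\J_-)$ with the product of the Sklyanin structures, one of them taken with the opposite sign. The map $(\D_+,\D_-)\mapsto(\D_-,\D_+)$ composed with the group law intertwines $\mathcal L$ and $\tilde{\mathcal L}$. Both $(\D_+,\D_-)\mapsto\D_-^{-1}\D_+$ and $(\tilde\D_+,\tilde\D_-)\mapsto\tilde\D_-^{-1}\tilde\D_+$ are Poisson, and their fibres are exactly the $H\tilde\times H$-orbits, because the factorization of $\mathcal L$ with prescribed supports is generically unique up to gauge (Step 1). The map on the quotient is therefore Poisson.

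The delicate points are two. First, one must verify that the $r$-matrix decomposition is compatible with supports in single columns, so that the factorization into $\J_\pm$ pieces is a Poisson reduction. Second, one must check that the quotient is a genuine Poisson reduction. For the second point we would first try invariance of the bracket under $H\tilde\times H$ combined with the fact that scalar (one-point) operators carry zero bracket, Proposition \ref{pro_sub}(2).
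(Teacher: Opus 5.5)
The gap is in Step~3, and it also affects the word ``equivalent'' in part~(2). Write $\Phi(\D_+,\D_-)=\D_-^{-1}\D_+$ and $\Psi(\D_+,\D_-)=\D_+\D_-^{-1}$, so that the refactorization map $F$ satisfies $\Phi\circ F=\Psi$.

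To conclude that $F=\Phi^{-1}\circ\Psi$ is Poisson, and that the dynamics of $\mathcal L$ faithfully encodes the dynamics of the pairs, you need $\Phi$ to be generically injective as a map $\NDDO(\J_+)\times\NDDO(\J_-)/H\tilde{\times}H\to\mathrm{I}\Psi\DO/\mathrm{Ad}\,H$. That is, $\D_-^{-1}\D_+=\alpha^{-1}(\D'_-)^{-1}\D'_+\alpha$ must force $\D'_\pm\alpha=\beta\D_\pm$.

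You attribute this to Step~1, but Step~1 proves a different uniqueness statement. It shows that an operator of the form $\D_+\D_-^{-1}$ with $(\D_+,\D_-)\in\mathcal U$ has a unique factorization $X_-^{-1}X_+$ with $X_\pm$ supported in $\J_\pm$. That gives injectivity of $\Phi$ only on $F(\mathcal U)$, i.e.\ on pairs whose $\mathcal L$ already lies in $\Psi(\mathcal U)$. For a generic pair you must first rewrite $\D_-^{-1}\D_+$ as $\hat{\D}_+\hat{\D}_-^{-1}$. This means solving the mirror problem $\D_+\hat{\D}_-=\D_-\hat{\D}_+$, with the unknowns on the \emph{right}. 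Trying instead to show that $F$ is dominant just reduces to the mirror injectivity statement for $\Psi$.

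The missing idea is the duality used in Lemma~\ref{lemma_se}:
\begin{enumerate}
\item Formal adjoints $\D\mapsto\D^*$ turn the right-unknown problem into the left-unknown problem $\tilde{\D}_+\D_-^*=\tilde{\D}_-\D_+^*$, with supports $-\J_\pm$.
\item Lemma~\ref{lemma_unique_sol} solves this adjoint problem.
\item Dualizing back gives $\D_-^{-1}\D_+=\tilde{\D}_+^*(\tilde{\D}_-^*)^{-1}$.
\item A second application of Lemma~\ref{lemma_unique_sol}, on the open set where $(\tilde{\D}_+^*,\tilde{\D}_-^*)$ is again generic, gives $\D'_\pm\alpha=\beta\D_\pm$.
\end{enumerate}
In your row-by-row picture you could instead quote the one-dimensional version of this bijection from \cite{refactorization2020}. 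But it is a separate statement from the refactorization lemma and has to be invoked as such.

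Relatedly, the fibres of $\Phi$ itself are not $H\tilde{\times}H$-orbits. They are orbits of left multiplication by a common periodic scalar. The right part of the action moves $\mathcal L$ by conjugation, which is why the target must be taken modulo $\mathrm{Ad}\,H$.

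Three smaller points.

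First, your decoupling of \eqref{eq_refac} under \eqref{con} into $N$ independent $M$-periodic one-dimensional problems is correct. It is a cleaner route to part~(1) than the paper's global count; the paper's kernel argument on $\mathcal V_{w_1}$ is implicitly the same decoupling. However, the count of $(|\J_+|+|\J_-|)NM$ unknowns against $(|\J_+|+|\J_-|-1)NM$ equations only bounds the solution space from below. These equations couple neighbouring sites, so nothing is ``pointwise.'' Exhibiting a constant-coefficient point where $\tilde{\D}_\pm=\D_\pm$ is \emph{a} solution shows existence, not the maximal rank you need. The paper gets uniqueness at that point from $\Ker\D_+\cap\Ker\D_-=0$ and Lemma~\ref{lemma_left}.

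Second, if you quote the one-dimensional lemma instead, note that after removing $T_1^{k_\pm}$ the two supports may coincide; both are $\{0,1\}$ in the main example. So check that the statement you cite does not assume disjoint supports; its kernel proof does not need that assumption.

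Third, lying in a vertical column makes $\J_\pm$ lexicographically consecutive only for common difference $(0,\pm1)$. For $(0,s)$ with $|s|\ge 2$, Proposition~\ref{pro_sub}(1) does not apply, and one needs the sparse structure $\pi^{p,q}$. The paper's proof also passes over this case.
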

The proof naturally extends the 1D difference operator framework \cite{refactorization2020}  to doubly periodic partial difference operators by utilizing the lexicographical ordering on $\mathbb{Z}^2$.	
First,
we need the following sequence of lemmas.
\begin{lemma}\label{lemma_left}
	Let $\mathcal{D}$ and $\mathcal{D}'$ be $(N, M)$-periodic non-degenerate partial difference operators with the same support. Assume all terms in the support share a constant power $k$ for the shift operator $T_1$.
	If $\Ker \mathcal{D}' = \Ker \mathcal{D}$ on the space of Bloch solutions for generic multipliers, then there exists a non-vanishing $(N,M)$-periodic scalar bi-infinite matrix $\alpha$ such that $\mathcal{D}' = \alpha\mathcal{D}$.
\end{lemma}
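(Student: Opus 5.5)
Since every term in the common support has $T_1$-power $k$, both operators can be written as $\mathcal{D}=T_1^k$-shifted versions of ordinary difference operators in the second variable. Concretely, let $\J=\{(k,m_0),(k,m_0+1),\dots,(k,m_1)\}$ (or a subset of $\{k\}\times\mathbb{Z}$). Then
\[
\mathcal{D}=\sum_{m} a^{k,m}T_1^kT_2^m,\qquad \mathcal{D}'=\sum_{m} b^{k,m}T_1^kT_2^m .
\]
For each fixed row index $i$, the equation $(\mathcal{D}X)_{i,j}=0$ involves only the row $X_{i+k,\cdot}$. It is therefore an ordinary linear difference equation in $j$ with leading and trailing coefficients $a^{k,m_1}_{i,\cdot}$ and $a^{k,m_0}_{i,\cdot}$, which are nonvanishing by non-degeneracy. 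The strategy is to reduce the statement row by row to the one-dimensional fact from \cite{refactorization2020}: two non-degenerate ordinary difference operators with the same support and the same kernel differ by left multiplication by a nonvanishing function.

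First I would describe the solution space of $\mathcal{D}$ restricted to Bloch solutions. Because row $i$ of $\mathcal{D}X$ depends only on row $i+k$ of $X$, the row $X_{i+k,\cdot}$ can be chosen freely in the kernel $K_{i}$ of the ordinary operator $D_i:=\sum_m a^{k,m}_{i,\cdot}T_2^m$. This kernel has dimension $m_1-m_0$ by non-degeneracy. Imposing the Bloch conditions, $T_2^M$-quasi-periodicity with multiplier $w_2$ selects the Floquet solutions of $D_i$, and $T_1^N$-quasi-periodicity with multiplier $w_1$ ties row $i+N$ to row $i$. For generic $w_2$ the space of $w_2$-Bloch solutions of $D_i$ in row $i+k$ is nonzero exactly when $w_2$ is a root of the characteristic polynomial of $D_i$. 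Collecting these over generic multipliers, the hypothesis $\Ker\mathcal{D}'=\Ker\mathcal{D}$ gives, for each $i$, that $D_i$ and $D'_i$ annihilate the same Bloch solutions in row $i+k$. Allowing the full range of multipliers, it gives that they annihilate the same $(m_1-m_0)$-dimensional space of sequences in $j$. The hardest point is this span step: I must check that the Bloch solutions of $\mathcal{D}$ over generic $(w_1,w_2)$ genuinely span, row by row, the full kernel of $D_i$. To handle it, I would use that for generic $w_2$ the Floquet multipliers of the $M$-periodic operator $D_i$ are distinct, giving $m_1-m_0$ independent Floquet solutions. Rows can be decoupled because the row-$(i+k)$ data are independent across residues $i \bmod N$, and $w_1$ only multiplies.

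With kernel equality established for each row, I would run the standard one-dimensional argument. Fix a point $(i,j)$ and consider the solutions $X$ in the common kernel. The values $X_{i+k,j+m_0},\dots,X_{i+k,j+m_1-1}$ can be prescribed arbitrarily, and $X_{i+k,j+m_1}$ is then determined. From $\mathcal{D}$ it equals $-\sum_{m<m_1}(a^{k,m}_{i,j}/a^{k,m_1}_{i,j})X_{i+k,j+m}$, and from $\mathcal{D}'$ it equals the analogous expression with the $b$-coefficients. Equating the two for all choices of the free values gives $b^{k,m}_{i,j}/b^{k,m_1}_{i,j}=a^{k,m}_{i,j}/a^{k,m_1}_{i,j}$ for every $m$. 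Setting $\alpha_{i,j}:=b^{k,m_1}_{i,j}/a^{k,m_1}_{i,j}$ then yields $\mathcal{D}'=\alpha\mathcal{D}$. Here $\alpha$ is nonvanishing by non-degeneracy, and it is $(N,M)$-periodic because both sets of coefficients are. If the support has gaps (an arithmetic progression with step $(0,s)$), the same argument applies after splitting $j$ into residue classes mod $s$.
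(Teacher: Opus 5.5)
\textbf{There is a genuine gap at the span step.} You correctly identify it as the crux, but your justification does not work, and the step fails without an extra hypothesis.

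The Floquet multipliers of the $M$-periodic operator $D_i$ are the eigenvalues of $T_2^M$ acting on the $(m_1-m_0)$-dimensional space $\Ker D_i$. They are determined by $D_i$ and do not depend on any choice of $w_2$. In fact Bloch solutions exist only at those finitely many values of $w_2$, so ``generic $w_2$'' cannot be what is meant. Distinctness of the multipliers is a genericity condition on the \emph{operator}, not on the multiplier. When the monodromy of some $D_i$ has a Jordan block, the Bloch solutions span a proper subspace of $\Ker D_i$, and sharing them does not determine $D_i$.

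Concretely, take any $N,M$, $k=0$, and set $\mathcal{D}=(T_2-1)^2(T_2-2)=T_2^3-4T_2^2+5T_2-2$ and $\mathcal{D}'=(T_2-1)(T_2-2)^2=T_2^3-5T_2^2+8T_2-4$. Both are non-degenerate with the same support. Row by row their kernels are $\mathrm{span}\{1,j,2^j\}$ and $\mathrm{span}\{1,2^j,j2^j\}$. In both cases the $T_2^M$-eigenvectors in the kernel are exactly the multiples of $1$ and of $2^j$. Hence the Bloch kernels agree for every $(w_1,w_2)$, yet $\mathcal{D}'\neq\alpha\mathcal{D}$: the $T_2^3$ coefficients force $\alpha=1$, but the constant terms differ. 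So the statement, read literally, needs more than you are using.

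There are two ways to close the gap.
\begin{enumerate}
\item Add the hypothesis that each row operator $D_i$ has $m_1-m_0$ distinct Floquet multipliers. Then the Bloch solutions form a basis of $\Ker D_i$, and the rest of your argument is complete.
\item Take the hypothesis in the form in which the lemma is actually used. In the proof of Lemma \ref{lemma_unique_sol}, what is established is equality of kernels on the whole space $\mathcal{V}_{w_1}$, with only $T_1$-quasi-periodicity imposed. Under that reading your row decomposition gives $\Ker D_i=\Ker D_i'$ directly, and no span step is needed.
\end{enumerate}

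The paper's own proof makes the same unjustified passage (``since $\hat{\mathcal{D}}$ and $\hat{\mathcal{D}}'$ share all Bloch solutions, they share the entire kernel on $\mathcal{V}_{w_1}$''). It then concludes via the bound $\dim\Ker R|_{\mathcal{V}_{w_1}}\le (n-1)N$ for $R=\hat{\mathcal{D}}'-\alpha\hat{\mathcal{D}}$, which tacitly requires $R$ to have nonvanishing extreme coefficients. Once kernel equality is available, your pointwise comparison of the two recursions at each $(i,j)$ is the more robust finish. It also handles arbitrary supports in $\{k\}\times\mathbb{Z}$ directly, so your splitting into residue classes mod $s$ is unnecessary.
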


\begin{proof}
	Without loss of generality,
	let $\mathcal{D} = (\sum_{j=0}^n a^j T_2^j) T_1^k$ and $\mathcal{D}' = (\sum_{j=0}^n b^j T_2^j) T_1^k$.
Factoring out the bijection $T_1^k$ implies $\Ker \hat{\mathcal{D}}' = \Ker \hat{\mathcal{D}}$ on the Bloch space, where $\hat{\mathcal{D}} = \sum_{m=0}^n a^m T_2^m$ and $\hat{\mathcal{D}}' = \sum_{m=0}^n b^m T_2^m$. By non-degeneracy, the leading coefficients satisfy $a^n_{i,j} \neq 0$. We define the $(N,M)$-periodic bi-infinite matrix $\alpha$ by $\alpha_{i,j} := b^n_{i,j}/a^n_{i,j}$.

The remainder $R = \hat{\mathcal{D}}' - \alpha \hat{\mathcal{D}}$ has order at most $n-1$ in $T_2$. 
Consider the space $\mathcal{V}_{w_1}$ of bi-infinite matrices satisfying the quasi-periodic condition $\psi_{i+N, j} = w_1 \psi_{i,j}$ for a generic $w_1 \in \mathbb{C}^*$, with no boundary conditions along $T_2$. 
Since $\hat{\mathcal{D}}$ and $\hat{\mathcal{D}}'$ share all Bloch solutions, they share the entire kernel globally on $\mathcal{V}_{w_1}$, establishing $\Ker \hat{\mathcal{D}}|_{\mathcal{V}_{w_1}} \subset \Ker R|_{\mathcal{V}_{w_1}}$.

On $\mathcal{V}_{w_1}$, since $\hat{\mathcal{D}}$ acts purely vertically without $T_1$ shifts, the system cleanly decouples into $N$ independent 1D difference equations. A unique solution is determined by specifying initial values on $n$ transverse rows across these $N$ independent columns, yielding $\dim \Ker \hat{\mathcal{D}}|_{\mathcal{V}_{w_1}} = nN$. Similarly, when $R\neq 0$, we have $\dim \Ker R|_{\mathcal{V}_{w_1}} \le (n-1)N$. The inclusion requires $nN \le (n-1)N$, yielding a strict contradiction unless $R = 0$. Thus $\hat{\mathcal{D}}' = \alpha \hat{\mathcal{D}}$, and right-multiplying by $T_1^k$ gives $\mathcal{D}' = \alpha \mathcal{D}$.
\end{proof}

\begin{lemma}\label{lemma_unique_sol}
	There exists a Zariski open and dense subset $\mathcal{U}(\J_\pm) \subset \NDDO(\J_+) \times \NDDO(\J_-)$ such that for any $(\D_+, \D_-) \in \mathcal{U}(\J_\pm)$, the equation \eqref{eq_refac}  admits a solution $(\tilde{\D}_+, \tilde{\D}_-) \in \NDDO(\J_+) \times \NDDO(\J_-)$, unique up to the left multiplication by a common non-vanishing \((N,M)\)-periodic scalar bi-infinite matrix $\alpha$.
\end{lemma}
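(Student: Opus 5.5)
Under condition \eqref{con}, write $\D_+=P\,T_1^{k_+}$ and $\D_-=Q\,T_1^{k_-}$, where $P=\sum_{j=0}^{n}p^jT_2^{j+s}$ and $Q=\sum_{j=0}^{n}q^jT_2^{j+s'}$ are purely vertical operators. Both $\J_\pm$ are arithmetic progressions with the same common difference $(0,c)$ and have the same length. The plan is to turn \eqref{eq_refac} into a finite linear system for the unknown coefficients of $\tilde{\D}_\pm$, posed column by column in the variable $i$.

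Write $\tilde{\D}_+=\tilde P T_1^{k_+}$ and $\tilde{\D}_-=\tilde Q T_1^{k_-}$ with unknown periodic coefficients. Then \eqref{eq_refac} reads
\begin{align*}
\tilde P\,T_1^{k_+}Q\,T_1^{k_-}=\tilde Q\,T_1^{k_-}P\,T_1^{k_+}.
\end{align*}
Since $T_1^{k}$ commutes with $T_2$ and only shifts coefficients in the first index, this becomes
\begin{align*}
\tilde P\,Q_{(k_+,0)}=\tilde Q\,P_{(k_-,0)},
\end{align*}
an identity between purely vertical operators of length $2n+1$ in $T_2^c$. Here the subscript denotes the grid shift of the coefficients. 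Comparing coefficients of $T_2^{s+s'+jc}$ for $j=0,\dots,2n$ at each lattice point $(i,j')$ gives $2n+1$ homogeneous linear equations in the $2n+2$ unknowns $\tilde p^{\,\cdot}_{i,j'-\ast}$ and $\tilde q^{\,\cdot}_{i,j'-\ast}$.

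The key step is to treat this globally on the fundamental domain. Because the first index $i$ is never shifted inside the system, the system decouples into $N$ independent one-dimensional systems, one for each $i$. Each of these is exactly the 1D refactorization system of \cite{refactorization2020} with period $M$ in the $T_2$ direction, with $c$-sparsity handled as in the sparse reduction. In the 1D setting the argument is as follows. The relation $\tilde P\,Q'=\tilde Q\,P'$, with $Q'=Q_{(k_+,0)}$ and $P'=P_{(k_-,0)}$, means that $\tilde{\D}$ annihilates $\Ker P'+\Ker Q'$ suitably composed. Equivalently, $\tilde P$ and $\tilde Q$ are determined by a left common multiple of $P'$ and $Q'$ of minimal order. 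For $M$-periodic operators this is a $2nM\times(2n+1)M$-type linear system whose kernel is generically of dimension exactly $M$. Lemma \ref{lemma_left} makes this precise: the solution space consists of $\alpha\cdot$(one solution) with $\alpha$ periodic.

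Concretely, I would prove existence by a dimension count: there are more unknowns than equations, per $i$-column, modulo the periodic scalar. I would prove uniqueness by showing that the solution space is one-dimensional over the ring of periodic scalars. For this I would argue as follows. If $(\tilde P,\tilde Q)$ and $(\tilde P',\tilde Q')$ are two solutions, then $\tilde P'Q'=\tilde Q'P'$ and $\tilde P Q'=\tilde Q P'$, so $\tilde P$ and $\tilde P'$ both vanish on $Q'(\Ker P')$. On each Bloch space $\mathcal{V}_{w_1}$, generically $Q'$ maps $\Ker P'$ injectively onto an $nN$-dimensional space. That image is then the full kernel of $\tilde P$, by the dimension count used in Lemma \ref{lemma_left}. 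Lemma \ref{lemma_left} then gives $\tilde P'=\alpha\tilde P$, and $\tilde Q'=\alpha\tilde Q$ follows because $P'$ is injective on generic Bloch spaces.

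The Zariski open set $\mathcal U(\J_\pm)$ is defined as the complement of the following closed conditions. First, the vanishing of the appropriate maximal minors of the linear system, which guarantees a rank exactly $(2n+1)NM$ kernel of rank $NM$. Second, the vanishing of some coefficient $\tilde p^j_{i,j'}$ or $\tilde q^j_{i,j'}$, which is the non-degeneracy requirement. These are polynomial conditions in the coefficients of $\D_\pm$.

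The main obstacle is showing that $\mathcal U(\J_\pm)$ is nonempty. This means exhibiting one pair for which the rank is maximal and all solution coefficients are nonzero. My first attempt will use constant-coefficient operators, as in Lemma \ref{lem:zariski_dense}. There everything commutes, so $\tilde{\D}_+=\D_+$ and $\tilde{\D}_-=\D_-$ solves \eqref{eq_refac} with all coefficients nonzero. The rank condition then reduces to coprimality of the symbols $P(\lambda)$ and $Q(\lambda)$ as polynomials in $T_2$, which holds for generic constants. Irreducibility of $\NDDO(\J_+)\times\NDDO(\J_-)$, as a torus, then gives density.
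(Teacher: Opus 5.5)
Your proposal is correct and follows the paper's skeleton. Both proofs compare $(p+q)NM$ unknowns with $(p+q-1)NM$ equations and take the full-rank locus, which is Zariski open. Both remove the locus where the solution degenerates. Both prove nonemptiness with a constant-coefficient pair for which $\tilde{\D}_\pm=\D_\pm$, and both get density from irreducibility.

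\textbf{First difference.} You make explicit that after writing $\D_\pm$ as vertical operators times $T_1^{k_\pm}$ the first index is frozen, so the problem is $N$ copies of the $M$-periodic one-dimensional refactorization problem of \cite{refactorization2020}.

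In fact the system decouples further, pointwise in $(i,j)$. The equations at $(i,j)$ involve only the coefficients of $\tilde{\D}_\pm$ at $(i,j)$, which is why the paper speaks of a $(p+q-1)\times(p+q)$ matrix. By Cramer's rule, $\mathcal{U}(\J_\pm)$ is then just the non-vanishing locus of all maximal minors of these $NM$ small matrices. Once the rank condition holds, your kernel/Lemma~\ref{lemma_left} uniqueness argument is redundant, since the $NM$-dimensional family $\alpha\tilde{\D}_\pm$ already fills the solution space.

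\textbf{Second difference.} You certify the witness algebraically. With constant coefficients each pointwise system reads $\tilde p(\lambda)q(\lambda)=\tilde q(\lambda)p(\lambda)$ in $\lambda=T_2^c$, and its solution space is one-dimensional iff $\gcd(p,q)=1$. This verifies the rank condition directly.

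The paper instead uses $\Ker\tilde{\D}_+=\D_-(\Ker\D_+)$ on $\mathcal{V}_{w_1}$ together with Lemma~\ref{lemma_left}. Strictly speaking, this only gives uniqueness among non-degenerate solutions. To conclude full rank one needs one more small observation: non-degeneracy is an open condition, so a larger solution space would contain a non-degenerate solution not proportional to $\D_\pm$. The advantage of the paper's route is that this kernel correspondence is exactly the mechanism $\tilde V=\D_+V$ used later to identify the map geometrically.

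\textbf{Minor slips to fix.}
\begin{itemize}
\item Nothing forces $|\J_+|=|\J_-|$. Use separate orders $n_\pm$; the counts and the coprimality argument are unchanged in form.
\item The one-dimensional system is $(2n+1)M\times(2n+2)M$, not $2nM\times(2n+1)M$.
\item With common difference $(0,c)$, the kernels on $\mathcal{V}_{w_1}$ have dimension $ncN$, not $nN$.
\item $\tilde Q'=\alpha\tilde Q$ follows from right cancellation of $P'$: the extreme coefficients of $P'$ never vanish, so $RP'=0$ forces $R=0$. It does not follow from injectivity. On $\mathcal{V}_{w_1}$, $P'$ is not injective; its kernel has dimension $ncN$.
\end{itemize}
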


\begin{proof}
	Let $\mathcal{U}'(\J_\pm)\subset \NDDO(\J_+) \times \NDDO(\J_-)$ be the set of $\D_\pm$ such that \eqref{eq_refac} has a unique solution \((\tilde{\D}_+,\tilde{\D}_-)\in \DO(\J_+) \times \DO(\J_-)\) up to the left multiplication by  a common non-vanishing \((N,M)\)-periodic scalar bi-infinite matrix $\alpha$.
	Let \(|\J_+| = p\) and \(|\J_-| = q\). 
	We first count the degrees of freedom and constraints in the defining equation.
	Each operator \(\tilde{\D}_+ \in \NDDO(\J_+)\) has \(p \cdot NM\) independent scalar coefficients. Similarly, \(\tilde{\D}_- \in \NDDO(\J_-)\) has \(q \cdot NM\) coefficients. The total number of unknowns is \((p+q) \cdot NM\).
	Both sides of \eqref{eq_refac} are partial difference operators supported in \(\J_+ + \J_-\).
	Since $\J_+$ and $\J_-$ have the common difference, the number of elements in the Minkowski sum \(\J_+ + \J_-\) is \(p+q-1\). 
	Therefore, the system \eqref{eq_refac} gives \((p+q-1) \cdot NM\) homogeneous linear equations in the coefficients of \(\tilde{\D}_+, \tilde{\D}_-\). 
	The difference between the number of unknowns and equations is exactly \(NM\).
	When theses equations are linear independent, this difference  corresponds to the freedom of left multiplication by an \((N,M)\)-periodic scalar bi-infinite matrix (with \(NM\) free parameters).
	Therefore,
	the uniqueness (up to the left action by \(\alpha\)) of the solution is equivalent to the condition that the corresponding \((p+q-1)\times(p+q)\) coefficient-matrix has rank $(p+q-1)$,
	so the set $\mathcal{U}'(\J_\pm)$ is Zariski open.
	
	Then we define $\mathcal{U}(\J_\pm)\subset\mathcal{U}'(\J_\pm)$ as the set of pairs $\D_\pm$ that have the property that the solution $\tilde{\D}_\pm$ of \eqref{eq_refac} belongs to $\NDDO(\J_+)\times\NDDO(\J_-)$.
	Since the coefficients of  \(\tilde{\D}_\pm\) are obtained from solving a linear system of equations,
	the operators \(\tilde{\D}_\pm\) are non-degenerate when certain rational functions do not vanish,
	which means \(\mathcal{U}(\J_\pm)\) is Zariski open in $\mathcal{U}'(\J_\pm)$ and hence in \(\NDDO(\J_+)\times\NDDO(\J_-)\).
	
	Finally, we prove the set $\mathcal{U}(\J_\pm)$ is non-empty. 
	Let $\D_\pm\in\NDDO(\J_\pm)$ be  operators with constant coefficients such that $\Ker\D_+\cap\Ker\D_-=0$.
	Since $\D_\pm$ commute with each other,
	we have $\tilde{\D}_\pm:=\D_\pm$ is a solution.
	Let $(\tilde{\mathcal{D}}'_+, \tilde{\mathcal{D}}'_-)$ be another non-degenerate solution to \eqref{eq_refac}. To rigorously compare the kernels, we restrict the operators to  space $\mathcal{V}_{w_1}$ of bi-infinite matrices satisfying the quasi-periodic condition $\psi_{i+N, j} = w_1 \psi_{i,j}$ for a generic $w_1 \in \mathbb{C}^*$, with no boundary conditions along $T_2$. 
	
	Applying \eqref{eq_refac} to any Bloch solution $\psi \in \Ker \mathcal{D}_+|_{\mathcal{V}_{w_1}}$, we obtain $\tilde{\mathcal{D}}_+ \mathcal{D}_- \psi = 0$, implying $\mathcal{D}_-(\Ker \mathcal{D}_+|_{\mathcal{V}_{w_1}}) \subset \Ker \tilde{\mathcal{D}}_+|_{\mathcal{V}_{w_1}}$. 
	Since the non-degenerate operators are restricted to $\mathcal{V}_{w_1}$, their kernels form finite-dimensional subspaces.
	Under the generic condition $\Ker \mathcal{D}_+ \cap \Ker \mathcal{D}_- = \{0\}$, $\mathcal{D}_-$ is injective on $\Ker \mathcal{D}_+$, yielding $\dim \mathcal{D}_-(\Ker \mathcal{D}_+|_{\mathcal{V}_{w_1}}) = \dim \Ker \mathcal{D}_+|_{\mathcal{V}_{w_1}}$.
	Since both are finite-dimensional subspaces, inclusion alongside equal dimensions guarantees exact equality: $\Ker \tilde{\mathcal{D}}_+|_{\mathcal{V}_{w_1}} = \mathcal{D}_-(\Ker \mathcal{D}_+|_{\mathcal{V}_{w_1}})$. 
	
	The same argument for $\tilde{\mathcal{D}}'_+$ yields $\Ker \tilde{\mathcal{D}}'_+|_{\mathcal{V}_{w_1}} = \mathcal{D}_-(\Ker \mathcal{D}_+|_{\mathcal{V}_{w_1}}) = \Ker \tilde{\mathcal{D}}_+|_{\mathcal{V}_{w_1}}$. 
	Since their kernels strictly coincide for all generic multipliers, $\Ker \tilde{\mathcal{D}}'_+ = \Ker \tilde{\mathcal{D}}_+$. 
	By Lemma \ref{lemma_left}, there exists a scalar bi-infinite matrix $\alpha$ such that $\tilde{\mathcal{D}}'_+ = \alpha \tilde{\mathcal{D}}_+$.
	Substituting this back into \eqref{eq_refac}:
	\[
	\alpha \tilde{\D}_+ \D_- = \tilde{\D}_-' \D_+ \implies \alpha \tilde{\D}_- \D_+ = \tilde{\D}_-' \D_+\implies \tilde{\D}'_-=\alpha\tilde{\D}_-.
	\]
	It yields that $\tilde{\D}_\pm$ is unique up to a left scalar multiplication.
	Therefore,
	the set $\mathcal{U}(\J_\pm)$ is Zariski open and non-empty,
	hence dense.
\end{proof}

\begin{lemma}\label{lemma_se}
The map
\begin{align}\label{eq_two}
	\NDDO(\J_+)\times\NDDO(\J_-)/H\tilde{\times}H\rightarrow
	\NDDO(\J_-)^{-1}\NDDO(\J_+)/\mathrm{Ad}\,H
\end{align}
given by 
\begin{align*}
(\D_+,D_-)\mapsto
\D_-^{-1}\D_+
\end{align*}
is generically a bijection,
where \(H\tilde{\times}H\) is the group of couples of non-vanishing $(N,M)$-quasi-periodic bi-infinite matrices with the same monodromy acting by the action \eqref{eq_lr}.
\end{lemma}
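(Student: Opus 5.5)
First I will check that the map is well defined on the quotients. For $(\alpha,\beta)\in H\tilde{\times}H$ the action \eqref{eq_lr} sends $(\D_+,\D_-)$ to $(\beta^{-1}\D_+\alpha,\beta^{-1}\D_-\alpha)$. Then
\[
(\beta^{-1}\D_-\alpha)^{-1}(\beta^{-1}\D_+\alpha)=\alpha^{-1}\D_-^{-1}\D_+\alpha .
\]
So $\beta$ cancels and $\alpha$ acts by conjugation. This shows the map descends to a map into $\NDDO(\J_-)^{-1}\NDDO(\J_+)/\mathrm{Ad}\,H$. Surjectivity holds by definition of the target.

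It remains to prove generic injectivity. Suppose $\D_-^{-1}\D_+=\gamma^{-1}\D_-'^{-1}\D_+'\gamma$ for some $\gamma\in H$. Replacing $(\D_+',\D_-')$ by $(\D_+'\gamma,\D_-'\gamma)$, which lies in the same $H\tilde\times H$ orbit with $\beta=1$ up to the monodromy bookkeeping (take $\alpha=\beta=\gamma$), I reduce to the case $\D_-^{-1}\D_+=\D_-'^{-1}\D_+'$. Equivalently, $\D_-'\D_-^{-1}\D_+=\D_+'$. I then set $\mathcal{X}:=\D_-'\D_-^{-1}\in\mathrm{I}\Psi\DO$, so that $\D_+'=\mathcal{X}\D_+$ and $\D_-'=\mathcal{X}\D_-$. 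The goal is to show that $\mathcal{X}$ is a non-vanishing periodic scalar matrix $\beta^{-1}$.

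The key step uses condition \eqref{con}. I write $\D_\pm=\hat\D_\pm T_1^{k_\pm}$, where the operators $\hat\D_\pm$ involve only $T_2$ and are supported on the same arithmetic progression up to shift. Under the lexicographic order, $\D_-^{-1}$ expands as a pseudo operator: it inverts the lowest term $a T_1^{k_-}T_2^{m_0}$ and then takes a geometric series in strictly higher powers of $T_2$, all at $T_1$-degree $-k_-$. Hence $\mathcal{X}$ has only $T_1^0$ terms, that is, $\mathcal{X}$ is a pseudo difference operator in $T_2$ alone. I then view $\hat\D_+'=\mathcal{X}\hat\D_+$ and $\hat\D_-'=\mathcal{X}\hat\D_-$ as a one-dimensional problem in $T_2$ with coefficients that are periodic in $i$ and act fiberwise.

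Now $\mathcal{X}$ is a left common factor relating two polynomial operators in $T_2$ of the given supports. On each space $\mathcal{V}_{w_1}$, the arguments of Lemma~\ref{lemma_left} and Lemma~\ref{lemma_unique_sol} apply. From $\mathcal{X}=\D_-'\D_-^{-1}=\D_+'\D_+^{-1}$, I get that $\mathcal{X}$ is simultaneously a right-quotient of two operators with the same support widths. Comparing the lowest and highest $T_2$-degrees in the two expansions shows that the degree span of $\mathcal{X}$ must be zero, for generic $\D_\pm$ with no common right factor. A non-scalar $\mathcal{X}$ would produce a nontrivial common right divisor of $\hat\D_+$ and $\hat\D_-$. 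This corresponds to $\Ker\D_+\cap\Ker\D_-\neq0$ on some $\mathcal{V}_{w_1}$, which is excluded on a Zariski open set.

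I expect this to be the main obstacle. Turning "$\mathcal{X}$ polynomial of degree $0$" into a rigorous statement needs a division-algorithm/Ore-type argument for periodic-coefficient operators in $T_2$. I would first try a dimension count of kernels as in Lemma~\ref{lemma_unique_sol}: $\Ker\D_+'\supset\Ker\D_+$ with equal dimensions, so Lemma~\ref{lemma_left} gives $\D_+'=\beta^{-1}\D_+$. The same argument gives $\D_-'=\beta'^{-1}\D_-$, and then $\mathcal{X}=\beta^{-1}=\beta'^{-1}$ follows. Finally, I restore $\gamma$ to obtain the required $(\alpha,\beta)\in H\tilde\times H$, checking that the monodromies of $\alpha$ and $\beta$ agree.
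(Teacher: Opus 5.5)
\textbf{There is a genuine gap, at exactly the step you flagged.}

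Your well-definedness check is fine, and so is your use of \eqref{con} to reduce to $\mathcal{X}=\hat\D_-'\hat\D_-^{-1}$, a pseudo-operator in $T_2$ alone. One small correction: the replacement that stays in the orbit is $\gamma^{-1}\D'_\pm\gamma$ (that is, $\alpha=\beta=\gamma$); $\D'_\pm\gamma$ itself is not periodic. The real problem is that the core claim, that $\mathcal{X}$ is a scalar, is never established, and neither of your arguments for it works.

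\emph{Degree comparison.} Both expansions $\D'_\pm\D_\pm^{-1}$ are series in increasing powers of $T_2$. They have no highest degree, so there is nothing to compare beyond ``lowest degree $0$''. Even comparing both directions would not help. For example, $(1-2T_2)(1-T_2)^{-1}$ equals $1-T_2-T_2^2-\cdots$ in increasing powers and $2+T_2^{-1}+T_2^{-2}+\cdots$ in decreasing powers, yet it is not a scalar.

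\emph{Kernel fallback.} This argument is circular. From $\D'_+=\mathcal{X}\D_+$ you get $\Ker\D'_+\supset\Ker\D_+$ only if $\mathcal{X}$ is a genuine finite operator that can be applied to $\D_+\psi$, which is what you are trying to prove. In fact any argument using only the $+$ relation must fail. For every $\D_+,\D'_+$ with the same support, $\mathcal{X}:=\D'_+\D_+^{-1}$ satisfies $\D'_+=\mathcal{X}\D_+$. Taking $\D_+=1-T_2$ and $\D'_+=1-2T_2$, the kernels $\{g_i\}$ and $\{2^{-j}g_i\}$ are not nested.

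\emph{Genericity condition.} You have placed the degeneracy on the wrong side. What destroys injectivity is a common \emph{left} factor: $\D_\pm=LS_\pm$ and $\D'_\pm=L'S_\pm$ give the same quotient $S_-^{-1}S_+$. Such a factor gives the dual operators $\D_\pm^*$ a common kernel. It is not the condition $\Ker\D_+\cap\Ker\D_-\neq0$.

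\textbf{The missing idea} is to rewrite the left fraction $\D_-^{-1}\D_+$ as a right fraction of finite operators (an Ore-type step). The paper does this as follows.

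First, solve the dual refactorization problem $\tilde\D_+\D_-^*=\tilde\D_-\D_+^*$ with $\tilde\D_\pm\in\NDDO(-\J_\pm)$, using Lemma \ref{lemma_unique_sol} applied to $\D_\pm^*$. Dualizing gives $\D_-\tilde\D_+^*=\D_+\tilde\D_-^*$, so $\D_-^{-1}\D_+=\tilde\D_+^*(\tilde\D_-^*)^{-1}$.

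Next, the hypothesis becomes $\D'_-\alpha\tilde\D_+^*=\D'_+\alpha\tilde\D_-^*$. Thus $(\D_+,\D_-)$ and $(\D'_+\alpha,\D'_-\alpha)$ solve the same finite linear equation $X_-\tilde\D_+^*=X_+\tilde\D_-^*$.

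Finally, the uniqueness part of Lemma \ref{lemma_unique_sol}, for generic $\tilde\D_\pm^*$, gives $\D'_\pm\alpha=\beta\D_\pm$.

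In your notation: set $P:=\D_-\tilde\D_+^*=\D_+\tilde\D_-^*$. Then $\mathcal{X}P=\D'_-\tilde\D_+^*=\D'_+\tilde\D_-^*$ is a finite operator. Only after this step does a kernel-dimension argument become legitimate.
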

\begin{proof}
The map taking the left-right orbit of $(\mathcal{D}_+, \mathcal{D}_-)$ to the $H$-conjugacy class of $\mathcal{D}_-^{-1}\mathcal{D}_+$ is surjective by definition of the codomain. 
Injectivity generically follows from the unique factorization of rational pseudo partial difference operators.
Suppose that $\mathcal{D}_-^{-1}\mathcal{D}_+$ is $H$-conjugate to $(\mathcal{D}_-')^{-1}\mathcal{D}_+'$, that is
\begin{align}\label{eq_ta}
\mathcal{D}_-^{-1}\mathcal{D}_+=\alpha^{-1} (\mathcal{D}_-')^{-1}\mathcal{D}_+'\alpha
\end{align}
for some non-vanishing $(N,M)$-periodic bi-infinite matrix $\alpha$.
Recall that the dual of an operator $\D=\sum_{i,j}a^{i,j}T_1^iT_2^j$ is defined as
 \(\D^*=\sum_{i,j}\tilde{a}^{i,j}T_1^{-i}T_2^{-j}\),
where $\tilde{a}^{i,j}_{n,m}=a^{i,j}_{n-i,m-j}$.
Consider the refactorization equation
\begin{align}\label{eq_ad}
\tilde{\D}_+\D_-^*=\tilde{\D}_-\D_+^*,
\end{align}
with
$
\tilde{\D}_\pm\in\NDDO(-\J_\pm),
$
where $-\J:=\{-j\mid j\in\J\}$.
According to Lemma \ref{lemma_unique_sol},
equation \eqref{eq_ad} has a unique solution $\tilde{\D}_\pm$,
if $\D_\pm^*$ is an element of some open and dense subset $\mathcal{U}(-\J_\pm)\subset\NDDO(-\J_+)\times\NDDO(-\J_-)$.
Taking the dual of \eqref{eq_ad},
we have
\begin{align*}
\D_-\tilde{\D}_+^*=\D_+\tilde{\D}_-^*,
\end{align*}
and hence
\begin{align*}
\tilde{\D}_+^*(\tilde{\D}_-^*)^{-1}=\D_-^{-1}\D_+=\alpha^{-1} (\mathcal{D}_-')^{-1}\mathcal{D}_+'\alpha.
\end{align*}
where  we use \eqref{eq_ta} for the last equality.
It yields
\begin{align*}
\mathcal{D}_-'\alpha\tilde{\D}_+^*
=\mathcal{D}_+'\alpha\tilde{\D}_-^*.
\end{align*}
Now we consider a subset $\hat{\mathcal{U}}(-\J_\pm)\subset\mathcal{U}(-\J_\pm)$,
which consists of $\D_\pm^*$ such that the adjoint $\tilde{\D}_\pm^*$ of the corresponding solution $\tilde{\D}_\pm$ of \eqref{eq_ad},
is an element of $\mathcal{U}(\J_\pm)$.
This set is open and dense.
Since $\D'_\pm\alpha$ and $\D_\pm$ satisfy the same equation, by Lemma \ref{lemma_unique_sol},
we have
$\D'_\pm\alpha=\beta\D_\pm$ for some $(N,M)$-periodic bi-infinite matrix $\beta$,
which means that $\D_\pm$ and $\D_\pm'$ are in the same left-right orbit.
\end{proof}

\begin{proof}[Proof of Theorem \ref{theorem_refac}]
	
	First, we prove the single-valuedness of the map $\D_\pm\mapsto\tilde{\D}_\pm$ on the quotient.
	By Lemma \ref{lemma_unique_sol},
	for generic $\D_\pm\in\NDDO(\J_\pm)$,
	the equation \eqref{eq_refac} determines a  unique operator \(\tilde{\D}_\pm\in\NDDO(\J_\pm)\) up to left multiplication by a non-vanishing $(N,M)$-periodic scalar bi-infinite matrix.	
	If we apply the left-right action of $H \tilde{\times} H$, transforming $\mathcal{D}_{\pm}$ to $\beta^{-1}\mathcal{D}_{\pm}\alpha$, the equation \eqref{eq_refac} becomes $\tilde{\mathcal{D}}_+(\beta^{-1}\mathcal{D}_-\alpha) = \tilde{\mathcal{D}}_-(\beta^{-1}\mathcal{D}_+\alpha)$. 
	Then $\gamma^{-1}\tilde{\mathcal{D}}_{\pm}\beta$ is a solution,
	where $\alpha$, $\beta$, $\gamma$ have the same monodromy.
	Therefore, the map sends elements of the same left-right orbit to elements of the same left-right orbit, descending to a well-defined single-valued transformation on the quotient $\NDDO(\J_+)\times\NDDO(\J_-)/H\tilde{\times}H$.
	
	Then we prove the second statement.	According to Lemma \ref{lemma_se},
	there is a bijection \eqref{eq_two}.	
	From the refactorization relation $\tilde{\mathcal{D}}_+\mathcal{D}_- = \tilde{\mathcal{D}}_-\mathcal{D}_+$, multiplying by $\tilde{\mathcal{D}}_-^{-1}$ on the left and $\mathcal{D}_-^{-1}$ on the right yields $\tilde{\mathcal{D}}_-^{-1}\tilde{\mathcal{D}}_+ = \mathcal{D}_+\mathcal{D}_-^{-1}$. Defining the Lax operators $\mathcal{L} := \mathcal{D}_-^{-1}\mathcal{D}_+$ and $\tilde{\mathcal{L}} := \tilde{\mathcal{D}}_-^{-1}\tilde{\mathcal{D}}_+$, we directly obtain $\tilde{\mathcal{L}} = \mathcal{D}_+\mathcal{D}_-^{-1} = \mathcal{D}_+(\mathcal{D}_-^{-1}\mathcal{D}_+)\mathcal{D}_+^{-1} = \mathcal{D}_+\mathcal{L}\mathcal{D}_+^{-1}$. This gives the required Lax representation.
	
	Finally,
	we consider the third statement  (Poisson property).
	The proof is similar to the one-dimensional pseudo-difference operators \cite{refactorization2020},
	so we briefly give the process.
	The map can be decomposed into some operations within the Poisson-Lie group $\mathrm{I}\Psi \DO$
	\begin{align*}
	(\D_+,\D_-)\rightarrow
	\D_+\D_-^{-1}=\tilde{\D}_-^{-1}\tilde{\D}_+\rightarrow(\tilde{\D}_+,\tilde{\D}_-).
	\end{align*}
	The first arrow is well-defined according to Lemma \ref{lemma_unique_sol}.	
	Besides, this map is Poisson, since multiplication in the group of pseudo partial difference operators is Poisson,
	and inversion is anti-Poisson. 
	The second arrow is well-defined since there is a generic  bijection between $\NDDO(\J_+)\times\NDDO(\J_-)/H\tilde{\times}H$ and $
	\NDDO(\J_-)^{-1}\NDDO(\J_+)/\mathrm{Ad}\,H$ by Lemma \ref{lemma_se}.
	Therefore, the
	whole map is a composition of Poisson maps and hence Poisson.
\end{proof}

For the sake of concreteness, in what follows we focus on the example defined by $\J_+ = \{(0,0), (0,1)\}$ and $\J_- = \{(1,0), (1,1)\}$.
\begin{proposition}
	For this specific choice of $\J_\pm$,
	the so-obtained map $\D_\pm\mapsto\tilde{\D}_\pm$ can be extended to a map
	\begin{align*}
		(\D_\pm,\sum_{k=1}^4 Q_k)\longmapsto
		(\tilde{\D}_\pm,\sum_{k=1}^4 Q_k)
	\end{align*}  
	where $\sum_{k=1}^4 Q_k$ is a real divisor consisting of four distinct points on the associated smooth spectral curve $\Gamma_\D$,
	considered modulo the induced scaling action \eqref{eq_act_w}.
	Besides,
	this map coincides with the pentagram-type map \eqref{def_lap} on $\mathcal{P}_{N,M}^3(\J)$,
	which is the space of generic twisted $\J$-corrugated $(N,M)$-hedra in $\mathbb{RP}^3$ modulo projective transformations.
\end{proposition}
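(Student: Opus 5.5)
The plan is to realize the refactorization directly on the lifted polyhedron, and then check that the marked points and the spectral curve are carried along.

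Fix a generic twisted $\J$-corrugated $(N,M)$-hedron and a lift $V=\{V_{i,j}\}\subset\mathbb{R}^4$. By Proposition \ref{pro_one-to-one} we get $\D V=0$ with $\D=\D_++\D_-$, where
\[
\D_+=a+bT_2\in\NDDO(\J_+),\qquad \D_-=(c_{-1,0}+d_{-1,0}T_2)T_1\in\NDDO(\J_-),
\]
up to rewriting coefficients to the left of $T_1$. Define
\[
\tilde V:=\D_+V=-\D_-V .
\]
The first expression shows that $\tilde V_{i,j}=a_{i,j}V_{i,j}+b_{i,j}V_{i,j+1}$ lies on the line $(v_{i,j},v_{i,j+1})$. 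The second shows that $\tilde V_{i,j}$ is a combination of $V_{i+1,j}$ and $V_{i+1,j+1}$, so it lies on $(v_{i+1,j},v_{i+1,j+1})$. Hence $\tilde V_{i,j}$ is a lift of the intersection point $\tilde v_{i,j}$ in \eqref{def_lap}, and it is nonzero as long as the two lines are distinct, which is the generic case.

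Next take $\tilde\D_\pm$ solving \eqref{eq_refac}, which exist generically by Lemma \ref{lemma_unique_sol}, and apply both sides to $V$:
\[
\tilde\D_+\D_-V=\tilde\D_-\D_+V\quad\Longrightarrow\quad -\tilde\D_+\tilde V=\tilde\D_-\tilde V,
\]
that is, $(\tilde\D_++\tilde\D_-)\tilde V=0$. So the operator attached to the image polyhedron is $\tilde\D=\tilde\D_++\tilde\D_-$.

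Gauge compatibility comes next. Rescaling the lift $V\mapsto V\alpha$ and rescaling the equations by $\beta$ replaces $\D_\pm$ by $\beta^{-1}\D_\pm\alpha$. This rescales $\tilde V$ by $\beta$-type factors, and the proof of Theorem \ref{theorem_refac}(1) shows how $\tilde\D_\pm$ transform in the compatible way. Consequently the correspondence descends to the quotient by $H\tilde\times H$ and matches the projective map on $\mathcal{P}^3_{N,M}(\J)$.

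For the divisor, note that $\D_+$ has $(N,M)$-periodic coefficients and therefore commutes with $T_1^N$ and $T_2^M$. Hence
\[
T_1^N\tilde V=\phi_1\tilde V,\qquad T_2^M\tilde V=\phi_2\tilde V,
\]
with the same monodromies $\phi_1,\phi_2$. In the diagonal frame of Proposition \ref{pro_one-to-one}, the components of $\tilde V$ are the functions $\D_+\psi(Q_k)$. These are Bloch solutions of $\tilde\D$ with the same multipliers $Q_k$, so the four marked points are unchanged, modulo the scaling \eqref{eq_act_w} already accounted for.

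It remains to check two things: that $\D_+\psi(Q_k)\neq0$, and that these four functions stay linearly independent, so that $\tilde v$ is again generic. Both follow from injectivity of $\D_+$ on $\Ker\D$, which is equivalent to $\Ker\D_+\cap\Ker\D_-=0$. This condition is generic, since it holds for the constant-coefficient example used in Lemma \ref{lemma_unique_sol}. Reality of the divisor is preserved automatically because $\D_+$ is real.

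The step I expect to be the main obstacle is showing that $\Gamma_{\tilde\D}=\Gamma_\D$, so that the $Q_k$ are points of the same curve. I would work at the level of Bloch matrices. From \eqref{eq_refac},
\[
\tilde\D\,\D_+=\tilde\D_+\D_++\tilde\D_-\D_+=\tilde\D_+\D_++\tilde\D_+\D_-=\tilde\D_+\D .
\]
Restricting this identity to $\mathcal{V}_{w_1,w_2}$ gives
\[
\det\tilde\D(w_1,w_2)\,\det\D_+(w_1,w_2)=\det\tilde\D_+(w_1,w_2)\,\det\D(w_1,w_2).
\]
Since $\D_+$ and $\tilde\D_+$ contain no $T_1$, their Bloch determinants factor as products over the $N$ decoupled columns. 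Each such factor is a polynomial in $w_2$ alone, of the form $\prod_i a_{i,\cdot}+(-1)^{M+1}w_2\prod_i b_{i,\cdot}$. I would then show that the corresponding products for $\tilde\D_+$ agree with those for $\D_+$ up to a constant factor, by comparing the $T_2$-leading and trailing coefficients in the refactorization identity. Then the extra factors cancel and the two spectral curves coincide. Alternatively, the Lax form $\tilde{\mathcal L}=\D_+\mathcal L\D_+^{-1}$ of Theorem \ref{theorem_refac} gives isospectrality of $\mathcal L$ directly, and I would fall back on it if the column-by-column determinant comparison becomes messy.
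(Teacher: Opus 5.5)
Your identification of the map is the paper's argument. You set $\tilde V:=\D_+V=-\D_-V$, which lies on both lines. Applying \eqref{eq_refac} to $V$ gives $(\tilde\D_++\tilde\D_-)\tilde V=0$. Periodicity of $\D_+$ then sends each $\psi(Q_k)$ to a Bloch solution of $\tilde\D$ with the same multipliers. Your genericity remark ($\Ker\D_+\cap\Ker\D_-=0$, so the $\D_+\psi(Q_k)$ are nonzero) adds care the paper omits. There is one small slip: $\D_-=cT_1+dT_1T_2=(c+dT_2)T_1=T_1(c_{-1,0}+d_{-1,0}T_2)$, so the shifted coefficients sit to the right of $T_1$; this is harmless. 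Note that $Q_k\in\Gamma_{\tilde\D}$ already follows once $\D_+\psi(Q_k)\neq0$. The identity $\Gamma_{\tilde\D}=\Gamma_\D$, which the paper only asserts, is what you need to say the image divisor lives on the same smooth curve.

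The gap is in how you propose to prove that identity. Three ingredients are correct:
\begin{itemize}
\item the relation $\tilde\D\D_+=\tilde\D_+\D$;
\item the resulting determinant identity;
\item the column factorization $\det\D_+(w)=\prod_i p_i(w_2)$ with $p_i=\prod_j a_{i,j}+(-1)^{M+1}w_2\prod_j b_{i,j}$.
\end{itemize}
But the trailing and leading $T_2$-coefficients of \eqref{eq_refac}, namely $\tilde a_{i,j}c_{i,j}=\tilde c_{i,j}a_{i+1,j}$ and $\tilde b_{i,j}d_{i,j+1}=\tilde d_{i,j}b_{i+1,j+1}$, contain the unknowns $\tilde c,\tilde d$. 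They only give
\[
\prod_j\tilde a_{i,j}=\kappa_i\prod_j a_{i+1,j},\qquad \prod_j\tilde b_{i,j}=\kappa_i'\prod_j b_{i+1,j},
\]
with $\kappa_i=\prod_j(\tilde c_{i,j}/c_{i,j})$ and $\kappa_i'=\prod_j(\tilde d_{i,j}/d_{i,j})$. So $\tilde p_i$ is proportional to $p_{i+1}$ only if $\kappa_i=\kappa_i'$, and nothing in your plan forces this; otherwise the roots in $w_2$ get shifted by $\kappa_i/\kappa_i'$.

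The missing input is the middle ($T_1T_2$) coefficient. Eliminating $\tilde a,\tilde b$ from it gives
\[
\frac{\tilde c_{i,j}}{\tilde d_{i,j}}=\frac{c_{i,j}}{d_{i,j+1}}\cdot\frac{E_{i,j+1}}{E_{i,j}},\qquad E_{i,j}:=a_{i+1,j}d_{i,j}-b_{i+1,j}c_{i,j}.
\]
Its product over one period in $j$ telescopes to $\kappa_i=\kappa_i'$. This needs $E_{i,j}\neq0$, which is generic and is exactly the denominator $A_{i+1,j}-B_{i,j}$ in \eqref{eq_main}. Then $\det\tilde\D_+=(\prod_i\kappa_i)\det\D_+$, hence $\det\tilde\D=(\prod_i\kappa_i)\det\D$.

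Alternatively, use your identity directly. Since $\det\tilde\D$ is a polynomial, $\det\D_+$ divides $\det\tilde\D_+\cdot\det\D$. For generic $\D$ the roots of $\det\D_+$ in $w_2$ are distinct and $\Gamma_\D$ contains no line $w_2=\mathrm{const}$. Comparing degrees in $w_2$ then gives proportionality.

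Your Lax fallback does not bypass the issue. On Bloch matrices $\det\D(w)=\det\D_-(w)\det(I+\mathcal{L}(w))$. Conjugacy $\tilde{\mathcal L}(w)=\D_+(w)\mathcal{L}(w)\D_+(w)^{-1}$ therefore yields only $\det\tilde\D/\det\D=\det\tilde\D_-/\det\D_-$. This is the same unknown ratio of functions of $w_2$, and if it were nonconstant it would add or remove components $w_2=\mathrm{const}$.
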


\begin{proof}
	According to Proposition \ref{pro_one-to-one},
	given a generic twisted $\J$-corrugated $(N,M)$-hedron $\{v_{i,j}\}$ in $\mathbb{RP}^3$ modulo projective transformations,
	there is a unique equivalence class of $(\D,\sum_{k=1}^4 Q_k)$,
	where $\D\in\NDDO(\J)/H\tilde{\times}H$ and $\sum_{k=1}^4 Q_k$ is a real divisor consisting of four distinct points on $\Gamma_\D$ considered up to the action \eqref{eq_act_w}.
	The partial difference operator $\D$ admits a unique factorization
	$
	\D=\D_++\D_-,
	$
	where $\D_\pm\in\NDDO(\J_\pm)$ and similar for $\tilde{\D}=\tilde{\D}_++\tilde{\D}_-$.

	Suppose that $V$ is a lift of $v$.
	Applying both sides of \eqref{eq_refac} to $V$,
	we get
	\begin{align*}
		\tilde{\D}_+\D_-V=\tilde{\D}_-\D_+V,
	\end{align*}
	which, using that $\D V=0$ and thus $\D_-V=-\D_+V$,
	can be rewritten as
	\begin{align*}
		\tilde{\D}\D_+V=0.
	\end{align*}
	This means that $\tilde{V}=\D_+V$. 
	Note that the vector $(\D_+V)_{i,j}$ belongs to the line passing through $V_{i,j}$, $V_{i,j+1}$.
	At the same time,
	we have $\tilde{V}=\D_+V=-\D_-V$,
	so
	\begin{align*}
		(\D_+V)_{i,j}=-(\D_-V)_{i,j}\in\,\operatorname{span}(V_{i+1,j},V_{i+1,j+1}).
	\end{align*}
	Thus we have
	\begin{align*}
		\tilde{V}_{i,j}\in\operatorname{span}(V_{i,j},V_{i,j+1})\cap\operatorname{span}(V_{i+1,j},V_{i+1,j+1}),
	\end{align*}
	which coincides with \eqref{def_lap}.
	The new twisted $\J$-corrugated $(N,M)$-hedron $\{\tilde{v}_{i,j}\}$ also corresponds to a divisor.
	
	Now we prove the divisor $\sum_{k=1}^4 Q_k$ is invariant under the map \eqref{def_lap}.
	Note that the four components of $V$ coincides with four Bloch solutions with monodromies $Q_1,Q_2,Q_3,Q_4$ on $\Gamma_\D$, that is
	\begin{align*}
		V=K\begin{pmatrix}
			\psi(Q_1)\\
			\psi(Q_2)\\
			\psi(Q_3)\\
			\psi(Q_4)
		\end{pmatrix},
	\end{align*}
	where $K\in\text{GL}(4,\mathbb{R})$ is a constant matrix and $\psi(Q_k)$, $k=1,\ldots,4$, are the Bloch solutions of $\D$ at $Q_k\in\Gamma_\D$.
	Let $Q_k=(Q_{k,1},Q_{k,2})$, and
	we can prove that the monodromies of $V$ are
	\begin{align*}
		\phi_1=K\operatorname{diag}\bigl(Q_{1,1}, Q_{2,1},Q_{3,1},Q_{4,1})K^{-1},\quad
		\phi_2=K\operatorname{diag}\bigl(Q_{1,2}, Q_{2,2},Q_{3,2},Q_{4,2})K^{-1}.
	\end{align*}
	Define $\tilde{\psi}_k:=\D_+\psi(Q_k)$, and obviously it satisfies $\tilde{\D}\tilde{\psi}_k=0$.
	Since $\D_+$ is $(N,M)$-periodic,
	$\psi(Q_k)$ and $\tilde{\psi}_k$ admit the same monodromies, i.e.
	\begin{align*}
		T_1^N\tilde{\psi}_k=Q_{k,1}\tilde{\psi}_k,\quad
		T_2^M\tilde{\psi}_k=Q_{k,2}\tilde{\psi}_k,\quad k=1,2,3,4.
	\end{align*}
	Therefore,
	we have $\tilde{\psi}_k=\tilde{\psi}(Q_k)$.
	In fact, the spectral curve $\Gamma_{\tilde{\D}}$ is the same as $\Gamma_\D$.
	Finally, $\tilde{V}$ has the form
	\begin{align*}
		\tilde{V}=\D_+V=K\D_+\begin{pmatrix}
			\psi(Q_1)\\
			\psi(Q_2)\\
			\psi(Q_3)\\
			\psi(Q_4)
		\end{pmatrix}
		=K\begin{pmatrix}
			\tilde{\psi}(Q_1)\\
			\tilde{\psi}(Q_2)\\
			\tilde{\psi}(Q_3)\\
			\tilde{\psi}(Q_4)
		\end{pmatrix},
	\end{align*} 
	where $\D_+ K=K\D_+$ because $K$ is a constant matrix.
	Therefore,
	the new twisted $\J$-corrugated $(N,M)$-hedron $\tilde{V}$ corresponds to the same divisor $\sum_{k=1}^4 Q_k$.
\end{proof}

\subsection{Explicit formulas for the pentagram-type map}
To provide a concrete computational framework, we now focus on the explicit formulas for the pentagram-type map \eqref{def_lap} on twisted $\J$-corrugated $(N,M)$-hedra.
\begin{proposition}\label{candi_D}
Under the action \eqref{eq_lr} of the group $H\tilde{\times}H$, 
every partial difference operator $\mathcal{D}_0=a+bT_2+cT_1+dT_1T_2 \in \NDDO(\J)$ can be uniquely reduced to the representative form
	\begin{align}\label{expr_D}
		\mathcal{D}=A+T_2+BT_1+T_1T_2.
	\end{align}
up to a residual action
\begin{align}\label{AB_action}
	(A_{i,j},B_{i,j})\rightarrow\left(\frac{\gamma_j}{\gamma_{j+1}}A_{i,j}, \frac{\gamma_j}{\gamma_{j+1}}B_{i,j}\right),
\end{align}
by an arbitrary non-vanishing $M$-quasi-periodic sequence $\gamma_j$ ($\gamma_{j+M}=\mu\gamma_j$).
The corresponding $(N,M)$-periodic coordinates are explicitly defined as:
	\begin{align}\label{eq_AB}
		A_{i,j}:=\frac{a_{i,j}}{b_{i,j}}\xi_{i,j}, \quad B_{i,j}:=\frac{c_{i,j}}{d_{i,j}}\xi_{i+1,j},
	\quad
	\text{where}
	\quad
		\xi_{i,j}:=\prod_{k=0}^{i-1}\frac{b_{k,j-1}d_{k,j}}{d_{k,j-1}b_{k,j}}.
	\end{align}
\end{proposition}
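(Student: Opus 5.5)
The plan is to solve for the gauge pair $(\alpha,\beta)\in H\tilde{\times}H$ directly and then check periodicity and uniqueness. Write
\begin{align*}
\beta^{-1}\mathcal{D}_0\alpha=\beta^{-1}\bigl(a\alpha+b\,\alpha_{0,1}T_2+c\,\alpha_{1,0}T_1+d\,\alpha_{1,1}T_1T_2\bigr),
\end{align*}
using the shifted-matrix notation $(\alpha_{n,m})_{i,j}=\alpha_{i+n,j+m}$ introduced before Theorem \ref{the_poi}. Requiring the coefficients of $T_2$ and $T_1T_2$ to equal $1$ gives two conditions:
\begin{align*}
\beta_{i,j}=b_{i,j}\alpha_{i,j+1},\qquad d_{i,j}\alpha_{i+1,j+1}=b_{i,j}\alpha_{i,j+1}.
\end{align*}
The first condition determines $\beta$ once $\alpha$ is known. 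The second is a first-order recursion in $i$ along each row. Its general solution is
\begin{align*}
\alpha_{i,j}=\gamma_j\prod_{k=0}^{i-1}\frac{b_{k,j-1}}{d_{k,j-1}},
\end{align*}
where $\gamma_j:=\alpha_{0,j}$ is an arbitrary nonzero sequence. (For $i<0$ the product is read as the inverse product, as usual.)

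Next I substitute this $\alpha$ into the remaining two coefficients. This gives $A=a\alpha/(b\alpha_{0,1})$ and $B=c\alpha_{1,0}/(b\alpha_{0,1})$. For $A$, the ratio $\alpha_{i,j}/\alpha_{i,j+1}$ telescopes to $(\gamma_j/\gamma_{j+1})\,\xi_{i,j}$, which gives $A_{i,j}=\frac{\gamma_j}{\gamma_{j+1}}\frac{a_{i,j}}{b_{i,j}}\xi_{i,j}$. For $B$, I write
\begin{align*}
\frac{\alpha_{i+1,j}}{b_{i,j}\alpha_{i,j+1}}=\frac{1}{b_{i,j}}\cdot\frac{\alpha_{i+1,j}}{\alpha_{i+1,j+1}}\cdot\frac{\alpha_{i+1,j+1}}{\alpha_{i,j+1}},
\end{align*}
and use the recursion $\alpha_{i+1,j+1}/\alpha_{i,j+1}=b_{i,j}/d_{i,j}$. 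This gives $B_{i,j}=\frac{\gamma_j}{\gamma_{j+1}}\frac{c_{i,j}}{d_{i,j}}\xi_{i+1,j}$. Taking $\gamma\equiv1$ reproduces \eqref{eq_AB}. The leftover freedom in $\gamma_j$ is exactly the residual action \eqref{AB_action}. Uniqueness up to this residual action holds because the recursion leaves no freedom other than $\gamma_j$, and $\beta$ is then forced by $\alpha$.

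The step I expect to be the main obstacle is checking that $\alpha$ and $\beta$ are genuinely $(N,M)$-quasi-periodic with a common monodromy, and hence that $A,B$ are $(N,M)$-periodic. In the $j$-direction this is easy: if $\gamma_{j+M}=\mu\gamma_j$, periodicity of $b,d$ gives $\alpha_{i,j+M}=\mu\alpha_{i,j}$, and $\beta$ inherits the same factor. In the $i$-direction, $\alpha_{i+N,j}/\alpha_{i,j}=P_{j-1}$ with $P_j:=\prod_{k=0}^{N-1}b_{k,j}/d_{k,j}$. This factor is independent of $i$ but may depend on $j$. Consequently $\xi_{i+N,j}=\xi_{i,j}\,P_{j-1}/P_j$, so $A$ and $B$ are a priori only quasi-periodic in $i$, with factor $P_{j-1}/P_j$.

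I would handle this in two steps. First, I would note that $P_{j-1}/P_j$ is a gauge invariant of $\mathcal{D}_0$: it is the ratio of the row products of the coefficients of $T_1T_2$ and $T_2$ over a period, and under $\beta^{-1}\cdot\alpha$ the $\alpha,\beta$ contributions telescope. I would then check whether the $T_1^N$-commutation and the Bloch structure force $P_j$ to be independent of $j$. If this cannot be forced, the fallback is to restrict the claim to the generic stratum where $P_j$ is independent of $j$. Alternatively, I would rescale the recursion row by row by the $N$-th root of $P_{j-1}$, inserting a factor $\lambda_j^{i}$ with $\lambda_j^N=P_{j-1}^{-1}$. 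The goal of that rescaling is to make $\alpha$ have the common monodromy $z_1$ required in $H\tilde{\times}H$. I would then verify that this only modifies the residual action by an $i$-independent factor, and that \eqref{eq_AB} holds after this normalization.
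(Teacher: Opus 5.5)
Your computation of $\alpha$, $\beta$, $A$, $B$ and of the residual $\gamma_j$-freedom matches the paper's. You have also correctly located the step the paper glosses over. Its proof asserts $\alpha_{i+N,j}=\alpha_{i,j}$ from \eqref{eq_alpha}, whereas in fact $\alpha_{i+N,j}=P_{j-1}\alpha_{i,j}$ and $\beta_{i+N,j}=P_j\beta_{i,j}$. The gap is that you leave this step open, and it cannot be closed: for $M\ge 2$ the statement is false as written.

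Your own invariance remark already shows this. Every operator of the form \eqref{expr_D} has $b=d=1$, hence $P_j\equiv 1$, while $P_{j-1}/P_j$ is unchanged by \eqref{eq_lr}. So no $\D_0$ with $P_j$ non-constant in $j$ is gauge equivalent to such a form, and for $M\ge 2$ these operators form an open dense set. Concretely, take $N=1$, $M=2$, $b\equiv 1$, $d_{i,0}=1$, $d_{i,1}=2$. Then $\xi_{i,0}=2^{-i}$, so $A_{i,0}$ in \eqref{eq_AB} is not even constant in $i$, and by the invariant no other gauge can repair this. A dimension count confirms it. The stabilizer of any $\D_0\in\NDDO(\J)$ consists of the constant pairs $\alpha=\beta$, so $\NDDO(\J)/H\tilde{\times}H$ has dimension $4NM-(2NM+1)=2NM-1$. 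Forms \eqref{expr_D} modulo \eqref{AB_action}, which freely rescales each $j$-slice by an arbitrary $\gamma_j/\gamma_{j+1}$, give only $2NM-M$. The missing $M-1$ parameters are exactly the ratios $P_{j-1}/P_j$.

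None of your fallbacks rescues the statement.
\begin{itemize}
\item Periodicity and Bloch theory impose nothing on $P_j$, which is an arbitrary $M$-periodic sequence. The top coefficient in $w_2$ of $\det\D(w_1,w_2)$ is $\pm\prod_{j}\bigl(\prod_k b_{k,j}-(-1)^N w_1\prod_k d_{k,j}\bigr)$. So the values $w_1=(-1)^NP_j$, over which $w_2\to\infty$ on $\Gamma_{\D}$, are spectral data that the gauge only rescales by the common factor $z_1$.
\item Inserting $\lambda_j^i$ into $\alpha$ does make its $i$-monodromy constant, but then $d_{i,j}\alpha_{i+1,j+1}=\lambda_{j+1}b_{i,j}\alpha_{i,j+1}$. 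Since $\beta$ is pinned down by the $T_2$-normalization, the coefficient of $T_1T_2$ becomes $\lambda_{j+1}$ rather than $1$; for even $N$ a real $\lambda_j$ need not even exist. You land on $A+T_2+BT_1+\lambda_{j+1}T_1T_2$, not on \eqref{expr_D}.
\item Only the restriction works, and it is not a generic stratum but a closed condition of codimension $M-1$. On that locus, $P_j\equiv P$, the paper's formulas are correct. There $(\alpha,\beta)\in H\tilde{\times}H$ has monodromy $z_1=P$ (not $z_1=1$ as the paper's proof claims), and $\xi_{i+N,j}=\xi_{i,j}$.
\end{itemize}
A correct version must therefore do one of two things. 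It can assume $P_j$ independent of $j$, which is automatic when $M=1$. Or it can enlarge the normal form by one extra coefficient on $T_1T_2$ per $j$-slice, recording the invariants $P_{j-1}/P_j$.
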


\begin{proof}
	Let $(\alpha,\,\beta)\in H\tilde{\times}H$  be two non-vanishing bi-infinite matrices with the same monodromy, 
	and we consider their action on the operator $\D_0$ 
	\begin{align}\label{ope}
		\D_0\rightarrow\beta^{-1}\D_0\alpha=
		\beta^{-1} a\alpha+\beta^{-1} b\alpha_2T_2+\beta^{-1} c\alpha_1T_1+\beta^{-1} d\alpha_{12}T_1T_2.
	\end{align}
	Suppose that the coefficients of $T_2$ and $T_1T_2$ are $1$, i.e.
	\begin{align*}
		\beta^{-1}b\alpha_2=1,\quad
		\beta^{-1} d\alpha_{12}=1.
	\end{align*}
	The solution for $\alpha$ and $\beta$ is
	\begin{align}\label{eq_alpha}
		\alpha_{i,j}=\alpha_{0,j}\prod_{k=0}^{i-1}\frac{b_{k,j-1}}{d_{k,j-1}},\quad
		\beta_{i,j}=b_{i,j}\alpha_{0,j+1}\prod_{k=0}^{i-1}\frac{b_{k,j}}{d_{k,j}},
	\end{align}
	where $\alpha_{0,j}$, $j\in\mathbb{Z}$ are undetermined nonzero constants.
	The remaining coefficients of $1$ and $T_1 $ in \eqref{ope} are 
	\begin{align}\label{eq_beta}
		\beta^{-1} a\alpha=\frac{\alpha_{0,j}}{\alpha_{0,j+1}}\frac{a_{i,j}}{b_{i,j}}\xi_{i,j},\quad
		\beta^{-1} c\alpha_1=\frac{\alpha_{0,j}}{\alpha_{0,j+1}}\frac{c_{i,j}}{d_{i,j}}\xi_{i+1,j},\quad
		\xi_{i,j}:=\prod_{k=0}^{i-1}\frac{b_{k,j-1}d_{k,j}}{d_{k,j-1}b_{k,j}}.
	\end{align}
	From \eqref{eq_alpha},
	we have $\alpha_{i+N,j}=\alpha_{i,j}$, $\alpha_{i,j+M}=\frac{\alpha_{0,j+M}}{\alpha_{0,j}}\alpha_{i,j}$ and $\beta_{i+N,j}=\beta_{i+N,j}$, $\beta_{i,j+M}=\frac{\alpha_{0,j+M+1}}{\alpha_{0,j+1}}\beta_{i,j}$.
	To ensure  $(\alpha,\beta)\in H\tilde{\times}H$, 
	we need the condition $\frac{\alpha_{0,j+M}}{\alpha_{0,j}}=\frac{\alpha_{0,j+M+1}}{\alpha_{0,j+1}}$.
	Equivalently, 
	we have $\alpha_{0,j}=\gamma_{j}$ is a quasi-periodic sequence, that is, $\gamma_{j+M}=\mu\gamma_{j}$ for a fixed $\mu\in\mathbb{R}^*$.
	The coefficients \eqref{eq_beta} reduces to \eqref{eq_AB} up to the action \eqref{AB_action}.
\end{proof}

\begin{proposition}
	The pentagram-type map \eqref{def_lap} takes the following form
	\begin{align}\label{eq_main}
		\tilde{A}_{i,j}=\frac{A_{i+1,j+1}-B_{i,j+1}}{A_{i+1,j}-B_{i,j}}A_{i+1,j},\quad
		\tilde{B}_{i,j}=\frac{A_{i+1,j+1}-B_{i,j+1}}{A_{i+1,j}-B_{i,j}}B_{i,j},
	\end{align}
	where $(A_{i,j},B_{i,j})$ and $(\tilde{A}_{i,j},\tilde{B}_{i,j})$ are unique up to the action \eqref{AB_action} for (different) non-vanishing $M$-quasi-periodic sequences.
\end{proposition}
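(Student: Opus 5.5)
The plan is to use the refactorization description of Theorem \ref{theorem_refac} together with the normal form of Proposition \ref{candi_D}, and to solve \eqref{eq_refac} explicitly in these coordinates. By Proposition \ref{candi_D}, we may take the representative $\D=A+T_2+BT_1+T_1T_2$. With $\J_+=\{(0,0),(0,1)\}$ and $\J_-=\{(1,0),(1,1)\}$ this splits as
\begin{align*}
\D_+=A+T_2,\qquad \D_-=BT_1+T_1T_2 .
\end{align*}
We then know from the preceding proposition that the map \eqref{def_lap} is $\tilde V=\D_+V$, where $\tilde\D=\tilde\D_++\tilde\D_-$ is obtained from \eqref{eq_refac}. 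So it suffices to find $\tilde\D_\pm$ and show that, after normalization, they again have the shape of \eqref{expr_D}.

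First, make the ansatz $\tilde\D_+=\tilde a+\tilde bT_2$ and $\tilde\D_-=\tilde cT_1+\tilde dT_1T_2$. Expand both sides of $\tilde\D_+\D_-=\tilde\D_-\D_+$, using the commutation rule $T_2 f=f_{0,1}T_2$ and the paper's subscript convention for shifted matrices. The left side is
\begin{align*}
\tilde aB\,T_1+\tilde a\,T_1T_2+\tilde bB_{0,1}\,T_1T_2+\tilde b\,T_1T_2^2 ,
\end{align*}
and the right side is
\begin{align*}
\tilde cA_{1,0}\,T_1+\tilde c\,T_1T_2+\tilde dA_{1,1}\,T_1T_2+\tilde d\,T_1T_2^2 .
\end{align*}
Comparing the coefficients of $T_1$, $T_1T_2$ and $T_1T_2^2$ gives three equations per lattice site ($(p+q-1)NM=3NM$ in total, as counted in Lemma \ref{lemma_unique_sol}). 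The $T_1T_2^2$ equation reads $\tilde b=\tilde d$. Lemma \ref{lemma_unique_sol} says the solution is unique only up to a common left scalar factor, so we use that freedom to set $\tilde b=\tilde d=1$. The result is then automatically already in the normal form \eqref{expr_D}.

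The remaining two equations form a $2\times 2$ linear system at each site:
\begin{align*}
\tilde aB=\tilde cA_{1,0},\qquad \tilde a-\tilde c=A_{1,1}-B_{0,1}.
\end{align*}
Eliminating $\tilde c=\tilde aB/A_{1,0}$ gives
\begin{align*}
\tilde a=\frac{A_{1,1}-B_{0,1}}{A_{1,0}-B}A_{1,0},\qquad \tilde c=\frac{A_{1,1}-B_{0,1}}{A_{1,0}-B}B,
\end{align*}
which is exactly \eqref{eq_main} once we write $\tilde A=\tilde a$ and $\tilde B=\tilde c$. The genericity condition $A_{i+1,j}\neq B_{i,j}$, together with non-vanishing of the numerator, is the Zariski open condition of Lemma \ref{lemma_unique_sol} in these coordinates.

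The step needing the most care is the ambiguity statement. Applying the residual action \eqref{AB_action} with a sequence $\gamma_j$ to $(A,B)$ multiplies $A_{i+1,j}/(A_{i+1,j}-B_{i,j})$ by nothing and the ratio $(A_{i+1,j+1}-B_{i,j+1})$ by $\gamma_{j+1}/\gamma_{j+2}$. Hence $(\tilde A,\tilde B)$ transforms by the residual action with the shifted sequence $\gamma_{j+1}$, which is again $M$-quasi-periodic with the same $\mu$. This needs to be checked directly from the formula.

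Second, the normalization $\tilde b=\tilde d=1$ fixes the left scalar of Lemma \ref{lemma_unique_sol} only up to the residual gauge. We must also confirm that the lift ambiguity in $\tilde V=\D_+V$ corresponds to an $H\tilde\times H$ element preserving the normal form, so that it too acts as in \eqref{AB_action}. That element is $\alpha=1$, $\beta=$ the left factor. I expect this bookkeeping of which quasi-periodic sequence appears on each side to be the main obstacle. I would handle it by tracking the solution \eqref{eq_alpha} of Proposition \ref{candi_D} with $b=d=1$: it forces $\alpha_{i,j}=\alpha_{0,j}$ and $\beta_{i,j}=\alpha_{0,j+1}$, which gives precisely the residual action with a possibly different sequence $\gamma$.
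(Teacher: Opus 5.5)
Your proposal is correct and is essentially the paper's proof in operator language. The paper writes the new lift as $\tilde V_{i,j}=\lambda_{i,j}(A_{i,j}V_{i,j}+V_{i,j+1})$, imposes the normal-form relation on it, and matches the coefficients of $V_{i+1,j},V_{i+1,j+1},V_{i+1,j+2}$; this is exactly your comparison of the $T_1,T_1T_2,T_1T_2^2$ coefficients in \eqref{eq_refac}, taken with $\tilde\D_+=(\tilde A+T_2)\lambda$ and $\tilde\D_-=(\tilde BT_1+T_1T_2)\lambda$.

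Where you fix $\lambda\equiv 1$ and normalize the left scalar, the paper keeps $\lambda$ and finds it independent of $i$. That is the same fact you extract from \eqref{eq_alpha} with $b=d=1$; note that a change of lift is the right factor $\alpha=\lambda^{-1}$, not $\alpha=1$. The paper then absorbs the factor $\frac{\lambda_{j+1}\gamma_{j+1}}{\lambda_j\gamma_{j+2}}$ into the residual action \eqref{AB_action}, which covers both your covariance check and your lift-ambiguity check.
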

\begin{proof}
	Let $\J = \{(0,0), (0,1), (1,0), (1,1)\}$.
	Suppose that  $(v_{i,j})\in\mathcal{P}_{N,M}^3(\J)$ is a generic twisted $\J$-corrugated $(N,M)$-hedron in $\mathbb{RP}^3$.
	According to Proposition \ref{pro_one-to-one},
	there is a partial difference operator $\D\in\NDDO(\J)$ corresponding to this polyhedron.
	According to Proposition \ref{candi_D}, $\D$ has the form \eqref{expr_D}.
	Let $V$ be a lift of $v$ to $\mathbb{R}^4$ such that $\D V=0$, that is
	\begin{align}\label{eq_recur}
		A_{i,j}V_{i,j}+V_{i,j+1}+B_{i,j}V_{i+1,j}+V_{i+1,j+1}=0,
	\end{align}
	where $A$, $B$ are unique up to the action \eqref{AB_action}.
	According to the definition of the pentagram-type map \eqref{def_lap},
	we have 
	\begin{align}\label{eq_evo}
		\tilde{V}_{i,j}=\lambda_{i,j}(A_{i,j}V_{i,j}+V_{i,j+1}),
	\end{align}
	where $\lambda_{i,j}$ is undetermined.
	The compatibility condition of \eqref{eq_recur} and \eqref{eq_evo} gives  
	\begin{align*}
		&\lambda_{i,j}\tilde{A}_{i,j}(B_{i,j}V_{i+1,j}+V_{i+1,j+1})+
		\lambda_{i,j+1}(B_{i,j+1}V_{i+1,j+1}+V_{i+1,j+2})\\
		=&
		\lambda_{i+1,j}\tilde{B}_{i,j}(A_{i+1,j}V_{i+1,j}+V_{i+1,j+1})
		+\lambda_{i+1,j+1}(A_{i+1,j+1}V_{i+1,j+1}+V_{i+1,j+2}).
	\end{align*}
	Since $V_{i+1,j}$, $V_{i+1,j+1}$ and $V_{i+1,j+2}$ are generically linearly independent,
	their coefficients give us
	\begin{align}\label{eq_aa}
		\tilde{A}_{i,j}=\frac{\lambda_{j+1}}{\lambda_{j}}\frac{A_{i+1,j+1}-B_{i,j+1}}{A_{i+1,j}-B_{i,j}}A_{i+1,j},\quad
		\tilde{B}_{i,j}=\frac{\lambda_{j+1}}{\lambda_{j}}\frac{A_{i+1,j+1}-B_{i,j+1}}{A_{i+1,j}-B_{i,j}}B_{i,j},
	\end{align}
	and $\lambda_{i,j}$ is proved to be independent of $i$.
	To ensure $(\tilde{A},\tilde{B})$ is $(N,M)$-periodic,
	we need the condition that $\lambda_j$ is $M$-quasi-periodic.
	If we consider the different coefficients $(\frac{\gamma_j}{\gamma_{j+1}}A_{i,j},\frac{\gamma_j}{\gamma_{j+1}}\gamma_jB_{i,j})$ in \eqref{eq_recur},
	then \eqref{eq_aa} becomes 
	\begin{align*}
		\tilde{A}_{i,j}=\frac{\lambda_{j+1}\gamma_{j+1}}{\lambda_{j}\gamma_{j+2}}\frac{A_{i+1,j+1}-B_{i,j+1}}{A_{i+1,j}-B_{i,j}}A_{i+1,j},\quad
		\tilde{B}_{i,j}=\frac{\lambda_{j+1}\gamma_{j+1}}{\lambda_{j}\gamma_{j+2}}\frac{A_{i+1,j+1}-B_{i,j+1}}{A_{i+1,j}-B_{i,j}}B_{i,j}.
	\end{align*}
	Since $\frac{\gamma_{j+1}}{\lambda_j}$ is also $M$-quasi-periodic,
	this proposition is proved.
\end{proof}

\begin{proposition}\label{pro_bracket}
In the coordinates $\{A_{i,j},B_{i,j}\}$,
the Poisson structure takes the following form:
\begin{align*}
\{A_{i,j},A_{i,j+1}\}=-\frac{1}{2}A_{i,j}A_{i,j+1},\quad
\{B_{i,j},B_{i,j+1}\}=\frac{1}{2}B_{i,j}B_{i,j+1},\\
\{A_{i+n,j},A_{i,j}\}=A_{i+n,j}A_{i,j},\quad
\{B_{i+n,j},B_{i,j}\}=B_{i+n,j}B_{i,j},\\
\{A_{i+n,j},A_{i,j\pm 1}\}=-\frac{1}{2}A_{i+n,j}A_{i,j\pm 1},\quad
\{B_{i+n,j},B_{i,j\pm 1}\}=-\frac{1}{2}B_{i+n,j}B_{i,j\pm 1},
\end{align*}
and 
\begin{align*}
\{A_{i,j},B_{i,j}\}=-A_{i,j}B_{i,j},\quad
\{A_{i,j\pm 1},B_{i,j}\}=\frac{1}{2}A_{i,j\pm 1}B_{i,j},\\
\{A_{i+n,j},B_{i,j}\}=A_{i+n,j}B_{i,j},\quad
\{A_{i+n,j},B_{i,j\pm 1}\}=-\frac{1}{2}A_{i+n,j}B_{i,j\pm 1},\\
\{B_{i+n,j},A_{i,j}\}=B_{i+n,j}A_{i,j},\quad
\{B_{i+n,j},A_{i,j\pm 1}\}=-\frac{1}{2}B_{i+n,j}A_{i,j\pm 1},
\end{align*}
where $i=1,\ldots,N$, $j=1,\ldots,M$, and $n=1,\ldots,N-i$.
\end{proposition}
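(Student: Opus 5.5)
The plan is to compute the bracket on the operator $\D=a+bT_2+cT_1+dT_1T_2$ using Theorem \ref{the_poi}, and then push it forward to the gauge-invariant coordinates $A_{i,j},B_{i,j}$ of Proposition \ref{candi_D}.

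\textbf{Step 1: brackets of the raw coefficients.} Note that $\J$ consists of lexicographically consecutive elements $(0,0)\prec(0,1)\prec(1,0)\prec(1,1)$. By Proposition \ref{pro_sub}(1), $\mathrm{I}\DO(\J)$ is therefore a Poisson submanifold, so the bracket restricts. Write $a=a^{0,0}$, $b=a^{0,1}$, $c=a^{1,0}$, $d=a^{1,1}$. I will specialize formula \eqref{Poisson_coordinate} to all pairs among $a_{p,q},b_{p,q},c_{p,q},d_{p,q}$, indexed over one period. The prefactor $a_{p,q}^{u+r-p,v+s-q}a_{r,s}^{k+p-r,l+q-s}$ is nonzero only when both superscripts lie in $\J$. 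This forces $(r-p,s-q)$ into a small window, so only finitely many neighbours interact. The result is a quadratic (log-canonical-type) bracket among the four coefficient arrays. For example, $\{a_{p,q},b_{p,q}\}=\tfrac12 a_{p,q}b_{p,q}$, which matches the reduction \eqref{poisson}. There are also nonlocal terms from the $\sgn_{p-r}$ factor, which couple all sites in the same column $q$ with different $p$. These are the source of the $n$-dependent brackets $\{A_{i+n,j},\cdot\}$ in the statement.

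\textbf{Step 2: gauge invariance.} By Proposition \ref{pro_sub}(3), the bracket is invariant under the $H\tilde\times H$ action. Hence functions invariant under the action form a Poisson subalgebra, and the bracket descends to the quotient. The $A_{i,j},B_{i,j}$ of \eqref{eq_AB} are invariant only up to the residual action \eqref{AB_action}. For the statement I will treat them as the specific functions \eqref{eq_AB} of $(a,b,c,d)$, i.e. fix the slice $\alpha_{0,j}=1$. I expect this residual ambiguity to make the brackets well defined only modulo the residual scaling. That is consistent with every listed bracket being homogeneous of the form (constant)$\cdot XY$.

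\textbf{Step 3: Leibniz computation.} All brackets in Step 1 are of the form $\{x,y\}=c_{xy}\,xy$, and $A,B$ are Laurent monomials in $a,b,c,d$. Therefore $\{\log A_{i,j},\log A_{i',j'}\}$ is a bilinear sum of the constants $c_{xy}$ weighted by the exponent vectors. Concretely:
\begin{itemize}
\item $\log A_{i,j}=\log a_{i,j}-\log b_{i,j}+\sum_{k<i}\bigl(\log b_{k,j-1}+\log d_{k,j}-\log d_{k,j-1}-\log b_{k,j}\bigr)$;
\item $\log B_{i,j}$ is analogous, built from $c,d$ and $\xi_{i+1,j}$.
\end{itemize}
I will compute the brackets in the listed order. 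First the same-site ones, $\{A_{i,j},B_{i,j}\}$. Then the vertical neighbours $j\pm1$. Then the horizontal ones $i+n$ with $n\ge1$.

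\textbf{Main obstacle.} The hardest step is the telescoping of the long products $\xi_{i,j}$. Many terms from the $\sgn$ factors in \eqref{Poisson_coordinate} must cancel, so that the result is independent of $n$ (e.g. exactly $A_{i+n,j}A_{i,j}$ with coefficient $1$) and vanishes for $|j-j'|\ge2$. I would first organize the contributions by column difference $q-s\in\{0,\pm1\}$, where only the $\delta_{p,r}\sgn_{q-s}$ and $\sgn_{p-r}$ terms survive. Then I would check that the sum over $k<i$ collapses to boundary terms at $k=i$ and $k=i+n$. As a sanity check, I would verify the $N=1$ or small-$N$ case directly before trusting the general cancellation.
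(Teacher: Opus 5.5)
\textbf{There is a genuine gap in Step~1: you compute with the wrong Poisson structure.} You restrict the Sklyanin bracket to $\mathrm{I}\DO(\J)$ for the single operator $\D=a+bT_2+cT_1+dT_1T_2$ via Proposition~\ref{pro_sub}(1). The paper's proof instead uses the product structure on pairs $(\D_+,\D_-)\in\NDDO(\J_+)\times\NDDO(\J_-)$. This is the structure for which the refactorization map of Theorem~\ref{theorem_refac} is Poisson. In it, $\D_+=a+bT_2$ carries $\pi$, and $\D_-=cT_1+dT_1T_2$ carries $-\pi$ (inversion is anti-Poisson, so $(\D_+,\D_-)\mapsto\D_-^{-1}\D_+$ is Poisson when $\D_-$ carries $-\pi$). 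Moreover $(a,b)$ Poisson-commute with $(c,d)$. The only nonzero brackets are then
\begin{align*}
\{a_{i,j},b_{i,j}\}&=\tfrac12 a_{i,j}b_{i,j}, & \{b_{i,j},a_{i,j+1}\}&=\tfrac12 b_{i,j}a_{i,j+1},\\
\{c_{i,j},d_{i,j}\}&=-\tfrac12 c_{i,j}d_{i,j}, & \{d_{i,j},c_{i,j+1}\}&=-\tfrac12 d_{i,j}c_{i,j+1}.
\end{align*}

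Your restricted bracket is genuinely different. In the lexicographic chain $(0,0)\prec(0,1)\prec(1,0)\prec(1,1)$, the points $(0,1)$ and $(1,0)$ are interior. Because of this, formula \eqref{Poisson_coordinate} produces:
\begin{itemize}
\item non-monomial brackets, e.g.\ $\{b_{p,q},c_{p,q+1}\}=d_{p,q}a_{p,q+1}$ and $\{b_{p,q},c_{p-1,q}\}=-a_{p,q}d_{p-1,q}$ (the analogue of $\{b_q,b_{q+1}\}=c_qa_{q+1}$ for three-term operators in \eqref{poisson});
\item cross terms such as $\{a_{p,q},c_{p,q}\}=\tfrac12 a_{p,q}c_{p,q}$;
\item $\{c_{p,q},d_{p,q}\}=+\tfrac12 c_{p,q}d_{p,q}$, with the opposite sign to the product structure.
\end{itemize}
So the premise of Step~3, that every raw bracket has the form $c_{xy}xy$, is false. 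The outcome also contradicts the statement. Using $d_{i,j}a_{i,j+1}/(b_{i,j}c_{i,j+1})=A_{i,j+1}/B_{i,j+1}$, your bracket gives
\[
\{B_{i,j},B_{i,j+1}\}=-\tfrac12 B_{i,j}B_{i,j+1}-A_{i,j+1}B_{i,j}
\]
instead of $\tfrac12 B_{i,j}B_{i,j+1}$. Some entries, e.g.\ $\{A_{i,j},A_{i,j+1}\}$ and $\{A_{i,j},B_{i,j}\}$, happen to agree for both structures, so spot checks on those would not expose the problem.

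\textbf{Two smaller points.}
\begin{itemize}
\item Your expectation of nonlocal raw brackets from the $\sgn_{p-r}$ factor is mistaken. The prefactor in \eqref{Poisson_coordinate} vanishes unless both superscripts lie in $\J$, which forces $|p-r|\le 1$ and $|q-s|\le 1$. All nonlocality in the statement comes from the products $\xi_{i,j}$.
\item With the correct product structure, Step~3 is short rather than a delicate telescoping. $a_{i,j}$ occurs only in $A_{i,j}$, and $c_{i,j}$ only in $B_{i,j}$. They bracket only with the $b$'s, respectively $d$'s, that have the same first index. Hence each $\xi$ contributes only through its factor $k=i$. For example, $\{\log A_{i+n,j},\log A_{i,j}\}=\{\log b_{i,j-1}-\log b_{i,j},\log a_{i,j}\}=1$. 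In $\{\log B_{i,j},\log B_{i,j+1}\}$, the only surviving pairing is $\log c_{i,j}$ against the term $-\log d_{i,j}$ of $\log B_{i,j+1}$, which gives $\tfrac12$. This is the paper's route.
\end{itemize}
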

\begin{proof}
The Poisson bracket of $\D$ is the product bracket corresponding to the decomposition $\D=\D_++\D_-$,
where $\D_+=a+bT_2,$ $\D_-=cT_1+dT_1T_2.$
The bracket on $\D_+$ is defined using \eqref{Poisson_coordinate},
while the $\D_-$ part is endowed with the negative of that bracket.
Explicitly,
we have
\begin{align*}
	\{a_{i,j},b_{i,j}\}=\frac{1}{2}a_{i,j}b_{i,j},\quad
	\{b_{i,j},a_{i,j+1}\}=\frac{1}{2}b_{i,j}a_{i,j+1},
\end{align*}
and
\begin{align*}
	\{c_{i,j},d_{i,j}\}=-\frac{1}{2}c_{i,j}d_{i,j},\quad
	\{d_{i,j},c_{i,j+1}\}=-\frac{1}{2}d_{i,j}c_{i,j+1}.
\end{align*}
The brackets of functions \eqref{eq_AB} can be computed using these formulas by a straightforward calculation.
Here the coordinates $\{A_{i,j}, B_{i,j}\}$ are considered up to the action \eqref{AB_action},
which preserves the Poisson brackets.
Besides, the brackets are invariant under the map \eqref{eq_main},
which can be verified using a computer algebra system.
\end{proof}

\subsection{Relation to the Y-system}
Recall that given four coplanar vectors $X,Y,Z,W$ such that
\begin{align*}
	Y=\lambda_1X+\lambda_2Z,\quad
	W=\mu_1 X+\mu_2 Z,
\end{align*}
where $\lambda_i$, $\mu_i$, $i=1,2$ are constants,
the cross-ratio of the lines spanned by these vectors is given by
\begin{align*}
	[X,Y,Z,W]=\frac{\lambda_2\mu_1}{\lambda_1\mu_2}.
\end{align*}
If we introduce the Y-variable:
\begin{align*}
y_{i,j}=-[v_{i+1,j},\tilde{v}_{i,j},v_{i+1,j+1},\tilde{v}_{i+1,j}]=-\frac{A_{i+1,j}}{B_{i,j}},
\end{align*}
then the equation \eqref{eq_main} turns into the following Y-system 
\begin{align}\label{eq_y}
y_{i+2,j+1}\tilde{\tilde{y}}_{i+1,j}=\frac{(1+\tilde{y}_{i+1,j})(1+\tilde{y}_{i+2,j+1})}{\left(1+\tilde{y}_{i+2,j}^{-1}\right)\left(1+\tilde{y}_{i+1,j+1}^{-1}\right)},
\end{align}
where   $\tilde{\tilde{y}}$ denotes the variable $y$ after two consecutive iterations of the pentagram-type map \eqref{def_lap}.
It is worth noting that while the coordinates $A_{i,j}$ and $B_{i,j}$ are defined only up to the residual scaling action \eqref{AB_action}, 
the $y$-variables are uniquely determined. 
This uniqueness stems from their geometric definition as cross-ratios, which are projectively invariant.

\section{Conclusion and discussions}\label{sec_con}

In this paper, we introduced a family of pentagram-type maps on twisted $\J$-corrugated $(N,M)$-hedra in $\mathbb{RP}^3$. By establishing a canonical correspondence between the projective equivalence classes of these polyhedra and the spectral data of doubly periodic partial difference operators modulo the gauge actions, we  proved their integrability. 
We demonstrated that the geometric dynamics manifest exactly as refactorization maps on the Poisson-Lie group of pseudo partial difference operators. This algebraic identification naturally yields an explicit Lax representation, an $r$-matrix induced Sklyanin Poisson bracket, and we also find the geometric evolution can be treated as a canonical Y-system.

Our framework suggests several natural directions for future research.
The generalized pentagram maps related to Q-nets reductions have been  studied in \cite{q_net2026}, 
so it is an interesting question whether the  integrable dynamics of Q-nets can be governed by the refactorization of partial difference operators.

Furthermore, a promising direction is to connect our algebraic refactorization maps with the higher-dimensional transformation theory of  Q-nets developed by Doliwa, Santini, and Ma\~nas \cite{doliwa2000transformations}. 
Their work demonstrates that the fundamental transformation (algebraically equivalent to the binary Darboux transformation) acts on a quadrilateral strip, effectively generating an additional lattice dimension. An open challenge is to embed the refactorization dynamics $\tilde{\mathcal{D}}_+ \mathcal{D}_- = \tilde{\mathcal{D}}_- \mathcal{D}_+$ into this higher-dimensional sequence of  transformations, unifying the Poisson-Lie algebraic framework with coordinate-free discrete geometry.

Given that the coordinate evolution \eqref{eq_y} translates cleanly into a canonical Y-system, investigating the underlying cluster algebraic structures and their potential links to dimer models on doubly periodic graphs is  desirable.
	
Exploring the  continuum limits of these 2D discrete pentagram-type maps could reveal deep connections to classical integrable PDEs, such as variations of the 2D Toda lattice or the Kadomtsev-Petviashvili (KP) hierarchy.

\section*{Acknowledgement}
This work was supported by  the National Natural Science Foundation of China (Grant Nos.  12571265, 12201325, 12235007),
and Zhejiang Provincial Natural Science Foundation
of China (Grant No. LMS26A010008).

\end{document}